\pgfplotsset{compat=1.5}
\renewcommand{\tilde}{\widetilde}
\newcommand{\score}{\textsc{Score}}
\newcommand{\T}{\mathbf{T}}
\renewcommand{\v}{\mathbf{v}}
\newcommand{\OL}{\mathsf{OL}}
\def \coreset    {\mdef{\textsc{Coreset}}}
\def \mspace    {\mdef{\mathsf{m_{space}}}}
\pgfplotsset{compat=1.5}
\newtheorem{theorem}{Theorem}[section]
\newtheorem{corollary}[theorem]{Corollary}
\newtheorem{lemma}[theorem]{Lemma}
\newtheorem{definition}[theorem]{Definition}
\newtheorem{framework}[theorem]{Framework}
\newcommand{\paren}[1]{\left( {#1} \right)}
\newenvironment{proofof}[1]{\begin{trivlist} \item {\bf Proof
#1:~~}}
  {\qed\end{trivlist}}
\newcommand{\namedref}[2]{\hyperref[#2]{#1~\ref*{#2}}}
\newcommand{\thmlab}[1]{\label{thm:#1}}
\newcommand{\thmref}[1]{\namedref{Theorem}{thm:#1}}
\newcommand{\lemlab}[1]{\label{lem:#1}}
\newcommand{\lemref}[1]{\namedref{Lemma}{lem:#1}}
\newcommand{\corlab}[1]{\label{cor:#1}}
\newcommand{\corref}[1]{\namedref{Corollary}{cor:#1}}
\newcommand{\seclab}[1]{\label{sec:#1}}
\newcommand{\secref}[1]{\namedref{Section}{sec:#1}}
\newcommand{\applab}[1]{\label{app:#1}}
\newcommand{\appref}[1]{\namedref{Appendix}{app:#1}}
\newcommand{\figlab}[1]{\label{fig:#1}}
\newcommand{\figref}[1]{\namedref{Figure}{fig:#1}}
\newcommand{\alglab}[1]{\label{alg:#1}}
\renewcommand{\algref}[1]{\namedref{Algorithm}{alg:#1}}
\newcommand{\deflab}[1]{\label{def:#1}}
\newcommand{\defref}[1]{\namedref{Definition}{def:#1}}
\newcommand{\frameref}[1]{\namedref{Framework}{frame:#1}}
\newcommand{\framelab}[1]{\label{frame:#1}}
\newenvironment{remindertheorem}[1]{\medskip\noindent {\bf Reminder of  #1. \newline}\em}{}
\newcommand{\COMMENTED}[1]{{}}
\newcommand{\PPr}[1]{\ensuremath{\mathbf{Pr}\left[#1\right]}}
\newcommand{\onlinepcp}{\textsc{OnlinePCP}\,}
\newcommand{\onlinelp}{\textsc{OnlineL1}\,}
\newcommand{\onlinerss}{\textsc{OnlineRSS}\,}
\newcommand{\onlinebss}{\textsc{OnlineBSS}\,}
\renewcommand{\a}{\mathbf{a}}
\renewcommand{\b}{\mathbf{b}}
\renewcommand{\r}{\mathbf{r}}
\renewcommand{\t}{\mathbf{t}}
\newcommand{\m}{\mathbf{m}}
\newcommand{\x}{\mathbf{x}}
\newcommand{\e}{\mathbf{e}}
\newcommand{\y}{\mathbf{y}}
\newcommand{\A}{\mathbf{A}}
\newcommand{\B}{\mathbf{B}}
\newcommand{\C}{\mathbf{C}}
\newcommand{\I}{\mathbb{I}}
\newcommand{\M}{\mathbf{M}}
\newcommand{\R}{\mathbf{R}}
\renewcommand{\S}{\mathbf{S}}
\renewcommand{\P}{\mathbf{P}}
\newcommand{\U}{\mathbf{U}}
\newcommand{\V}{\mathbf{V}}
\renewcommand{\u}{\mathbf{u}}
\newcommand{\W}{\mathbf{W}}
\newcommand{\X}{\mathbf{X}}
\newcommand{\Y}{\mathbf{Y}}
\newcommand{\Z}{\mathbf{Z}}
\newcommand{\barW}{\mathbf{\bar{W}}}
\newcommand{\barB}{\mathbf{\bar{B}}}
\newcommand{\s}{\textbf{s}}
\newcommand{\z}{\textbf{z}}
\newcommand{\norm}[1]{\left\lVert#1\right\rVert}
\newcommand{\EEx}[1]{\ensuremath{\mathbb{E}\left[#1\right]}}
\renewcommand{\O}[1]{\ensuremath{\mathcal{O}\left(#1\right)}}
\newcommand{\tO}[1]{\ensuremath{\tilde{\mathcal{O}}\left(#1\right)}}
\newcommand{\tT}[1]{\ensuremath{\tilde{\Theta}\left(#1\right)}}
\newcommand{\eps}{\varepsilon}
\renewcommand{\d}{\delta}
\newcommand{\mdef}[1]{{\ensuremath{#1}}\xspace}  
\DeclareMathOperator*{\argmin}{argmin}
\DeclareMathOperator*{\argmax}{argmax}
\DeclareMathOperator*{\polylog}{polylog}
\DeclareMathOperator*{\poly}{poly}
\DeclareMathOperator*{\rank}{rank}
\DeclareMathOperator*{\nnz}{nnz}
\newcommand{\superscript}[1]{\ensuremath{^{\mbox{\tiny{\textit{#1}}}}}\xspace}
\def \th {\superscript{th}}     
\def \etal{\,{\it et~al.}\,}
\newcommand{\ignore}[1]{}
\newif\ifnotes\notestrue 
\newcommand{\samson}[1]{\textcolor{purple}{ \textbf{Samson: }{#1}} \marginpar{\tiny\bf
             \begin{minipage}[t]{0.5in}
               \raggedright S:
            \end{minipage}}} 
\newcommand{\samson}[1]{}
\renewcommand*{\@fnsymbol}[1]{\textcolor{mahogany}{\ensuremath{\ifcase#1\or *\or \dagger\or \ddagger\or
 \mathsection\or \mathparagraph\or \| \or \triangledown \or \bowtie \or **\or \dagger\dagger
   \or \ddagger\ddagger \else\@ctrerr\fi}}}
\providecommand{\email}[1]{\href{mailto:#1}{\nolinkurl{#1}\xspace}}
\definecolor{mahogany}{rgb}{0.75, 0.25, 0.0}
\definecolor{forestgreen}{rgb}{0.13, 0.55, 0.13}
\begin{document}
\begin{titlepage}
\title{Near Optimal Linear Algebra in the Online and \\Sliding Window Models\thanks{Considerably strengthens and subsumes earlier versions appearing on arXiv.}
}
\author{
Vladimir Braverman\thanks{Johns Hopkins University. 
E-mail: \email{vova@cs.jhu.edu}.}\\
\and
Petros Drineas\thanks{Purdue University. 
E-mail: \email{pdrineas@purdue.edu}}\\
\and
Cameron Musco\thanks{University of Massachusetts Amherst.
E-mail: \email{cmusco@cs.umass.edu}}\\
\and
Christopher Musco\thanks{New York University. 
E-mail: \email{cmusco@nyu.edu}}\\
\and
Jalaj Upadhyay\thanks{Apple, Inc. 
E-mail: \email{jalaj.kumar.upadhyay@gmail.com}}\\
\and
David P. Woodruff\thanks{Carnegie Mellon University. 
E-mail: \email{dwoodruf@cs.cmu.edu}}\\
\and
Samson Zhou\thanks{Carnegie Mellon University. 
E-mail: \email{samsonzhou@gmail.com}}
}

\date{}

\maketitle
\begin{abstract}
We initiate the study of numerical linear algebra in the sliding window model, where only the most recent $W$ updates in a stream form the underlying data set. 
Although many existing algorithms in the sliding window model use or borrow elements from the smooth histogram framework (Braverman and Ostrovsky, FOCS 2007), we show that many interesting linear-algebraic problems, including spectral and vector induced matrix norms, generalized regression, and low-rank approximation, are not amenable to this approach in the row-arrival model. 
To overcome this challenge, we first introduce a unified row-sampling based framework that gives \emph{randomized} algorithms for spectral approximation, low-rank approximation/projection-cost preservation, and $\ell_1$-subspace embeddings in the sliding window model, which often use nearly optimal space and achieve nearly input sparsity runtime. 
Our algorithms are based on ``reverse online'' versions of offline sampling distributions such as (ridge) leverage scores, $\ell_1$ sensitivities, and Lewis weights to quantify both the importance and the recency of a row; our structural results on these distributions may be of independent interest for future algorithmic design. 

Although our techniques initially address numerical linear algebra in the sliding window model, our row-sampling framework rather surprisingly implies connections to the well-studied online model; our structural results also give the first sample optimal (up to lower order terms) online algorithm for low-rank approximation/projection-cost preservation. 
Using this powerful primitive, we give online algorithms for column/row subset selection and principal component analysis that resolves the main open question of Bhaskara~\etal\,(FOCS 2019). 
We also give the first online algorithm for $\ell_1$-subspace embeddings. 
We further formalize the connection between the online model and the sliding window model by introducing an \emph{additional} unified framework for \emph{deterministic} algorithms using a merge and reduce paradigm and the concept of online coresets, which we define as a weighted subset of rows of the input matrix that can be used to compute a good approximation to some given function on all of its prefixes. 
Our sampling based algorithms in the row-arrival online model yield online coresets, giving deterministic algorithms for spectral approximation, low-rank approximation/projection-cost preservation, and $\ell_1$-subspace embeddings in the sliding window model that use nearly optimal space. 
\end{abstract}
\setcounter{page}{0}
\end{titlepage}

\section{Introduction}
The advent of big data has reinforced efforts to design and analyze algorithms in the \emph{streaming model}, where data arrives sequentially, can be observed in a small number of passes (ideally once), and the proposed algorithms are allowed to use space that is sublinear in the size of the input. 
For example in a typical e-commerce setup, the entries of a row represent the number of each item purchased by a customer in a transaction. 
As the transaction is completed, the advertiser receives an entire row of information as an update, which corresponds to the \emph{row-arrival model}. 
Then the underlying covariance matrix summarizes information about which items tend to be purchased together, while low-rank approximation identifies a representative subset of transactions. 

However, the streaming model does not fully address settings where the data is time-sensitive; the advertiser is not interested in the outdated behavior of customers.   
Thus one scenario that is not well-represented by the streaming model is when recent data is considered more accurate and important than data that arrived prior to a certain time window, as in applications such as network monitoring~\cite{CormodeM05,CormodeG08,Cormode13}, event detection in social media~\cite{OsborneEtAl2014}, and data summarization~\cite{ChenNZ16,EpastoLVZ17}. 
To model such settings, Datar\etal~\cite{DatarGIM02} introduced the \emph{sliding window model}, which is parametrized by the size $W$ of the window that represents the size of the \emph{active data} that we want to analyze, in contrast to the so-called ``expired data''. 
The objective is to compute or approximate statistics only on the active data using memory that is sublinear in the window size $W$. 

The sliding window model is more appropriate than the unbounded streaming model in a number of applications~\cite{BabcockBDMW02,MankuM12,PapapetrouGD15,WeiLLSDW16}. 
For example in large scale social media analysis, each row in a matrix can correspond to some online document, such as the content of a Twitter post, and given some corresponding time information. 
Although a streaming algorithm can analyze the data starting from a certain time, analysis with a recent time frame, e.g., the most recent week or month, could provide much more attractive information to advertisers or content providers. 
Similarly in the task of data summarization, the underlying data set is a matrix whose rows correspond to a number of subjects, while the columns correspond to a number of features.
Information on each subject arrives sequentially and the task is to select a small number of representative subjects, which is usually done through some kind of PCA~\cite{PapadimitriouY06,QahtanAWZ15}. 
However, if the behavior of the subjects has recently and indefinitely changed, we would like the summary to only be representative of the updated behavior, rather than the outdated information. 

Another time-sensitive scenario that is not well-represented by the streaming model is when irreversible decisions must be made upon the arrival of each update in the stream, which enables further actions downstream, such as in scheduling, facility location, and data structures. 
The goal of the \emph{online model} is to address such settings by requiring immediate and permanent actions on each element of the stream as it arrives, while still remaining competitive with an optimal offline solution that has full knowledge of the entire input. 
We specifically study the case where the online model must also use space sublinear in the size of the input, though this restriction is not always enforced across algorithms in the online model for other problems.  
In the context of online PCA, an algorithm receives a stream of input vectors and must immediately project each input vector into a lower dimension space of its choice. 
The projected vector can then be used as input to some downstream rotationally invariant algorithm, such as classification, clustering, or regression, which would run more efficiently due to the lower dimensional input. 
Moreover, PCA serves as a popular preprocessing step because it often actually \emph{improves} the quality of the solution by removing isotropic noise~\cite{BoutsidisGKL15} from the data, so that in applications such as clustering, the denoised projection can perform better than the original input. 
The online model has also been extensively used in a number of other applications, such as learning~\cite{BlumKRW03,HazanK16,BalcanDV18}, (prophet) secretary problems~\cite{Rubinstein16,EsfandiariHLM17,EhsaniHKS18}, ad allocation~\cite{NaorW18}, and a variety of graph applications~\cite{CohenW18,GamlathKMSW19,CohenPW19}. 

Generally, the sliding window model and the online model do not seem related, resulting in a different set of techniques being developed for each problem and each setting. 
Surprisingly, our results exhibit a seemingly unexplored and interesting connection between the sliding window model and the online model. 
Our observation is that an online algorithm should be correct on all prefixes of the input, in case the stream terminates at that prefix; on the other hand, a sliding window algorithm should be correct on all suffixes of the input, in case the previous elements expire leaving only the suffix (and perhaps a bunch of ``dummy'' elements). 
Then can we gain something by viewing each update to a sliding window algorithm as an update to an online algorithm \emph{in reverse}? 
At first glance, the answer might seem to be no; we cannot simulate an online algorithm with the stream in reverse order because it would have access to the entire stream whereas a sliding window algorithm only maintains a sketch of the previous elements upon each update. 
However, it turns out that in the row-arrival model, a sketch of the previous elements often suffices to approximately simulate the entire stream input to the online algorithm. 
Indeed, we show that any row-sampling based online algorithm for the problems of spectral approximation, low-rank approximation/projection-cost preservation, and $\ell_1$-subspace embedding automatically implies a corresponding deterministic sliding window algorithm for the problem! 

Formally, we study the following numerical linear algebraic problems in the row-arrival online and sliding window models:
\paragraph{Spectral Approximation.}
Given a matrix $\A\in\mathbb{R}^{n\times d}$ and an approximation parameter $\eps>0$, we wish to find a matrix $\M\in\mathbb{R}^{m\times d}$ with $m\ll n$ that is a spectral approximation (or interchangeably, an $\ell_2$-subspace embedding) of $\A$. 
That is, with high probability, our output matrix $\M$ should satisfy $(1-\eps)\norm{\A\x}_2\le\norm{\M\x}_2\le(1+\eps)\norm{\A\x}_2$ for all $\x\in\mathbb{R}^d$. 
Equivalently, we require $(1-\eps)\A^\top\A\preceq\M^\top\M\preceq(1+\eps)\A^\top\A$. 

\paragraph{Low-Rank Approximation/Projection-Cost Preservation.}
In the low-rank approximation problem, we are given a matrix $\A\in\mathbb{R}^{n\times d}$, a rank parameter $k>0$, and an approximation parameter $\eps>0$, and we wish to find a matrix $\M\in\mathbb{R}^{m\times d}$ with $m\ll n$ such that $(1-\eps)\norm{\A-\A_{(k)}}_F^2\le\norm{\M-\M_{(k)}}_F^2\le(1+\eps)\norm{\A-\A_{(k)}}_F^2$, where $\A_{(k)}$ for a matrix $\A$ represents the best rank $k$ approximation to $\A$. 
A stronger notion is a projection-cost preservation\footnote{Rank $k$ projection-cost preservation was originally formulated as an additive-multiplicative guarantee by~\cite{CohenEMMP15} but subsequent literature uses the purely multiplicative form as in~\defref{def:pcp}.}:
\begin{definition}[Rank $k$ Projection-Cost Preservation~\cite{CohenMM17}]
\deflab{def:pcp}
For $m<n$, a matrix $\M\in\mathbb{R}^{m\times d}$ of rescaled rows of $\A\in\mathbb{R}^{n\times d}$ is a $(1+\eps)$ projection-cost preservation if, for all rank $k$ orthogonal projection matrices $\P\in\mathbb{R}^{d\times d}$,
\[(1-\eps)\norm{\A-\A\P}_F^2\le\norm{\M-\M\P}_F^2\le(1+\eps)\norm{\A-\A\P}_F^2.\]
\end{definition}
\noindent
Note if $\M$ is a projection-cost preservation of $\A$, then its best low-rank approximation can be used to find a projection matrix that gives an approximation of the best low-rank approximation to $\A$. 

\paragraph{$\ell_1$-Subspace Embedding.}
The $\ell_1$-subspace embedding problem has similar demands to the spectral approximation problem, but the structural differences between the $\ell_1$ and $\ell_2$ norms require vastly different techniques. 
Given a matrix $\A\in\mathbb{R}^{n\times d}$ and an approximation parameter $\eps>0$, we wish to find a matrix $\M\in\mathbb{R}^{m\times d}$ with $m\ll n$ so that, with high probability, we have $(1-\eps)\norm{\A\x}_1\le\norm{\M\x}_1\le(1+\eps)\norm{\A\x}_1$ for all $\x\in\mathbb{R}^d$. 

\paragraph{Row/Column Subset Selection.}
The row/column subset selection problems are symmetric, depending on whether the arrival model is rows or columns; we address the row subset selection problem in the row-arrival model. 
Given a matrix $\A\in\mathbb{R}^{n\times d}$, a rank parameter $k>0$, and an approximation parameter $\eps>0$, the goal is to select $k$ rows of $\A$ to form a matrix $\M$ to minimize $\norm{\A-\A\M^\dagger\M}_F$.  
Since the matrix $\M$ has rank at most $k$, then $\norm{\A-\A\M^\dagger\M}_F\ge\norm{\A-\A_{(k)}}_F$, but we would ideally like to obtain some guarantee for $\norm{\A-\A\M^\dagger\M}_F$ relative to $\norm{\A-\A_{(k)}}_F$, where $\A_{(k)}$ is the best rank $k$ approximation to $\A$. 

\paragraph{Principal Component Analysis.}
In the (online) PCA problem, rows of the matrix $\A\in\mathbb{R}^{n\times d}$ arrive sequentially in a data stream and after each row $\a_i$ arrives, the goal is to immediately output a row $\m_i\in\mathbb{R}^{m}$ with $m\ll d$ such that at the end of the stream, there exists a low-rank matrix $\X\in\mathbb{R}^{m\times d}$ with
\[\norm{\A-\M\X}_F^2\le(1+\eps)\norm{\A-\A_{(k)}}_F^2,\]
where $m$ should be minimal and $\A_{(k)}$ is the best rank $k$ approximation to $\A$. 
Here the matrix $\M=\m_1\circ\ldots\circ\m_n$ is formed by the concatenation of the rows that we output at each time.

\subsection{Our Contributions}
We initiate and perform a comprehensive study for both randomized and deterministic algorithms in the sliding window model. 
We first present a randomized row sampling framework for spectral approximation, low-rank approximation/projection-cost preservation, and $\ell_1$-subspace embeddings in the sliding window model. Most of our results are space or time optimal, up to lower order terms.  
Our sliding window structural results imply structural results for the online setting, which we use to give algorithms for row/column subset selection, PCA, projection-cost preservation, and subspace embeddings in the online model. 
Our online algorithms are simple and intuitive, yet they either are novel for the particular problem or improve upon the state-of-the-art, e.g., Bhaskara~\etal (FOCS 2019)~\cite{BhaskaraLVZ19}. 
Finally, we formalize a surprising connection between online algorithms and sliding window algorithms by describing a unified framework for deterministic algorithms in the sliding window model based on the merge-and-reduce paradigm and the concept of online coresets, which are provably generated by online algorithms. 
We now describe our results in greater detail.

\paragraph{Row Sampling Framework for the Sliding Window Model.}
One may ask whether existing algorithms in the sliding window model can be generalized to problems for numerical linear algebra. 
For example, it is known that elementary linear-algebraic problems, such as estimating the Frobenius norm, can be addressed in the sliding window model using the smooth histogram framework~\cite{BravermanO07} (we refer to \appref{sec:smooth-background} for general background on the smooth histogram framework). 
We show that it is not the case in general. 
In \appref{sec:counters}, we give counterexamples showing that various linear-algebraic functions, including the spectral norm, vector induced matrix norms, generalized regression, and low-rank approximation, are not smooth according to the definitions of~\cite{BravermanO07} and therefore cannot be used in the smooth histogram framework. 
This motivates the need for new frameworks for problems of linear algebra in the sliding window model. 
We first give a row sampling based framework for space and runtime efficient randomized algorithms for numerical linear algebra in the sliding window model. 
\begin{framework}[Row Sampling Framework for the Sliding Window Model]
\framelab{frame:random}
There exists a row sampling based framework in the sliding window model that upon the arrival of each new row of the stream with condition number $\kappa$ chooses whether to keep or discard each previously stored row, according to some predefined probability distribution for each problem. 
Using the appropriate probability distribution, we obtain for any approximation parameter $\eps>0$:
\begin{enumerate}
\item
A randomized algorithm for spectral approximation in the sliding window model that with high probability, outputs a matrix $\M$ that is a subset of (rescaled) rows of an input matrix $\A\in\mathbb{R}^{W\times d}$ such that $(1-\eps)\A^\top\A \preceq \M^\top\M\preceq (1+\eps)\A^\top\A$, while storing $\O{\frac{d}{\eps^2}\log n\log\kappa}$ rows at any time and using nearly input sparsity time. 
(See \thmref{thm:sw:spectral}.) 
\item
A randomized algorithm for low-rank approximation/projection-cost preservation in the sliding window model that with high probability, outputs a matrix $\M$ that is a subset of (rescaled) rows of an input matrix $\A\in\mathbb{R}^{W\times d}$ such that for all rank $k$ orthogonal projection matrices $\P\in\mathbb{R}^{d\times d}$,
\[(1-\eps)\norm{\A-\A\P}_F^2\le\norm{\M-\M\P}_F^2\le(1+\eps)\norm{\A-\A\P}_F^2,\]
while storing $\O{\frac{k}{\eps^2}\log n\log^2\kappa}$ rows at any time and using nearly input sparsity time. 
(See \thmref{thm:sw:pcp}.)
\item
A randomized algorithm for $\ell_1$-subspace embeddings in the sliding window model that with high probability, outputs a matrix $\M$ that is a subset of (rescaled) rows of an input matrix $\A\in\mathbb{R}^{W\times d}$ such that $(1-\eps)\norm{\A\x}_1\le\norm{\M\x}_1\le(1+\eps)\norm{\A\x}_1$ for all $\x\in\mathbb{R}^d$, while storing $\O{\frac{d^2}{\eps^2}\log^2 n\log(\kappa' nd)}$ rows at any time, where $\kappa'$ is the subset condition number of $\A$. 
(See \thmref{thm:sw:lp}.) 
\end{enumerate}
\end{framework}
Here we say the stream has condition number $\kappa$ if the ratio of the largest to smallest nonzero singular values of any matrix formed by consecutive rows of the stream is at most $\kappa$ and we say the matrix $\A$  has subset condition number $\kappa'$ if the largest condition number of any matrix formed from a subset of rows of $\A$ has condition number $\kappa'$. 

We further show that for low-rank approximation/projection-cost preservation, we can further improve the polylogarithmic factors from $\O{\frac{k}{\eps^2}\log n\log^2\kappa}$ rows stored at any time to $\O{\frac{k}{\eps^2}\log^2 n}$, under the assumption that the entries of the underlying matrix are integers with magnitude at most $\poly(n)$ even though $\log\kappa$ can be as large as $\O{d\log n}$ with these assumptions. 
To the best of our knowledge, not only are our contributions in \frameref{frame:random} the first such algorithms for these problems in the sliding window model, but also \thmref{thm:sw:spectral} and \thmref{thm:sw:pcp} are both space and runtime optimal up to lower order terms, even compared to row sampling algorithms in the offline setting for most reasonable regime of parameters~\cite{AndoniCKQWZ16, DeshpandeV06}. 

\paragraph{Numerical Linear Algebra in the Online Model.}
An important step in the analysis of our row sampling framework for numerical linear algebra in the sliding window model is bounding the sum of the sampling probabilities for each row. 
In particular, we provide a tight bound on the sum of the online ridge leverage scores that was previously unexplored. 
We show that our bounds along with the paradigm of row sampling with respect to online ridge leverage scores offer simple online algorithms that improve upon the state-of-the-art across broad applications. 

\begin{theorem}[Online Rank $k$ Projection-Cost Preservation]
Given parameters $\eps>0$, $k>0$, and a matrix $\A\in\mathbb{R}^{n\times d}$, whose rows $\a_1,\ldots,\a_n$ arrive sequentially in a stream with condition number $\kappa$, there exists an online algorithm that with high probability, outputs a matrix $\M$ that has $\O{\frac{k}{\eps^2}\log n\log^2\kappa}$ (rescaled) rows of $\A$ and for all rank $k$ orthogonal projection matrices $\P\in\mathbb{R}^{d\times d}$,
\[(1-\eps)\norm{\A-\A\P}_F^2\le\norm{\M-\M\P}_F^2\le(1+\eps)\norm{\A-\A\P}_F^2.\]
(See \thmref{thm:online:pcp}.)
\end{theorem}

\thmref{thm:online:pcp} immediately yields improvements on the two online algorithms recently developed by Bhaskara~\etal (FOCS 2019) for online row subset selection and online PCA~\cite{BhaskaraLVZ19}. 
\begin{theorem}[Online Row Subset Selection]
Given parameters $\eps>0$, $k>0$, and a matrix $\A\in\mathbb{R}^{n\times d}$, whose rows $\a_1,\ldots,\a_n$ arrive sequentially in a stream with condition number $\kappa$, there exists an online algorithm that with high probability, outputs a matrix $\M$ with $\O{\frac{k}{\eps}\log n\log^2\kappa}$ rows that contains a matrix $\T$ of $k$ rows such that 
\[\norm{\A-\A\T^{-1}\T}_F^2\le(1+\eps)\norm{\A-\A_{(k)}}_F^2.\] 
(See \thmref{thm:online:rss}.)
\end{theorem}
By comparison, the online row subset selection algorithm of~\cite{BhaskaraLVZ19} stores $\O{\frac{k}{\eps^2}\log n\log^2\kappa}$ rows to succeed with high probability. 
Moreover, our algorithm provides the guarantee of the existence of a subset $\T$ of $k$ rows that provides a $(1+\eps)$-approximation to the best rank $k$ solution, whereas \cite{BhaskaraLVZ19} promises the bicriteria result that their matrix with rank $\O{\frac{k}{\eps^2}\log n\log^2\kappa}$ is a $(1+\eps)$-approximation to the best rank $k$ solution. 

The online PCA algorithm of \cite{BhaskaraLVZ19} also offers this bicriteria guarantee; for an input matrix $\A\in\mathbb{R}^{n\times d}$, they give an algorithm that outputs a matrix $\M\in\mathbb{R}^{n\times m}$, where $m=\O{\frac{k}{\eps^2}(\log n+\log\kappa)^4}$ and a matrix $\X$ of rank $m$, so that $\norm{\A-\M\X}_F^2\le(1+\eps)\norm{\A-\A_{(k)}}_F^2$. 
Our online row subset selection can also be adjoined with the online PCA algorithm of \cite{BhaskaraLVZ19} to offer the promise of the existence of a submatrix $\Y\in\mathbb{R}^{k\times d}$ within $\X$ such that $\norm{\A-\B\Y}_F^2\le(1+\eps)\norm{\A-\A_{(k)}}_F^2$. 
\begin{theorem}[Online Principal Component Analysis]
Given parameters $n,d,k,\eps>0$ and a matrix $\A\in\mathbb{R}^{n\times d}$ whose rows arrive sequentially in a stream with condition number $\kappa$, let $m=\O{\frac{k}{\eps^2}(\log n+\log\kappa)^4}$.
There exists an algorithm for online PCA that immediately outputs a row $\m_i\in\mathbb{R}^{m}$ after seeing row $\a_i\in\mathbb{R}^{d}$ and with high probability, outputs a matrix $\X\in\mathbb{R}^{m\times d}$ at the end of the stream such that
\[\norm{\A-\M\X}_F^2\le(1+\eps)\norm{\A-\A_{(k)}}_F^2,\]
where $\A_{(k)}$ is the best rank $k$ approximation to $\A$. 
Moreover, $\X$ contains a submatrix $\Y\in\mathbb{R}^{k\times d}$ such that there exists a matrix $\B$ such that
\[\norm{\A-\B\Y}_F^2\le(1+\eps)\norm{\A-\A_{(k)}}_F^2.\]
(See \thmref{thm:online:pca}.)
\end{theorem}

Our sliding window algorithm for $\ell_1$-subspace embeddings also uses an online $\ell_1$-subspace embedding algorithm that we develop. 
\begin{theorem}[Online $\ell_1$-Subspace Embedding]
Given $\eps>\frac{1}{n}$ and a matrix $\A\in\mathbb{R}^{n\times d}$ with subset condition number $\kappa$, whose rows $\a_1,\ldots,\a_n$ arrive sequentially in a stream, there exists an online algorithm that outputs a matrix $\M$ with $\O{\frac{d^2}{\eps^2}\log^2 n\log(\kappa nd)}$ (rescaled) rows of $\A$ such that
\[(1-\eps)\norm{\A\x}_1\le\norm{\M\x}_1\le(1+\eps)\norm{\A\x}_1,\]
for all $\x\in\mathbb{R}^d$ with high probability.  
(See \thmref{thm:online:lp}.)
\end{theorem}
\noindent
We summarize these results in \figref{fig:online:results}.
\begin{figure*}[!htb]
\begin{center}
{\tabulinesep=1.2mm
\begin{tabu}{|c|c|c|}\hline
Online Algorithms & Rows Sampled/Output Dimension & Notes \\\hline\hline
Rank $k$ Approximation & $\O{\frac{k}{\eps^2}\log n\log^2\kappa}$ (\thmref{thm:online:pcp}) & \\\hline
\multirow{2}{*}{Row Subset Selection} & $\O{\frac{k}{\eps^2}\log n\log^2\kappa}$~\cite{BhaskaraLVZ19} & Bicriteria \\\cline{2-3} 
& $\O{\frac{k}{\eps}\log n\log^2\kappa}$ (\thmref{thm:online:rss}) & \\\hline
\multirow{2}{*}{Principal Component Analysis} & $\O{\frac{k}{\eps^2}(\log n+\log\kappa)^4}$~\cite{BhaskaraLVZ19} & Bicriteria \\\cline{2-3}
& $\O{\frac{k}{\eps^2}(\log n+\log\kappa)^4}$ (\thmref{thm:online:pca}) & \\\hline
$\ell_1$-Subspace Embedding & $\O{\frac{d^2}{\eps^2}\log^2 n\log\kappa}$ (\thmref{thm:sw:lp}) &  \\\hline
\end{tabu}
}
\end{center}
\vspace{-0.2in}
\caption{Online algorithms for an input matrix of dimension $n\times d$. 
$\kappa$ is the subset condition number for $\ell_1$-subspace embedding and condition number of the stream otherwise. 
Bicriteria denotes that the rank of the matrix within $(1+\eps)$ approximation of the best rank $k$ solution need not have rank at most $k$.}
\figlab{fig:online:results}
\end{figure*}

\paragraph{A Coreset Framework for Deterministic Sliding Window Algorithms.}
To formalize a connection between online algorithms and sliding window algorithms, we give a framework for deterministic sliding window algorithms based on the merge-and-reduce paradigm and the concept of an online coreset, which we define as a weighted subset of rows of $\A$ that can be used to compute a good approximation to some given function on all prefixes of $\A$. 
On the other hand, observe that a row-sampling based online algorithm does not know when the input might terminate, so it must output a good approximation to any prefix of the input, which is exactly the requirement of an online coreset! 
Moreover, an online cannot revoke any of its decisions, so the history of its decisions are fully observable.  
Indeed, each of our online algorithms imply the existence of an online coreset for the corresponding problem. 
Intuition for our framework is presented in \figref{fig:sw:coreset}. 

\begin{figure}[!htb]
\centering
\begin{tikzpicture}
\foreach \i in {-10,...,29}{
\draw (\i/5,0.5)--(\i/5,0);
}
\foreach \i in {30,...,50}{
\draw[green](\i/5,0.5)--(\i/5,0);
}
\draw (5.9cm,-0.2cm) rectangle+(4.2cm,0.9cm);
\draw[->](-2.1cm,-0.3cm) -- (10.1cm,-0.3cm) ;
\node at (4cm,-0.6cm){Stream of rows, active rows in sliding window};

\draw[decorate,decoration={brace}](-2.1cm,1.8cm) -- (3.7cm,1.8cm);
\node at (0.8cm,2.2cm){$\B_3$};
\draw (-2.1cm,0.8cm) rectangle+(5.8cm,0.9cm);
\draw[red] (-0.4,1)--(-0.4,1.5);
\draw[red] (1.0,1)--(1.0,1.5);
\draw[red] (1.2,1)--(1.2,1.5);
\draw[red] (2.2,1)--(2.2,1.5);
\draw[red] (3.4,1)--(3.4,1.5);

\draw[decorate,decoration={brace}](3.7cm,1.8cm) -- (6.9cm,1.8cm);
\node at (5.3cm,2.2cm){$\B_2$};
\draw (3.7cm,0.8cm) rectangle+(3.2cm,0.9cm);
\draw[red] (4.4,1)--(4.4,1.5);
\draw[red] (5.2,1)--(5.2,1.5);
\draw[red] (5.6,1)--(5.6,1.5);
\draw[blue] (6.2,1)--(6.2,1.5);
\draw[blue] (6.8,1)--(6.8,1.5);

\draw[decorate,decoration={brace}](6.9cm,1.8cm) -- (8.7cm,1.8cm);
\node at (7.8cm,2.2cm){$\B_1$};
\draw (6.9cm,0.8cm) rectangle+(1.8cm,0.9cm);
\draw[blue] (7,1)--(7,1.5);
\draw[blue] (7.2,1)--(7.2,1.5);
\draw[blue] (7.8,1)--(7.8,1.5);
\draw[blue] (8.2,1)--(8.2,1.5);
\draw[blue] (8.4,1)--(8.4,1.5);

\draw[decorate,decoration={brace}](8.7cm,1.8cm) -- (10.1cm,1.8cm);
\node at (9.4cm,2.2cm){$\B_0$};
\draw (8.7cm,0.8cm) rectangle+(1.4cm,0.9cm);
\foreach \i in {44,...,50}{
\draw[blue] (\i/5,1)--(\i/5,1.5);
}
\end{tikzpicture}
\caption{Merge and reduce framework for deterministic sliding window algorithms via coresets that accurately approximate \emph{any} suffix of the input. 
The stream of rows proceeds from left to right, with the active rows of the sliding window in green. 
The rows sampled by the coresets $\B_0,\B_1,\ldots$ are in color above, with the blue rows used to approximate the active rows and red rows approximating expired portions of the stream, even if the size of the sliding window is given after the stream.}
\figlab{fig:sw:coreset}
\end{figure}

\begin{framework}[Coreset Framework for Deterministic Sliding Window Algorithms]
\framelab{frame:det}
There exists a merge-and-reduce framework for numerical linear algebra in the sliding window model using online coresets. 
If the input stream has condition number $\kappa$, then for approximation parameter $\eps>\frac{1}{n}$, the framework gives:
\begin{enumerate}
\item
A deterministic algorithm for spectral approximation in the sliding window model that outputs a matrix $\M$ that is a subset of (rescaled) rows of an input matrix $\A\in\mathbb{R}^{W\times d}$ such that $(1-\eps)\A^\top\A \preceq \M^\top\M\preceq (1+\eps)\A^\top\A$, while storing $\O{\frac{d}{\eps^2}\log^4 n\log\kappa}$ rows at any time. 
(See \thmref{thm:det:spectral}). 
\item
A deterministic algorithm for low-rank approximation/projection-cost preservation in the sliding window model that outputs a matrix $\M$ that is a subset of (rescaled) rows of an input matrix $\A\in\mathbb{R}^{W\times d}$ such that for all rank $k$ orthogonal projection matrices $\P\in\mathbb{R}^{d\times d}$,
\[(1-\eps)\norm{\A-\A\P}_F^2\le\norm{\M-\M\P}_F^2\le(1+\eps)\norm{\A-\A\P}_F^2,\]
while storing $\O{\frac{k}{\eps^2}\log^4 n\log^2\kappa}$ rows at any time. 
(See \thmref{thm:det:pcp}.)
\item
A deterministic algorithm for $\ell_1$-subspace embeddings in the sliding window model that outputs a matrix $\M$ that is a subset of (rescaled) rows of an input matrix $\A\in\mathbb{R}^{W\times d}$ such that $(1-\eps)\norm{\A\x}_1\le\norm{\M\x}_1\le(1+\eps)\norm{\A\x}_1$ for all $\x\in\mathbb{R}^d$, while storing $\O{\frac{d}{\eps^2}\log^4 n\log(\kappa' nd)}$ rows at any time, where $\kappa'$ is the subset condition number of $\A$. 
(See \thmref{thm:det:lp}). 
\end{enumerate}
\end{framework}
All of the results presented using \frameref{frame:det} are space optimal, up to lower order terms~\cite{AndoniCKQWZ16, DeshpandeV06, CohenP15}. 
Again we have the property for low-rank approximation/projection-cost preservation that the number of sampled rows can be improved from $\O{\frac{k}{\eps^2}\log n\log^2\kappa}$ to $\O{\frac{k}{\eps^2}\log^2 n}$, under the assumption that the entries of the underlying matrix are integers at most $\poly(n)$ in magnitude.

We remark that neither our randomized framework \frameref{frame:random} nor our deterministic framework \frameref{frame:det} requires the sliding window parameter $W$ as input during the processing of the stream. 
Instead, they create \emph{oblivious} data structures from which approximations for any window can be computed after processing the stream. 
Intuitively, this can be visualized by \figref{fig:sw:coreset}. 
We summarize our sliding window algorithms in \figref{fig:sw:results}.
\begin{figure*}[!htb]
\begin{center}
{\tabulinesep=1.2mm
\begin{tabu}{|c|c|c|}\hline
Sliding Window Algorithms & Rows Sampled (Randomized) & Rows Sampled (Deterministic) \\\hline\hline
Spectral Approximation & $\tT{\frac{d}{\eps^2}}$ (\thmref{thm:sw:spectral}) & $\tT{\frac{d}{\eps^2}}$ (\thmref{thm:det:spectral}) \\\hline
Rank $k$ Approximation & $\tT{\frac{k}{\eps^2}}$ (\thmref{thm:sw:pcp}) & $\tT{\frac{k}{\eps^2}}$ (\thmref{thm:det:pcp}) \\\hline
$\ell_1$-Subspace Embedding & $\tO{\frac{d^2}{\eps^2}}$ (\thmref{thm:sw:lp}) & $\tT{\frac{d}{\eps^2}}$ (\thmref{thm:det:lp}) \\\hline
\end{tabu}
}
\end{center}
\vspace{-0.2in}
\caption{Sliding window algorithms using \frameref{frame:random} for randomized algorithms and \frameref{frame:det} for deterministic algorithms for a stream of length $n$ of dimension $d$ rows.  
We omit dependencies on $\log n$ and $\log\kappa$, where $\kappa$ is the condition number of the stream or the subset condition number. 
Furthermore, \thmref{thm:sw:spectral} and \thmref{thm:sw:pcp} run in nearly input sparsity time. 
}
\figlab{fig:sw:results}
\end{figure*}

\subsection{Overview of Our Techniques}
\seclab{sec:techniques}
The design and analysis of many existing algorithms in the sliding window model use either the exponential histogram framework~\cite{DatarGIM02} or the smooth histogram framework~\cite{BravermanO07}. 
Unfortunately, we show in \appref{sec:counters} that these frameworks cannot be applied to many interesting linear-algebraic functions, such as approximating the spectral norm or vector induced matrix norms, generalized regression, and low-rank approximation. 
This motivates the need for new frameworks for problems of linear algebra in the sliding window model.  

Our first observation is that as additional rows arrive in the stream, the singular values of the underlying matrix cannot decrease. 
Then we must have the Loewner ordering $\M_1^\top\M_1\succeq\ldots\succeq\M_n^\top\M_n$, where $\M_i$ is the matrix formed by the rows that have arrived since time $i$. 
For spectral approximation, we can reduce the number of outer products stored, since we only care when some singular value of a matrix has increased by $(1+\eps)$. 
Thus we can repeatedly delete an index $i$ if $\M_{i-1}^\top\M_{i-1}\preceq(1+\eps)\M_{i+1}\M_{i+1}$ and relabel the indices, since there are no significant differences between the singular values of $\M_{i-1}$ and $\M_{i+1}$. 
We can repeatedly delete matrices until there are about $\O{\frac{d}{\eps}\log\kappa}$ matrices remaining, where $d$ is the dimension of each row and $\kappa$ is the condition number of the stream. 
Now if $A$ and $B$ are substreams, where $B$ is a suffix of $A$, corresponding to matrices $\A$ and $\B$ and $(1-\eps)\A^\top\A\preceq\B^\top\B\preceq\A^\top\A$, then
\[(1-\eps)(\A^\top\A+\C^\top\C)\preceq\B^\top\B+\C^\top\C\preceq\A^\top\A+\C^\top\C\]
for any matrix $\C$ that represents a substream $C$ that arrives right after the substream $A$. 
Hence, if we have a $(1+\eps)$ spectral approximation to some suffix of the stream, it will remain a $(1+\eps)$ spectral approximation upon the arrival of new rows. 
In particular, the matrix represented by the active rows in the sliding window will be sandwiched between two matrices maintained by the algorithm and thus be well-approximated. 
In fact, this approach can easily be seen as a generalization of the smooth histogram framework to matrix functions and is formalized in \appref{app:histogram}. 

Alas, not only is this approach not space optimal, but it does not seem to generalize to other linear algebraic problems in the sliding window model. 
Nevertheless, this warm-up algorithm crucially gives insight into a more space efficient spectral approximation that \emph{does} generalize to other algorithms. 
Observe that the rows of $\M_i$ are a subset of the rows of $\M_j$ for any $i<j$, so $\M_i^\top\M_i$ and $\M_j^\top\M_j$ be storing a lot of redundant information, which suggests a row sampling approach for more space efficient algorithms.  

\paragraph{Spectral Approximation via Row Sampling in the Sliding Window Model.}
The challenge for row sampling approaches in the sliding window model results from two conflicting forces. 
Suppose we have a good approximation $\M$ to the matrix $\A$ consisting of the rows that have already arrived in the stream. 
When a new row $\r$ arrives, we would like to sample $\r$ with high probability if $\r$ is ``important'' based on the rows in $\M$. 
Namely, if $\r$ has high norm or a different direction than the rows of $\M$, then we would like to capture that information by sampling $\r$. 
On the other hand, if $\r$ has low importance based on the existing rows of $\M$, then it seems like we should not sample $\r$. 

However, the sliding window model also emphasizes the more recent rows. 
For example, it may be possible that the rows that follow $\r$ all contain only zeroes and that all rows of before $\r$ are expired at the time of query. 
Since all rows of $\M$ have expired at the time of query, then we would be left with no information about the underlying matrix if we did not sample $\r$. 
This implies that we must \emph{always} store the most recent row and similarly place greater emphasis on more recent rows. 
Although the (ridge) leverage score of a row quantifies the ``uniqueness'' or ``importance'' of a row with respect to all other rows in the matrix, there is no measure that combines both uniqueness and recency. 

We first consider ``online'' versions of sampling distributions, which quantifies the importance of a row with respect to the \emph{previous} rows in the matrix. 
For example, \cite{CohenMP16} introduces online (ridge) leverage scores for a row sampling approach to online spectral approximation. 
By contrast, the sliding window model seems to value the importance of a row with respect to the \emph{following} rows in the matrix. 
Thus we introduce the concept of \emph{reverse online leverage scores} for spectral approximation in the sliding window model. 
Given rows $\r_1,\ldots,\r_t$, we say the reverse online leverage score of row $\r_i$ is the leverage score of $\r_i$ with respect to the matrix formed by the concatenation of the following rows $\r_{i+1}\circ\ldots\circ\r_t$\footnote{We define the reverse online leverage score to be $1$ when a row is not in the span of the following rows.}. 
Note specifically that the reverse online leverage score of the most recent row is always $1$, which matches the previous observation that we should always sample the most recent row in the sliding window model. 

On the other hand, we cannot compute the reverse online leverage scores of each row without storing the entire stream. 
We can compute the reverse online leverage score of a row with respect to the rows that we have sampled as a $(1+\eps)$ approximation to the true score, but a possible concern is that the error at each step compounds and the error at the end could be as large as $(1+\eps)^n$. 
To resolve this issue, we use an idea of~\cite{CohenLMMPS15} that shows if we have a $(1+\eps)$ approximation for each score but then oversample each row by a factor $C>(1+\O{\eps})$, then the error \emph{will not} compound and the resulting matrix will still be a $(1+\eps)$ approximation. 
We also use a matrix martingale argument similar to~\cite{CohenMP16} to avoid known row sampling dependencies~\cite{KelnerL13}. 

To bound the rows sampled by our algorithm, we note that each row is sampled with probability proportional to the reverse online leverage score, and~\cite{CohenMP16} bounds the sum of the online leverage scores by $\O{d\log\kappa}$ assuming bounded entries in the underlying matrix, which must also bound the sum of the reverse online leverage scores. 
We also note that by batching until a certain number of rows have arrived before choosing to discard rows, we can amortize the runtime needed to approximate each of the reverse online leverage scores and obtain nearly input sparsity time by using standard projection techniques to embed each of the sampled rows into an $\O{\log\frac{n}{\eps}}$ dimensional subspace. 

\paragraph{From Spectral Approximation to a Row Sampling Framework.} 
The spectral approximation algorithm suggests a simple row sampling framework for sliding window algorithms. 
Suppose at each time, we have stored a matrix $\M$ that serves as a good approximation for some function, e.g., a low-rank approximation or an $\ell_1$-subspace embedding, on input matrix $\A$. 
When a new row $\r$ arrives, we add $\r$ to $\M$ and then we start from the most recent row in $\M$ and iteratively choose whether to keep and rescale each row of $\M$ based on some ``reverse online'' sampling distribution that is monotonic -- if the sampling probability of a row increases as additional rows arrive, we can no longer guarantee that we sampled the row with sufficient probability. 
Then generally our analysis must first show correctness of the sampling probabilities and then bound the number of sampled rows. 

\emph{Low-Rank Approximation/Projection-Cost Preservation.} 
For low-rank approximation, the same matrix martingale argument shows that a reverse online version of ridge leverage scores~\cite{AlaouiM15, CohenEMMP15, CohenMM17} with the proper regularization provides a rank $k$ projection-cost preservation of the underlying matrix, and thus a low-rank approximation. 
We then provide a tighter bound on the sum of the (reverse) online ridge leverage scores with regularization parameter $\lambda=\frac{\norm{\A-\A_{(k)}}_F^2}{k}$, which shows that the number of sampled rows will be proportional to $k$ rather than $d$. 
However, this \emph{still} does not suffice for our purpose; we do not know $\frac{\norm{\A-\A_{(k)}}_F^2}{k}$ in advance and thus we cannot regularize. 
Fortunately, the fact that our algorithm is a rank $k$ projection-cost preservation means that we have an $(1+\eps)$ approximation to the regularization parameter at all times. 
Thus we can again oversample by a factor of say $2$ to compensate and avoid compounding errors. 
Similar to our spectral approximation algorithm, our low-rank approximation algorithm has input sparsity runtime, up to lower order factors. 

\emph{$\ell_1$-Subspace Embedding.} 
For $\ell_1$-sampling, we define the reverse online $\ell_1$-sensitivity of a row $\a_i\in\mathbb{R}^d$ as $\max_{\x\in\mathbb{R}^d}\frac{|\a_i^\top\x|}{\norm{\Z_i\x}_1}$, where $\Z_i$ is the matrix that consists of the rows including and following $\a_i$. 
The analysis showing correctness of this probability distribution is straightforward; it follows from a scalar martingale concentration inequality to show that $\norm{\M\x}_1\approx\norm{\A\x}_1$ for all points $\x$ in an $\eps$-net, where $\A$ is the input matrix and $\M$ is the sampled matrix. 
By a simple iterative argument, it then follows that $\norm{\M\x}_1\approx\norm{\A\x}_1$ for all $\x\in\mathbb{R}^d$. 
It follows that given the correctness of $\M$ at all times, it is straightforward for our algorithm to approximate the reverse online $\ell_1$ sensitivity of each row.  
We could use a reverse online version of the $\ell_1$ leverage scores~\cite{DasguptaDHKM08}, but these quantities would result in a higher number of sampled rows. 
We could also try a reverse online version of the Lewis weights~\cite{CohenP15}, but it does seem apparent how to approximate these quantities.

The challenge is bounding the sum of the (reverse) online $\ell_1$ sensitivities. 
A natural approach would be adapting the approach~\cite{CohenMP16}, who relate the sum of the online $\ell_2$ leverage scores to the evolution of the determinant of $\A^\top\A$ as additional rows of $\A$ arrive in the stream. 
A similar geometric argument of relating the sum of the online $\ell_1$ sensitivities to the change in volume of some polytope induced by $\A$ does not seem obvious. 
A primary reason for this is that, unlike for $\ell_2$, the unit ball $\{x\,|\,\norm{\A\x}_1 \leq 1\}$ is not an ellipsoid and it is not clear how the John ellipsoid or the $\ell_1$ sensitivities of this polytope change when a new row is added. 

Instead, we first show that if the online $\ell_2$ leverage scores are uniformly bounded by roughly $\frac{Cd\log\kappa}{n}$ given some regularization of the matrix for any constant $C>1$, then the online $\ell_1$ sensitivities must also be uniformly bounded by $\frac{Cd\log\kappa}{n}$. 
Now even under our regularization, the input matrix $\A$ does not have uniformly bounded online $\ell_2$ leverage scores. 
We thus use a reweighting idea of~\cite{CohenLMMPS15} to argue that we can remove half of the rows of $\A$, while increasing the online $\ell_2$ leverage scores of the other half of the rows of $\A$ to at most $\frac{Cd\log\kappa}{n}$, given our regularization. 
Since sensitivities only increase with the removal of rows, it follows that the sum of the (regularized) online $\ell_1$ sensitivities of the half of the rows that remain with respect to $\A$ must be $\O{d\log\kappa}$. 
We then induct on the matrix formed by the removed rows to argue that the total sum of the (regularized) online $\ell_1$ sensitivities of $\A$ must be $\O{d\log n\log\kappa}$, which implies a bound on the total number of sampled rows.  
This also implies the first online algorithm for an $\ell_1$-subspace embedding. In fact, our analysis can be used to improve a number of other online algorithms. 

\paragraph{Rank Constrained Online Algorithms.}
Our analysis for the sum of the online ridge leverage scores with regularization $\lambda=\frac{\norm{\A-\A_{(k)}}_F^2}{k}$ immediately gives a nearly space optimal algorithm for low-rank approximation/projection-cost preservation in the online model. 
As in the sliding window setting, we do not know the value of $\lambda$ in advance, but since our sketch is a rank $k$ projection-cost preservation, we can track the evolution of $\frac{\norm{\A-\A_{(k)}}_F^2}{k}$ as additional rows of $\A$ arrive in the stream. 
Correctness follows from the same matrix martingale argument as~\cite{CohenMP16} and the improved space bounds follow from our analysis bounding the sum of the online ridge leverage scores. 
We then show that an online algorithm providing a rank-$k$ projection-cost preservation is a powerful primitive that can be used in conjunction with existing techniques to improve previous work.  

\emph{Online Row Subset Selection.} 
For row subset selection in the online model, our starting point is an offline algorithm by \cite{CohenMM17}, who observe that given a matrix $\Z$ that is a constant factor low-rank approximation to the underlying matrix $\A$, a theorem by \cite{DeshpandeRVW06} shows that adaptive sampling $\O{\frac{k}{\eps}}$ additional rows $\S$ of $\A$ against the rows of $\Z$ suffices for $\Z\cup\S$ to contain a $(1+\eps)$ factor approximation to the online row subset selection problem. 
Moreover, $\Z\cup\S$ contains a submatrix $\T$ of $k$ rows that is a good low-rank approximation to $\A$.  
Since our online projection-cost preservation algorithm can output such a matrix $\Z$, we adapt the theorem of \cite{DeshpandeRVW06} to the streaming model, showing that if $\Z$ is given, adaptive sampling can also be performed on data streams to obtain a different but valid $\S$ by oversampling each row of $\A$ by an $\O{\frac{k}{\eps}}$ factor. 

We do not know $\Z$ until the end of the stream, so we cannot immediately perform adaptive sampling in the online model. 
Fortunately, if we adaptively sample against the current output of the online projection-cost preservation algorithm, then we will only oversample with respect to the true sampling probability. 
\cite{CohenMM17} shows that the adaptive sampling probabilities can be upper bounded by the $\lambda$-ridge leverage scores, where $\lambda=\frac{\norm{\A-\A_k}_F^2}{k}$. 
Since the $\lambda$-ridge leverage scores are at most the online $\lambda$-ridge leverage scores, we can again sample rows proportional to their online $\lambda$-ridge leverage scores. 
It then suffices to again use our bound on the sum of the online ridge leverage scores to bound the number of total rows sampled by this algorithm. 

\emph{Online Principal Component Analysis.} 
Our starting point is a recent algorithm by~\cite{BhaskaraLVZ19} that maintains and updates a matrix $\X$ throughout the data stream. 
After the arrival of row $\a_i$, the matrix $\X$ is updated using a combination of residual based sampling and a black-box theorem of Boutsidis et al.~\cite{BoutsidisGKL15}. 
Row $\m_i$ is then output as the embedding of $\a_i$ into $\X$ by $\m_i=\a_i\X^{(i)}$, where $\X^{(i)}$ is the matrix $\X$ after row $i$ has been processed by $\X$. 
$\X$ has the property that no rows from $\X$ are ever removed across the duration of the algorithm, so then $\m_i$ is only an upper bound on the best embedding of $\a_i$. 
However, this matrix $\X$ does not contain a good rank $k$ approximation to $\A$. 
On the other hand, the output to our online row subset selection algorithm \emph{does} contain a submatrix $\Y$ of $k$ rows that is a good approximation to $\A$. 
Thus, our online row subset selection (RSS) algorithm can be combined with the algorithm of \cite{BhaskaraLVZ19} to output matrices $\M$ and $\W$ that is a good approximation to the online PCA problem but also so that $\W$ contains a submatrix $\Y$ of $k$ rows that is a good rank $k$ approximation of $\A$. 
Namely, the algorithm of \cite{BhaskaraLVZ19} can be run to produce a matrix $\X^{(i)}$ after the arrival of each row $\a_i$. 
Simultaneously, our online RSS algorithm produces a matrix $\Z^{(i)}$ after $\a_i$ arrives. 
We then immediately output the embedding $\m_i=\a_i\W^{(i)}$, where $\W^{(i)}$ appends the new rows of $\X^{(i)}$ and $\Z^{(i)}$ to $\W^{(i-1)}$. 
Since the dimension of $\W$ is the sum of the dimensions of $\X$ and $\Z$ but $\X$ has smaller dimension than $\Z$, we do not suffer additional asymptotic space over the algorithm of \cite{BhaskaraLVZ19}.

\paragraph{Online Coresets for Deterministic Sliding Window Algorithms.} 
Our algorithms have repeatedly hinted at a connection between row sampling algorithms for the online model and the sliding window model; if there exists an ``online'' probability distribution for a linear algebraic problem, then there seems to be a corresponding ``reverse online'' probability distribution. 
Online algorithms demand correctness on prefixes; sliding window algorithms demand correctness on suffixes. 
We formalize this intuition by showing a framework for \emph{deterministic} sliding window algorithms using a merge-and-reduce framework based on online coresets, which we define to be a weighted subset of rows of an input matrix $\A$ that can be used to compute a good approximation to some given function on all prefixes of $\A$. 
Observe that any online algorithm in a row-arrival stream that succeeds with high probability \emph{must} yield an online coreset. 
By a simple union bound, it must output the correct answer at all times. 
Thus, it must be correct for all prefixes of the stream. 
Since an online algorithm cannot revoke any of its sampled rows, it follows that we can just consider the rows output at the end of the algorithm and consider the subset of rows that were sampled prior to a certain time to obtain a good approximation to the corresponding prefix of $\A$. 
Hence, online algorithms imply the existence of an online coreset for the corresponding problem. 

Given an online coreset for a particular problem, we obtain a corresponding deterministic sliding window algorithm, as in \figref{fig:sw:coreset}. 
The idea is to store up to the most recent $m$ rows in a block $\B_0$, for some parameter $m$ related to the coreset size. 
When $\B_0$ becomes full, we create a $\left(1+\frac{\eps}{\log n}\right)$ online coreset $\B_1$ for $\B_0$ starting with the most recent row. 
We then reset $\B_0$ to empty and start adding rows to $\B_0$ again. 
At some point $\B_0$ will contain $m$ rows again. 
We then merge all of the rows $\B_0,\B_1,\ldots,\B_i$ where $\B_{i+1}$ is the first empty block. 
It can be shown that the merged rows form a $\left(1+\frac{\eps}{\log n}\right)^i$ coreset for the most recent $2^i\cdot m$ rows, which we can reduce back down to $m$ rows to form block $\B_{i+1}$. 
From a simple induction, it follows that using $\O{\log n}$ blocks will give a $\left(1+\frac{\eps}{\log n}\right)^{\log n}$ coreset, starting with the most recent row. 
Rescaling $\eps$, this gives a merge-and-reduce based framework for $(1+\eps)$ deterministic sliding window algorithms based on online coresets.

\emph{$\ell_1$-Subspace Embedding.} 
For this framework, we can actually use a probability distribution for $\ell_1$-subspace embeddings that is more space efficient than $\ell_1$ sensitivities. 
The Lewis weights~\cite{Lewis78,CohenP15} have been shown to be space optimal in the offline setting, up to lower order terms. 
However, their properties are less understood; rather than impossibility results, the challenge in using online Lewis weights in the row sampling framework is a gap in analysis. 
It did not seem evident how to approximate the online Lewis weight of $\A$ for a row $\r$, given a matrix $\M$ such that $\norm{\M\x}_1\approx\norm{\A\x}_1$ for all $\x\in\mathbb{R}^d$. 
For the merge-and-reduce framework, we have access to each of the rows stored by an online coreset, so we can compute their online Lewis weights through an iterative process. 

It then remains to bound the sum of the online Lewis weights to bound the size of the online coreset. 
We first prove elementary properties of the (online) Lewis weights, such as monotonicity to reweighting and a splitting invariance. 
We also show that if the online $\ell_2$ leverage scores are uniformly bounded by roughly $\frac{Cd\log\kappa}{n}$ given some regularization of the matrix for any constant $C>1$, then the online Lewis weights must also be uniformly bounded by $\frac{Cd\log\kappa}{n}$. 
We then use these properties to bound the online Lewis weights using the same uniformity of reweighting technique used to bound the online $\ell_1$ sensitivities. 
Namely, we again argue that we can remove half of the rows of $\A$, while increasing the online Lewis weights of the other half of the rows of $\A$ to at most $\frac{Cd\log\kappa}{n}$, given our regularization. 
Since Lewis weights only increase with the removal of rows, the sum of the (regularized) online Lewis weights of the half of the rows that remain with respect to $\A$ must be $\O{d\log\kappa}$. 
An inductive argument then shows that the total sum of the (regularized) online Lewis weights of $\A$ must be $\O{d\log n\log\kappa}$, which implies a bound on the online Lewis weights and thus the coreset size. 
This gives a more space efficient algorithm for $\ell_1$-subspace embedding in the sliding window model compared to the previous row sampling framework. 

\subsection{Organization and Open Questions}
In \secref{sec:sample}, we introduce a general framework for space and time efficient randomized matrix algorithms in the sliding window model. 
We use the framework to give algorithms for spectral approximation and low-rank approximation that are nearly space optimal and input sparsity runtime, up to lower order terms. 
We also provide a structural result that bounds the number of rows sampled by a probability distribution induced by online ridge leverage scores, which were recently introduced, but not fully explored. 
Our framework implicitly demonstrates connections between the sliding window and online models through the probability distribution used to sample each row and the corresponding analysis and we make this connection explicit in later sections. 

In \secref{sec:online}, we show that the paradigm of row sampling using online ridge leverage scores along with our structural result achieves simple and intuitive online algorithms for low-rank approximation, row subset selection, and principal component analysis that nevertheless improve on the state-of-the-art. 
In \secref{sec:lp}, we characterize and analyze an intuitive probability distribution for the $\ell_1$-subspace embedding problem and show that it can be used to obtain both online and sliding window algorithms. 

Finally, in \secref{sec:coreset}, we give a general framework for deterministic matrix algorithms in the sliding window model using a merge-and-reduce paradigm for the concept of online coresets, which are generated by all of our online algorithms. 
We define and analyze a space optimal distribution for the $\ell_1$-subspace embedding problem, so that in all, we obtain nearly space optimal deterministic algorithms for spectral approximation, low-rank approximation, and $\ell_1$-subspace embeddings. 

In~\appref{sec:smooth-histogram}, we give background on the popular smooth histogram framework for sliding window algorithms and then give counterexamples showing the approach is not amenable to many interesting linear algebraic functions.

Our work leads to a number of interesting open questions. 
First, note that although we give the first online algorithm for $\ell_1$-subspace embedding, the number of sampled rows is $\O{\frac{d^2}{\eps^2}\log^2 n\log\kappa}$ from using online $\ell_1$ sensitivities to determine the sampling probability for each row. 
This is because even though we show that the sum of the online $\ell_1$ sensitivities is $\O{d\log n\log\kappa}$, the corresponding analysis requires an exponentially small probability of failure due to the $\eps$-net argument. 
We show that the sum of the online Lewis weights is $\O{d\log n\log\kappa}$ and the analysis of \cite{CohenP15} that uses offline Lewis weights does not require the $\eps$-net, but the challenge is approximating the online Lewis weight of a row $\a_i$ given a sketch $\M$ for $\A$. 
Thus, a natural question is whether the online Lewis weights can be used to improve the sample complexity for online $\ell_1$-subspace embedding. 

Another interesting question is whether assumptions on the bit complexity of the underlying matrix $\A$ can remove the dependency on $\log\kappa$ for spectral approximation. 
We showed this is possible for low-rank approximation/projection-cost preservation by separately considering the case when $\A$ has rank at most $2k$, since we can efficiently upper bound $\frac{\norm{\A}_F}{\norm{\A-\A_{(k)}}_F}$ when $\A$ has rank at least $2k$. 

Finally, we describe how to construct online coresets in polynomial time for spectral approximation by generalizing an online version of \cite{BSS12} by \cite{CohenMP16}. 
Do there exist corresponding fast deterministic constructions of online coresets for low-rank approximation/projection-cost preservation and $\ell_1$-subspace embeddings?

\subsection{Preliminaries}
For a positive integer $n$, we use $[n]$ to represent the set $\{1,\ldots,n\}$. 
We use $\frac{1}{\poly(n)}$ to denote some arbitrary degree polynomial in $n$, generally some failure event that can be avoided by fixing sufficiently large constants. 
When an event has probability $1-\frac{1}{\poly(n)}$ of occurring, we say the event occurs with high probability. 
We use $\polylog(n)$ to omit terms that are polynomial in $\log n$ and write $\exp(n)$ to denote $e^n$. 

In the row-arrival model, the stream has length $n$ and the $i\th$ update in the stream is precisely a row $\r_i$. 
In the sliding window model, the input matrix $\A$ is implicitly defined through the stream and a parameter $W>0$ that represents the window size, so that $\A$ is the matrix consisting of the last $W$ rows of the stream, $\A=\r_{n-W+1}\circ\ldots\circ\r_n$, where $\a\circ\b$ denotes the vertical concatenation of rows: $\begin{bmatrix}\a\\\b\end{bmatrix}$. 
In the online model, the input matrix $\A$ is the matrix that consists of all $n$ rows, so that $\A=\r_1\circ\ldots\circ\r_n$. 

We use $\I_n$ to denote the $n\times n$ identity matrix, but drop the subscript when the dimensions are clear from context. 
We use the notation $\A^\top$ to denote the transpose of $\A$ and $\A^{-1}$ to denote the Moore-Penrose pseudoinverse of $\A$ so that $\A\A^{-1}\A=\A$, $\A^{-1}\A\A^{-1}=\A^{-1}$, $(\A\A^{-1})^\top=\A\A^{-1}$, and $(\A^{-1}\A)^\top=\A^{-1}\A$. 
A symmetric matrix $\A\in\R^{n \times n}$ is positive semidefinite if $\x^\top\A\x\ge 0$ for all $\x\in\mathbb{R}^n$, in which case we say $0\preceq\A$. 
Then the Loewner partial ordering of matrices has $\A\preceq\B$ if and only if $0\preceq\B-\A$. 
If $\A$ has rank $r$, then we write its nonzero singular values as $\sigma_{\max}(\A)=\sigma_1(\A)\ge\ldots\ge\sigma_r(\A)=\sigma_{\min}(\A)$. 
We define the condition number of $\A$ by $\frac{\sigma_{\max}(\A)}{\sigma_{\min}(\A)}$ and the operator norm of $\A$ by $\norm{\A}_2=\sigma{\max}(\A)$. 
We use $\kappa$ to represent the condition number of the stream; for the online model, the condition number of the stream is the maximum condition number across any matrix formed by prefixes of the stream, but for the sliding window model, the condition number of the stream is the maximum condition number across any matrix formed by up to $W$ consecutive rows in the stream.

For a vector $\x\in\mathbb{R}^n$, we have the Euclidean norm $\norm{\x}_2=\sqrt{\sum_{i=1}^n v_i^2}$ and more generally, $\norm{\x}_p=\left(\sum_{i=1}^n |x_i|^p\right)^{\frac{1}{p}}$. 
For a matrix $\A\in\mathbb{R}^{n\times d}$, we denote its Frobenius norm by $\norm{\A}_F=\sqrt{\sum_{i=1}^n\sum_{j=1}^d A_{i,j}^2}$. 
We denote $\A_{(k)}=\argmin_{\rank(\X)\le k}\norm{\A-\X}_F^2$ to be the best rank $k$ approximation to $\A$.

We introduce and use multiple formulations, variants, and generalizations of (ridge) leverage scores, which we will define in their various sections, but we use the following definition of (ridge) leverage scores throughout.
\begin{definition}[(Ridge) Leverage Scores]
For a matrix $\A=\a_1\circ\ldots\circ\a_n\in\mathbb{R}^{n\times d}$ and a regularization parameter $\lambda\ge0$, the ridge leverage score of row $\a_i$ for each $i\in[n]$ is the quantity $\a_i(\A^\top\A+\lambda\I)^{-1}\a_i^\top$. 
When $\lambda=0$, we refer to the quantity as the leverage score of row $\a_i$. 
\end{definition}
Informally, the (ridge) leverage score of row $\a_i$ quantifies how ``important'' or ``unique'' a row is. 
When the regularization parameter is $\frac{\norm{\A-\A_{(k)}}_F^2}{k}$, the intuition is that the importance of a row is only considered with respect to the top directions. 
Sampling rows with probability proportional to their leverage scores has been for spectral approximations~\cite{MahoneyD09,DrineasMMW12} while sampling rows with probability proportional to their ridge leverage scores has been used for low-rank approximations~\cite{AlaouiM15, CohenEMMP15, CohenMM17}. 

\section{Row Sampling Framework for the Sliding Window Model}
\seclab{sec:sample}
In this section, we give space and time efficient algorithms for matrix functions in the sliding window model. 
Our general approach will be to use the following framework. 
As the stream $\r_1,\ldots,\r_n\in\mathbb{R}^d$ arrives, we shall maintain a weighted subset of these rows at each time. 
Suppose at some time $t$, we have a matrix $\M_t=\r_{t,1}\circ\ldots\circ\r_{t,m_t}$ of \emph{weighted} rows of the stream that can be used to give a good approximation to the function applied to any \emph{suffix} of the stream. 
Upon the arrival of row $t+1$, we first set $\M_{t+1}=\r_{t+1}$. 
Then starting with $i=m_t$ and moving backwards toward $i=1$, we repeatedly prepend a weighted version of $\r_{t,i}$ to $\M_{t+1}$ with some probability that depends on $\r_{t,i}$, $\M_{t+1}$, and the matrix function to be approximated.  
Once the rows of $\M_t$ have each been either added to $\M_{t+1}$ or discarded, we proceed to row $t+2$. 

Note that the matrices $\M_t$ serve no real purpose other than for presentation; the framework is just storing a subset of weighted rows at each time and repeatedly performing online row sampling, \emph{starting with the most recent row}. 
Since an online algorithm must be correct on all prefixes of the input, then our framework must be correct on all suffixes of the input and in particular, on the sliding window. 
This observation demonstrates a connection between online algorithms and sliding window algorithms that we explore in greater detail in future sections. 
We give our framework in \algref{alg:sampling:framework}. 

\begin{algorithm}[t]
\caption{Row sampling framework for matrix algorithms in the sliding window model}
\alglab{alg:sampling:framework}
\begin{algorithmic}[1]
\Require{A stream of rows $\r_1,\ldots,\r_n\in\mathbb{R}^d$, window size $W$, and an accuracy parameter $\eps>0$}
\Ensure{A $(1+\eps)$ approximation for various matrix functions in the sliding window model.}
\State{$\M_0\gets\emptyset$.}
\State{$\alpha\gets\frac{C}{\eps^2}\log n$ with sufficiently large constant $C>0$}
\For{each row $\r_t$}
\Comment{Process stream}
\State{$\M_t=\r_t$}
\Comment{Keep timestamps of all rows}
\State{Let $\M_{t-1}=\m_1\circ\ldots\circ\m_{m_{t-1}}$.}
\For{$i=m_{t-1}$ down to $i=1$}
\State{$\tau_i\gets\score(\m_i,\M_t)$}
\Comment{Importance of row $i$ based on matrix function}
\State{$p_i\gets\min(1, \alpha\tau_i)$}
\State{With probability $p_i$, $\M_t\gets\frac{\m_i}{\sqrt{p_i}}\circ\M_t$}
\Comment{Downsample and rescale row}
\EndFor
\State{Delete $\M_{t-1}$.}
\EndFor
\State{$\M\gets\emptyset$}
\Comment{Return rows relevant to sliding window}
\State{Let $\M_n=\m_1\circ\ldots\circ\m_{m_n}$.}
\For{$i=1$ to $i=m_n$}
\If{timestamp of $\m_i$ is at least $n-W+1$}
\State{$\M\gets\M\circ\m_i$}
\EndIf
\EndFor
\State{\Return $\M$}
\end{algorithmic}
\end{algorithm}

\subsection{$\ell_2$-Subspace Embedding}
\seclab{sec:l2:randomized}
We first give a randomized algorithm for spectral approximation in the sliding window model that is both space and time efficient. 
\cite{CohenMP16} define the concept of online (ridge) leverage scores and show that by sampling each row of a matrix $\A$ with probability proportional to its online leverage score, the weighted sample at the end of the stream provides a $(1+\eps)$ spectral approximation to $\A$.  
We recall the definition of online ridge leverage scores of a matrix from~\cite{CohenMP16}, as well as introduce reverse online ridge leverage scores. 
\begin{definition}[Online/Reverse Online (Ridge) Leverage Scores]
\deflab{def:rols}
For a matrix $\A=\a_1\circ\ldots\circ\a_n\in\mathbb{R}^{n\times d}$, let $\A_i=\a_1\circ\ldots\circ\a_i$ and $\Z_i=\a_n\circ\ldots\circ\a_i$. 
Let $\lambda\ge 0$. 
The \emph{online $\lambda$-ridge leverage score} of row $\a_i$ is defined to be $\min(1,\a_i(\A_{i-1}^\top\A_{i-1}+\lambda\I)^{-1}\a_i^\top)$, while the \emph{reverse online $\lambda$-ridge leverage score} of row $\a_i$ is defined to be $\min(1,\a_i(\Z_{i+1}^\top\Z_{i+1}+\lambda\I)^{-1}\a_i^\top)$. 
The (reverse) online leverage scores are defined respectively by setting $\lambda=0$, though we use the convention that if the (reverse) online leverage score of $\a_i$ is $1$ if $\rank(\A_i)>\rank(\A_{i-1})$ (respectively if $\rank(\Z_i)>\rank(\Z_{i+1})$).
\end{definition}
Intuitively, the online leverage score quantifies how important row $\a_i$ is, with respect to the previous rows, while the reverse online leverage score quantifies how important row $\a_i$ is, with respect to the following rows, and the ridge leverage scores are regularized versions of these quantities. 
As the name suggests, online (ridge) leverage scores seem appropriate for online algorithms while reverse online (ridge) leverage scores seem appropriate for sliding window algorithms, where more recency is an emphasis. 
Hence, we use reverse online leverage scores in computing the sampling probability of each particular row in \algref{alg:l2:score} that serves as our customized $\score$ function in \algref{alg:sampling:framework} for spectral approximation. 
\begin{algorithm}[t]
\caption{$\score(\r,\A)$ function for spectral approximation}
\alglab{alg:l2:score}
\begin{algorithmic}[1]
\Require{A row $\r\in\mathbb{R}^d$ and a matrix $\A\in\mathbb{R}^{m\times d}$.}
\Ensure{Scaled leverage score of $\r$ with respect to $\A$.}
\If{$\rank(\A)=\rank(\A\circ\r)$}
\State{\Return $2\r(\A^\top\A)^{-1}\r^\top$}
\Comment{In particular, {\color{black}$\score(\m_i,\M_t)$} in {\color{black}\algref{alg:sampling:framework}} is $2\m_i(\M_t^\top\M_t)\m_i^\top$.}
\Else
\State{\Return $1$}
\EndIf
\end{algorithmic}
\end{algorithm}

However, these quantities are related; \cite{CohenMP16} provide an asymptotic bound on the sum of the online ridge leverage scores of any matrix, which also implies a bound on the sum of the reverse online ridge leverage scores, by reversing the order of the rows in a matrix.
\begin{lemma}[Bound on Sum of Online Ridge Leverage Scores]
\cite{CohenMP16}
\lemlab{lem:online:space}
Let the rows of $\A=\a_1\circ\ldots\circ\a_n\in\mathbb{R}^{n\times d}$ arrive in a stream with condition number $\kappa$ and let $\ell_i$ be the online (ridge) leverage score of $\a_i$ with regularization $\lambda$. 
Then $\sum_{i=1}^n\ell_i=\O{d\log\frac{\norm{\A}_2}{\lambda}}$ for $\lambda>\sigma_{\min}(\A)$ and $\sum_{i=1}^n\ell_i=\O{d\log\kappa}$ for $\lambda\le\sigma_{\min}(\A)$. 
It follows that if $\tau_i$ is the reverse online ridge leverage score of $\a_i$, then $\sum_{i=1}^n\tau_i=\O{d\log\frac{\norm{\A}_2}{\lambda}}$ for $\lambda>\sigma_{\min}(\A)$ and $\sum_{i=1}^n\tau_i=\O{d\log\kappa}$ for $\lambda\le\sigma_{\min}(\A)$. 
\end{lemma}
From the definition, it is evident that the reverse online (ridge) leverage scores are monotonic; whenever a new row is added to $\A$, the scores of existing rows cannot increase. 
\begin{lemma}[Monotonicity of Reverse Online (Ridge) Leverage Scores]
\lemlab{lem:scores:monotonic}
For a matrix $\A=\a_1\circ\ldots\circ\a_n\in\mathbb{R}^{n\times d}$, $\lambda\ge0$, and $i\in[n]$, let $\tau_i(\A)$ denote the reverse online $\lambda$-ridge leverage score of row $\a_i$ with respect to $\A$ and let $\tau_i(\B)$ denote the reverse online $\lambda$-ridge leverage score of row $\a_i$ with respect to $\B:=\A\circ\r$ for any row $\r\in\mathbb{R}^{d}$. 
Then $\tau_i(\A)\ge\tau_i(\B)$. 
\end{lemma}
The proof follows immediately from \defref{def:rols} and the fact that $\A_i^\top\A_i+\lambda\I\preceq(\A_i^\top\A_i+\r^\top\r+\lambda\I)$ for any $\A_i=\a_1\circ\ldots\circ\a_i$ and vector $\r\in\mathbb{R}^{d}$. 

We also require the following version of the matrix Freedman concentration inequality:
\begin{theorem}[Matrix Freedman Inequality]
\cite{Tropp2011}
\thmlab{thm:freedman}
Let $\U_0,\ldots,\U_i\in\mathbb{R}^{d\times d}$ be a matrix martingale of symmetric matrices with difference sequence $\W_1,\ldots,\W_i$, where $\W_j=\U_j-\U_{j-1}$. 
If $\norm{\W_j}_2\le R$ for each $j\in[i]$ with high probability and $\norm{\sum_{j=1}^i\EEx{\W_j^2}}_2\le\sigma^2$, then for all $\eps>0$,
\[\PPr{\norm{\U_i}_2\ge\eps}\le d\cdot\exp\left(-\frac{\eps^2/2}{\sigma^2+R\eps/3}\right).\]
\end{theorem}

We now show that at any time $t$, \algref{alg:sampling:framework} using the $\score$ function of \algref{alg:l2:score} stores a matrix $\M_t$ whose rows with timestamp after a time $i\in[t]$ provides a $(1+\eps)$ spectral approximation to any matrix $\Z_i=\r_t\circ\ldots\circ\r_i$. 
This statement shows a good approximation to \emph{any} suffix of the stream at all times and in particular for $t=n$ and $i=n-W+1$, shows that \algref{alg:sampling:framework} using the $\score$ function of \algref{alg:l2:score} outputs a spectral approximation for the matrix induced by the sliding window model. 
\begin{lemma}[Spectral Approximation Guarantee, Bounds on Sampling Probabilities]
\lemlab{lem:l2:downsample}
Let $t\in[n]$, $\lambda\ge 0$ and $\eps>0$. 
For $i\in[t]$, let $q_i=\min(1, \alpha\cdot\r_i(\Z_{i+1}\top\Z_{i+1})^{-1}\r_i^\top)$, where $\Z_{i+1}=\r_t\circ\ldots\circ\r_{i+1}$. 
Then with high probability after the arrival of row $\r_t$, \algref{alg:sampling:framework} using the $\score$ function of \algref{alg:l2:score} will have sampled each row $\r_i$ with probability at least $q_i$ and probability at most $4q_i$. 
Moreover, if $\Y$ is the suffix of $\M_t$ consisting of the (scaled) rows whose timestamps are at least $i$, then 
\[(1-\eps)(\Z_i^\top\Z_i+\lambda\I)\preceq\Y^\top\Y+\lambda\I\preceq(1+\eps)(\Z_i^\top\Z_i+\lambda\I).\]
\end{lemma}
\begin{proof}
We assume $\eps\in\left(0,\frac{1}{2}\right)$ and give the proof by induction on $t$. 
For $t=1$ and ignoring the trivial case where $\r_1$ is the all zeros row, then the input consists of a single nonzero row $\r_1$ whose reverse online leverage score is $1$, so that \algref{alg:sampling:framework} using the $\score$ function of \algref{alg:l2:score} will store $\r_1$, which completes the base case. 

Now we suppose that the conditions hold for $t-1$ and prove they must also for $t$ with high probability. 
We show the conditions hold for each $i\in[t]$ by following the sampling process from $\r_t$ down to $\r_i$. 
We also assume that \algref{alg:sampling:framework} has sampled each row $\r_i$ with probability $\widehat{p_i}$ at least $\frac{1}{2}q_i$ and probability at most $2q_i$. 
Let $k=t-i+1$ and $\U_0,\ldots,\U_k\in\mathbb{R}^{d\times d}$ be a matrix martingale so that $\U_0$ is the all zeros matrix. 
We define $\W_j=\U_j-\U_{j-1}$ for each $j\in[k]$. 
For $j\ge1$, we set $\W_j$ to be the all zeros matrix if $\norm{\U_{j-1}}_2\ge\eps$ and otherwise if $\norm{\U_{j-1}}_2<\eps$, we set the random matrix variable
\[\W_j=
\begin{cases}
\left(\frac{1}{\widehat{p_j}}-1\right) \s_j^\top \s_j & \text{ if }\r_{t-j+1}\text{ is sampled in }\Y\\
- \s_j^\top \s_j  & \text{otherwise,}
\end{cases}
\]
where $\s_j := \r_{t-j+1}(\Z_i^\top\Z_i+\lambda\I)^{-1/2}$ for each $j\in[k]$. 

Thus, the difference sequence $\W_1,\ldots,\W_k$ defines 
\[\U_{j-1}=(\Z_i^\top\Z_i+\lambda\I)^{-1/2}(\Y_{j-1}^\top\Y_{j-1}-\Z_{j-1}^\top\Z_{j-1})(\Z_i^\top\Z_i+\lambda\I)^{-1/2},\]
where $\Y_{j-1}$ are the rows of $\Y$ with timestamp at least $j-1$. 
The predictable quadratic variation process of the martingale $\{\U_j\}$ is hence defined by $\sum_{q=1}^j\EEx{\W_q^2}$ for $1\le j\le k$. 

The remainder of the argument proceeds in the same manner as Lemma 3.3 of \cite{CohenMP16}. 
Specifically, we first note that since $\norm{\U_{j-1}}_2<\eps$ then either $\W_j$ is the all zeros matrix if $\widehat{p_j}=1$ or if $\widehat{p_j}<1$, then $\norm{\W_j}_2\le\frac{2}{\alpha}$ by the assumption that $\widehat{p_j}\ge\frac{1}{2}q_j$ and the definition of $q_j$.  
Thus $\EEx{\W_j^2}\preceq\frac{2}{\alpha}\s_j^\top\s_j$ and so we can use the $\W_j$ matrices to bound the spectral norm of the predictable quadratic variation process $\norm{\sum_{x=1}^j\EEx{\W_x^2}}_2$ of the martingale $\{\U_j\}$ by 
\begin{align*}
\norm{\sum_{x=1}^j\EEx{\W_x^2}}_2\le\norm{\sum_{x=1}^j\frac{2}{\alpha}\s_j^\top\s_j}_2\le\frac{2}{\alpha}\norm{\sum_{x=1}^j\r_{t-j+1}^\top(\Z_i^\top\Z_i+\lambda\I)^{-1}\r_{t-j+1}}_2\le\frac{2}{\alpha},
\end{align*}
where the last inequality follows from the fact that $\sum_{x=1}^j\r_{t-j+1}^\top\r_{t-j+1}=\Z_i^\top\Z_i$. 

By applying the Matrix Freedman inequality (\thmref{thm:freedman}), then it follows that $\norm{\U_k}_2<\eps$ with probability at least $1-d\cdot\exp\left(\frac{-\eps^2/2}{2/\alpha+2\eps/(3\alpha)}\right)$. 
Since $\alpha=\frac{C}{\eps^2}\log n$, then for sufficiently large $C$, $\norm{\U_k}_2<\eps$ with high probability so that
\[\norm{(\Z_i^\top\Z_i+\lambda\I)^{-1/2}(\Y^\top\Y+\lambda\I)(\Z_i^\top\Z_i+\lambda\I)^{-1/2}-\I}_2\le\eps,\]
which implies that
\[(1-\eps)(\Z_i^\top\Z_i+\lambda\I)\preceq\Y^\top\Y+\lambda\I\preceq(1+\eps)(\Z_i^\top\Z_i+\lambda\I),\]
which completes a single step of the second part of the claim.  

Finally, consider the sampling probability of row $\r_{i-1}$. 
Conditioned on $\Y$ being a $(1+\eps)$-spectral approximation to $\Z_i$, then $\r_{i-1}$ is in $\M_{t-1}$ with some probability $\gamma\le 1$ but has been rescaled to $\frac{1}{\sqrt{\gamma}}\r_{i-1}$. 
Crucially, the online ridge leverage scores are monotonic by \lemref{lem:scores:monotonic} so that $\gamma\ge\min(1,\alpha(\r_{i-1}(\Z_i^\top\Z_i+\lambda\I)^{-1}\r_{i-1}^\top))$. 
Although we only have a $(1+\eps)$-spectral approximation to $\Z_i$, the $\score$ function of \algref{alg:l2:score} increases the sampling probability by an extra factor of $2$ to compensate. 
Thus conditioned on $\r_{i-1}$ being in $\M_{t-1}$, then $\r_{i-1}$ remains in $\M_t$ with probability 
\[\min\left(1,\frac{2}{\gamma}\alpha(\r_{i-1}(\Y^\top\Y+\lambda\I)^{-1}\r_{i-1}^\top)\right).\]

Hence, the overall probability that $\r_{i-1}$ is retained is at most $\min(1,4\alpha(\r_{i-1}(\Z_i^\top\Z_i+\lambda\I)^{-1}\r_{i-1}^\top))$ and at least $\min(1,\alpha(\r_{i-1}(\Z_i^\top\Z_i+\lambda\I)^{-1}\r_{i-1}^\top))$ for $\eps<\frac{1}{2}$, which completes a single step of the first part of the claim. 
Thus, both parts of the claim hold by induction. 
\end{proof}

\begin{theorem}[Randomized Spectral Approximation Sliding Window Algorithm]
\thmlab{thm:sw:spectral}
Let $\r_1,\ldots,\r_n\in\mathbb{R}^{d}$ be a stream of rows and $\kappa$ be the condition number of the stream. 
Let $W>0$ be a window size parameter and $\A=\r_{n-W+1}\circ\ldots\circ\r_n$ be the matrix consisting of the $W$ most recent rows. 
Given a parameter $\eps>0$, there exists an algorithm that outputs a matrix $\M$ with a subset of (rescaled) rows of $\A$ such that $(1-\eps)\A^\top\A \preceq \M^\top\M\preceq (1+\eps)\A^\top\A$ and stores $\O{\frac{d}{\eps^2}\log n\log\kappa}$ rows at any time, with high probability.
\end{theorem}
\begin{proof}
The fact that $(1-\eps)\A^\top\A \preceq \M^\top\M\preceq (1+\eps)\A^\top\A$ holds immediately from \lemref{lem:l2:downsample} using $t=n$ and $i=n-W+1$. 
Moreover, \lemref{lem:l2:downsample} implies that each row $\r_i$ is sampled with probability at most $4\alpha\tau_i$, where $\tau_i$ is the reverse online leverage score of row $i$.   
By \lemref{lem:online:space}, we have $\sum_{i=1}^n\tau_i=\O{d\log\kappa}$ and since $\alpha=\O{\frac{1}{\eps^2}\log n}$, then the space complexity of the algorithm follows from a coupling argument and standard Chernoff bounds.
\end{proof}

\paragraph{Nearly Input Sparsity Runtime.} 
We remark that the amortized running time per arriving row can be improved by batching, i.e. processing $\frac{d}{\eps^2}\log n\log\kappa$ rows at a time.  
Since \algref{alg:sampling:framework} using the $\score$ function of \algref{alg:l2:score} stores $\O{\frac{d}{\eps^2}\log n\log\kappa}$ rows, the asymptotic space used by the algorithm will remain the same. 
Observe that it suffices to obtain some constant factor of the reverse online leverage score, since $\alpha$ can just be scaled accordingly. 

To compute a constant factor approximation of the reverse online leverage score of any sampled row $\a_i$, we use standard projection tricks \cite{SpielmanS11, CohenMP16, CohenMM17} embedding each of the $\O{\frac{d}{\eps^2}\log n\log\kappa}$ rows into a $\O{\log\frac{n}{\eps}}$ dimension subspace by applying a Johnson-Lindenstrauss transform. 
Applying this subspace embedding to each of the rows takes $\O{\log\frac{n}{\eps}\cdot\overline{\nnz}}$ time, where $\overline{\nnz}$ is the input sparsity of the batch. 
Subsequently, all operations are in $\O{\log\frac{n}{\eps}}$ dimension subspace, so approximating each reverse online leverage score requires $\polylog\frac{n}{\eps}$ time for each of the $\O{\frac{d}{\eps^2}\log n\log\kappa}$ scores. 
Hence, the amortized time across each batch of $\O{\frac{d}{\eps^2}\log n\log\kappa}$ rows is $\polylog\frac{n}{\eps}\cdot\overline{\nnz}$ so the total runtime is $\polylog\frac{n}{\eps}\cdot\nnz$, where $\nnz$ is the input sparsity of the stream. 
One might observe that the algorithm does not actually know $\kappa$ in advance, but for algorithmic purposes it suffices to downsample when the number of newly arrived rows in the batch equals the number of stored rows after the previous downsampling procedure.

\subsection{Low-Rank Approximation}
\seclab{sec:pcp}
In this section, we give a randomized algorithm for low-rank approximation in the sliding window model that is both space and time optimal, up to lower order terms. 
Throughout this section, we use $\A_{(k)}$ to denote the best rank $k$ approximation to a matrix $\A\in\mathbb{R}^{n\times d}$ so that $\A_{(k)}:=\argmin_{\rank(\X)\le k}\norm{\A-\X}_F^2$. 
Recall the definition of a projection-cost preservation in \defref{def:pcp}, from which it follows that obtaining a projection-cost preservation of $\A$ suffices to produce a low-rank approximation of $\A$. 
\cite{CohenMM17} show that an additive-multiplicative spectral approximation of a matrix $\A$ along with an additional moderate condition that holds for ridge leverage score sampling gives a projection-cost preservation of $\A$.
\begin{lemma}
\cite{CohenMM17}
\lemlab{lem:addmult:pcp}
Let $\A=\a_1\circ\ldots\circ\a_n\in\mathbb{R}^{n\times d}$ and $\lambda\le\frac{\norm{\A-\A_{(k)}}_F^2}{k}$. 
Let $p$ be the largest integer such that $\sigma_p(\A)^2\ge\lambda$ and let $\X=\A-\A_{(p)}$. 
Let $\S\in\mathbb{R}^{m\times n}$ be a sampling matrix so that $\M=\A\S$ is a subset of scaled rows of $\A$. 
If $(1-\eps)(\A^\top\A+\lambda\I)\preceq\M^\top\M+\lambda\I\preceq(1+\eps)(\A^\top\A+\lambda\I)$ and $\left|\norm{\S\X}_F^2-\norm{\X}_F^2\right|\le\eps\norm{\A-\A_{(k)}}_F^2$, then $\M$ is a rank $k$ projection-cost preservation of $\A$ with approximation parameter $24\eps$. 
\end{lemma}
We focus our discussion on the additive-multiplicatve spectral approximation since the same argument of \cite{CohenMM17} with Freedman's inequality rather than Chernoff bounds shows sampling matrices generated from ridge leverage scores satisfy the condition $\left|\norm{\S\X}_F^2-\norm{\X}_F^2\right|\le\eps\norm{\A-\A_{(k)}}_F^2$ with high probability, even when the entries of $\S$ are not independent. 
\begin{lemma}
\cite{CohenMM17}
\lemlab{lem:moderate:pcp}
Let $\A=\a_1\circ\ldots\circ\a_n\in\mathbb{R}^{n\times d}$, $\lambda=\frac{\norm{\A-\A_{(k)}}_F^2}{k}$, and $\tau_i$ be the ridge leverage score of $\a_i$ with regularization $\lambda$. 
Let $p$ be the largest integer such that $\sigma_p(\A)^2\ge\lambda$ and let $\X=\A-\A_{(p)}$. 
Let $\S\in\mathbb{R}^{m\times n}$ be a sampling matrix so that row $\a_i$ is sampled by $\S$, not necessarily independently, with probability at least $\min\left(1,\frac{C\tau_i}{\eps^2}\log n\right)$ for sufficiently large constant $C$. 
Then $\left|\norm{\S\X}_F^2-\norm{\X}_F^2\right|\le\eps\norm{\A-\A_{(k)}}_F^2$ with high probability. 
\end{lemma}

On the other hand, our space analysis in \secref{sec:l2:randomized} relied on bounding the sum of the online leverage scores by $\O{d\log\kappa}$ through \lemref{lem:online:space}; a better bound is not known if we set $\lambda=\frac{\norm{\A-\A_{(k)}}_F^2}{k}$. 
This gap provides a barrier for algorithmic design not only in the sliding window model but also in the online model. 
We show a tighter analysis showing that the sum of the online ridge leverage scores for $\lambda=\frac{\norm{\A-\A_{(k)}}_F^2}{k}$ is $\O{k\log\kappa}$. 

Now if we knew the value of $\frac{\norm{\A-\A_{(k)}}_F^2}{k}$ a priori, we could set $\lambda\le\frac{\norm{\A-\A_{(k)}}_F^2}{k}$ and immediately apply \lemref{lem:l2:downsample} to show that the output of \algref{alg:sampling:framework} with a $\score$ function that uses the $\lambda$ regularization outputs a matrix $\M$ that is a rank $k$ projection-cost preservation of $\A$. 

Initially, even a constant factor approximation to $\frac{\norm{\A-\A_{(k)}}_F^2}{k}$ seems challenging because the quantity is not smooth. 
This issue can be circumvented using additional procedures, such as spectral approximation on rows with reduced dimension~\cite{CohenEMMP15}. 
Even simpler, observe that sampling with any regularization factor $\lambda<\frac{\norm{\A-\A_{(k)}}_F^2}{k}$ would still provide the guarantees of \lemref{lem:addmult:pcp}. 

We could set $\lambda=0$ and still obtain a rank $k$ projection-cost preservation of $\A$, but larger values of $\lambda$ correspond to smaller number of sampled rows and the total number of sampled rows for $\lambda=0$ would be proportional to $d$, as opposed to our goal of $k$.  
Instead, observe that if $\B$ is any prefix or suffix of rows of $\A$, then $\norm{\B-\B_{(k)}}_F^2\le\norm{\B-\B_{(k)}}_F^2$. 
In other words, we can use the rows that have already been sampled to give a constant factor approximation to $\norm{\A-\A_{(k)}}_F^2$ as it evolves, i.e., as more rows of $\A$ arrive. 
We \emph{again} pay for the underestimate to $\norm{\A-\A_{(k)}}_F^2$ by sampling an additional number of rows, but we show that we cannot sample too many rows before our approximation to $\norm{\A-\A_{(k)}}_F^2$ doubles, which only incurs an additional $\O{\log\kappa}$ factor in the number of sampled rows. 
We give the $\score$ function for low-rank approximation in \algref{alg:pcp:score}. 

\begin{algorithm}[t]
\caption{$\score(\r,\A)$ function for rank $k$ projection-cost preservation}
\alglab{alg:pcp:score}
\begin{algorithmic}[1]
\Require{A row $\r\in\mathbb{R}^d$ and a matrix $\A\in\mathbb{R}^{m\times d}$.}
\Ensure{Scaled ridge leverage score of $\r$ with respect to $\A$.}
\State{$\lambda\gets\frac{1}{k}\norm{\A-\A_{(k)}}_F^2$}
\If{$\lambda\neq 0$ or $\rank(\A)=\rank(\A\circ\r)$}
\State{\Return $2\r(\A^\top\A+\lambda\I)^{-1}\r^\top$}
\Else
\State{\Return $1$}
\EndIf
\end{algorithmic}
\end{algorithm}

We first bound the probability that each row is sampled, analogous to \lemref{lem:l2:downsample}. 
\begin{lemma}[Projection-Cost Preservation Guarantee, Bounds on Sampling Probabilities]
\lemlab{lem:pcp:downsample}
Let $t\in[n]$ be fixed and for each $i\in[t]$, let $\Z_i=\r_t\circ\ldots\circ\r_i$. 
Let $\lambda=\frac{\norm{\Z_{i+1}-(\Z_{i+1})_{(k)}}_F^2}{k}$ and $\eps>0$. 
Let $q_i=\min(1, \alpha\cdot\r_i(\Z_{i+1}\top\Z_{i+1}+\lambda\I)^{-1}\r_i^\top)$. 
Then with high probability after the arrival of row $\r_t$, \algref{alg:sampling:framework} using the $\score$ function of \algref{alg:pcp:score} will have sampled row $\r_i$ with probability at least $q_i$ and probability at most $2q_i$. 
Moreover, if $\Y$ is the suffix of $\M_t$ consisting of the (scaled) rows whose timestamps are at least $i$, then 
\[(1-\eps)(\Z_i^\top\Z_i+\lambda\I)\preceq\Y^\top\Y+\lambda\I\preceq(1+\eps)(\Z_i^\top\Z_i+\lambda\I).\]
\end{lemma}
\begin{proof}
Consider \algref{alg:sampling:framework} using the $\score$ function of \algref{alg:pcp:score}. 
We again assume $\eps\in\left(0,\frac{1}{2}\right)$ and prove the statement by induction. 
Recall the convention that if $\lambda=0$ and $\r_1$ is nonzero, then the online leverage score of $\r_1$ is $1$ rather than $0$. 
Thus, the base case holds for $t=1$ because any nonzero will be sampled. 

Now suppose the claim holds for $t-1$ and all indices $i\in[t-1]$ after the arrival of row $\r_{t-1}$. 
We show it holds for $i=t$ down to $i=1$ after the arrival of row $\r_t$. 
Consider a fixed $i\in[t]$. 
By \lemref{lem:addmult:pcp} and \lemref{lem:moderate:pcp}, the additive-multiplicative spectral guarantee of the claim along with the arrival of row $\r_t$ gives a $(1+24\eps)$ approximation to $\norm{\Z_{i+1}-(\Z_{i+1})_{(k)}}_F^2$. 
Note that the error does not compound since the $\score$ function scales the sampling probability by $2$ and thus the $\score$ function of \algref{alg:pcp:score} effectively computes the reverse online ridge leverage score of each row $\r_j$ with $j\ge t-i+1$ using an underestimate of $\lambda=\frac{\norm{\Z_{i+1}-(\Z_{i+1})_{(k)}}_F^2}{k}$. 
The matrix martingale argument in \lemref{lem:l2:downsample} then shows that 
\[(1-\eps)(\Z_i^\top\Z_i+\lambda\I)\preceq\Y^\top\Y+\lambda\I\preceq(1+\eps)(\Z_i^\top\Z_i+\lambda\I),\]
which completes a single step of the second part of the claim. 
By \lemref{lem:addmult:pcp} and \lemref{lem:moderate:pcp}, it follows that $\norm{\Y-\Y_{(k)}}_F^2$ is a $(1+\O{\eps})$-approximation to $\norm{\Z_{i+1}-(\Z_{i+1})_{(k)}}_F^2$. 
The bounds on the sampling probability of row $\r_i$ then follows from the fact that the reverse online ridge leverage scores are monotonic by \lemref{lem:scores:monotonic}. 
Though we only have a $(1+\eps)$ additive-multiplicative spectral approximation to $\Z_i$, which translates to a $(1+24\eps)$ approximation of the regularization, the $\score$ function of \algref{alg:pcp:score} has again compensated by increasing the sampling probability by a factor of $2$. 
The entire claim then follows by induction. 
\end{proof}

We now give a tighter analysis of the sum of the online $\lambda$-ridge leverage scores $l_i$ for $\lambda=\frac{\norm{\A-\A_{(k)}}^2_2}{k}$. 
We require an upper bound from \cite{CohenMP16} on the sum of the online $\lambda$-ridge leverage scores, proven using the matrix determinant lemma.
\begin{lemma}\cite{CohenMP16}
\lemlab{lem:det:lower}
For a matrix $\A=\a_1\circ\ldots\circ\a_n\in\mathbb{R}^{n\times d}$, let $l_i$ denote the online $\lambda$-ridge leverage score of $\a_i$, for each $i\in[n]$. 
Then $\det(\A^\top\A+\lambda\I)\ge\lambda^d e^{\sum l_i/2}$. 
\end{lemma}
 
\begin{lemma}[Bound on Sum of Online Ridge Leverage Scores]
\lemlab{lem:pcp:scores}
Let $\A\in\mathbb{R}^{n\times d}$ have condition number $\kappa$. 
Let $\beta\ge 1$, $k\ge 1$ be constants and $\lambda=\frac{\norm{\A-\A_{(k)}}^2_2}{\beta k}$. 
Then $\sum_{i=1}^n l_i=\O{k\log\kappa}$.
\end{lemma}
\begin{proof}
Since $\A^\top\A+\lambda\I \succeq \lambda \I$, let $\sigma_1\ge\sigma_2\ge\ldots\ge\sigma_d\ge\lambda$ be the singular values of $\A^\top\A+\lambda\I$. 
Then $\det(\A^\top\A+\lambda\I)=\prod_{i=1}^d\sigma_i$. 
Observe that $\sigma_{k+1}+\ldots+\sigma_d=\norm{\A-\A_{(k)}}^2_F+(d-k)\lambda$ by the Eckart-Young-Mirsky theorem. 
By the AM-GM inequality, we have 
\[\prod_{i=k+1}^d\sigma_i \leq \paren{\frac{\norm{\A-\A_{(k)}}^2_F+(d-k)\lambda}{d-k}}^{d-k}.\]
\noindent
Combining with the fact that $\sigma_i\le\norm{\A}_2^2+\lambda$ for $1\le i\le k$, we have for $\lambda=\frac{\norm{\A-\A_{(k)}}_F^2}{\beta k}$,
\begin{align*}
\det(\A^\top\A+\lambda\I) &=\prod_{i=1}^d\sigma_i \le(\norm{\A}_2^2+\lambda)^k\left(\frac{\norm{\A-\A_{(k)}}^2_F}{d-k}+\lambda\right)^{d-k}\\
&\le(\norm{\A}_2^2+\lambda)^k\lambda^{d-k}\left(\frac{\beta k}{d-k}+1\right)^{d-k}\le(2\norm{\A}_F^2)^k\lambda^{d-k}e^{\beta k}.
\end{align*}
Combining with \lemref{lem:det:lower} and taking logarithms, it follows that $d\log\lambda+\sum\frac{l_i}{2}\le k+2k\log2\norm{\A}_F+(d-k)\log\lambda+\beta k$. 
Since $\lambda=\frac{\norm{\A-\A_{(k)}}_F^2}{\beta k}$, then $\sum_{i} l_i\le 2(1+\beta)k+4k\log2\kappa$. 
\end{proof}
The sum of the reverse online $\lambda$-ridge leverage scores is bounded by the same quantity, since the rows of the input matrix can simply be considered in reverse order. 
We now show that \algref{alg:sampling:framework} using the $\score$ function of \algref{alg:pcp:score} gives a relative error low-rank approximation with efficient space usage. 

\begin{theorem}[Randomized Low-Rank Approximation Sliding Window Algorithm]
\thmlab{thm:sw:pcp}
Let $\r_1,\ldots,\r_n\in\mathbb{R}^{d}$ be a stream of rows and $\kappa$ be the condition number of the matrix $\r_1\circ\ldots\circ\r_n$. 
Let $W>0$ be a window size parameter and $\A=\r_{n-W+1}\circ\ldots\circ\r_n$ be the matrix consisting of the $W$ most recent rows. 
Given a parameter $\eps>0$, there exists an algorithm that outputs a matrix $\M$ that is a $(1+\eps)$ rank $k$ projection-cost preservation of $\A$ and stores $\O{\frac{k}{\eps^2}\log n\log^2\kappa}$ rows at any time, with high probability.
\end{theorem}
\begin{proof}
Let $\lambda=\frac{\norm{\A-\A_{(k)}}_F^2}{k}$. 
Then $(1-\eps)(\A^\top\A+\lambda\I) \preceq\M^\top\M+\lambda\I\preceq (1+\eps)(\A^\top\A+\lambda\I)$ holds immediately from \lemref{lem:pcp:downsample} using $t=n$ and $i=n-W+1$. 
Thus by \lemref{lem:addmult:pcp}, \lemref{lem:moderate:pcp}, and rescaling $\eps$, it follows that $\M$ is a $(1+\eps)$ rank $k$ projection-cost preservation of $\A$. 

To bound the number of sampled rows, observe that we can sample at most $k$ linearly independent rows before the regularization becomes nonzero. 
Then by \lemref{lem:pcp:downsample} each row $\r_i$ is sampled with probability at most $\min(1,2c\cdot\r_i(\Z_{i+1}^\top\Z_{i+1}+\lambda_i\I)^{-1}\r_i^\top)$, where $\Z_i=\r_i\circ\ldots\circ\r_n$ and $\lambda_i=\frac{\norm{\Z_{i+1}-(\Z_{i+1})_{(k)}}_F^2}{k}$ for each $i\in[n]$. 
For the purposes of analysis, set $u_0=n$ and for each $i\ge1$, let $u_i$ be the largest index $j$ such that $\lambda_j>2\lambda_{u_{i-1}}$. 
This partitions the stream into breakpoints $u_i$ starting from the most recent row, so that for each $u_{i+1}<t\le u_i$, we have $\lambda_{u_i}\le\lambda_t<2\lambda_{u_i}$. 
Thus by \lemref{lem:pcp:scores}, the sum of the reverse online ridge leverage scores for the rows $\r_t$ with $u_{i+1}<t\le u_i$ is $\O{k\log\kappa}$. 
\algref{alg:sampling:framework} scales the reverse online ridge leverage score by a factor of $\alpha=\O{\frac{1}{\eps^2}\log n}$ to determine the sampling probability, so by a coupling argument and standard Chernoff bounds, the number of sampled rows $\r_t$ with $u_{i+1}<t\le u_i$ is $\O{\frac{k}{\eps^2}\log n\log\kappa}$. 
Since there are $\O{\log\kappa}$ such breakpoints $u_i$, the total number of sampled rows is $\O{\frac{k}{\eps^2}\log n\log^2\kappa}$.
\end{proof}

\paragraph{Nearly Input Sparsity Runtime.}
Nearly input sparsity runtime can be achieved through similar batching and dimensionality reduction techniques as \secref{sec:l2:randomized}. 
Since \algref{alg:sampling:framework} using the $\score$ function of \algref{alg:pcp:score} stores $\O{\frac{k}{\eps^2}\log n\log^2\kappa}$ rows, the asymptotic space used by the algorithm will remain the same if it processes $\O{\frac{k}{\eps^2}\log n\log^2\kappa}$ rows at a time. 
To compute a constant factor approximation of the reverse online ridge leverage score of any sampled row $\a_i$, each of the $\O{\frac{k}{\eps^2}\log n\log^2\kappa}$ rows into a $\O{\log\frac{n}{\eps}}$ dimension subspace by applying a Johnson-Lindenstrauss transform~\cite{SpielmanS11, CohenMP16, CohenMM17}. 
Applying this subspace embedding to each of the rows takes $\O{\log\frac{n}{\eps}\cdot\overline{\nnz}}$ time, where $\overline{\nnz}$ is the input sparsity of the batch. 
Subsequently, all operations are in $\O{\log\frac{n}{\eps}}$ dimension subspace, so approximating the each regularization parameter $\lambda$ and each reverse online leverage score requires $\polylog\frac{n}{\eps}$ time for each of the $\O{\frac{k}{\eps^2}\log n\log^2\kappa}$ scores. 
Hence, the total runtime is $\polylog\frac{n}{\eps}\cdot\nnz$, where $\nnz$ is the input sparsity of the stream. 
Again the algorithm need not know $\kappa$ in advance since it suffices to process a batch when the number of newly arrived rows in the batch equals the number of stored rows after the process.

\paragraph*{Bounded Precision and Condition Number.}
In typical applications, the entries of the underlying matrix will have some bounded precision, say $\O{\log n}$ bits. 
From invariance through scaling the entries, we assume that the entries of the matrix are integers whose magnitudes are bounded by $\poly(n)$. 
Nevertheless, the condition number $\kappa$ of the underlying matrix can be as large as $\poly(n)^d$, which can cause the $\log\kappa$ term to incur an additional factor of $\O{\d\log n}$. 
We show this is not necessary by reducing the $\log\kappa$ factor down to $\log n$ in some cases and removing the $\log\kappa$ factor altogether in other cases. 
We first relate matrices $\A$ with bounded entries and sufficiently large rank with a bound on $\frac{\norm{\A}_F}{\norm{\A-\A_{(k)}}_F}$. 
\begin{lemma}
\cite{ClarksonW09}
\lemlab{lem:rank:condition}
If $\A\in\mathbb{R}^{n\times d}$ has integer entries bounded in magnitude by $\poly(n)$ and rank at least $2k$, then $\frac{\norm{\A}_F}{\norm{\A-\A_{(k)}}_F}\le\poly(n,d)$ as $nd\to\infty$. 
\end{lemma}
By \lemref{lem:rank:condition}, we have $\O{\log\kappa}=\O{\log n}$ if $\rank(\A)\ge 2k$. 
Moreover, we do not incur the extra $\O{\log n}$ factor from Chernoff bounds to achieve high probability over all times in the stream, so the total number of sampled rows is $\O{\frac{k}{\eps^2}\log^2 n}$. 
Thus it suffices to maintain a rank $k$ projection-cost preservation in the case that $\rank(\A)\le 2k$. 

Observe that we can decompose $\A^\top\A=\R^\top\U\R$, where $\R\in\mathbb{R}^{2k\times d}$ is the row span of $\A$.  
Now the row span of any suffix of $\A$ will be contained within $\R$. 
Thus it suffices to maintain a sequence of matrices $\U_1,\U_2,\ldots\in\mathbb{R}^{2k\times 2k}$ so that $\R^\top\U_1\R,\R^\top\U_2\R,\ldots$ provides a $(1+\eps)$ spectral approximation to all suffixes of $\A$.  
We give the algorithm in \algref{alg:pcp:low} and the guarantees in \thmref{thm:pcp:low}, but defer the full proof to \appref{app:histogram}, as it is a special case of a more general data structure that we introduce for deterministic algorithms for numerical linear algebra in the sliding window model. 
\begin{algorithm}[!htb]
\caption{Projection-cost preservation for low-rank matrices in the sliding window model}
\alglab{alg:pcp:low}
\begin{algorithmic}[1]
\Require{A stream of rows $\r_1,\ldots,\r_n\in\mathbb{R}^d$, window size $W$, and an accuracy parameter $\eps>0$}
\Ensure{Rank $k$ projection-cost preservation in the sliding window model.}
\State{$\M_0\gets 0^{d\times d}$}
\For{each row $\r_t$}
\State{Suppose $\M_0,\M_1,\ldots,\M_s$ are defined}
\State{$\R_{s+1}\gets\r_t$, $\M_{s+1}\gets\R_{s+1}^\top\R_{s+1}$}
\Comment{Keep timestamp and decomposition for $\M_{s+1}$}
\For{$i=s$ to $i=1$}
\Comment{Update sketches with $\r_t$}
\State{Let $\M_i=\R_i^\top\U_i\R_i$ and $\B_i$ be a basis for $\R_i\setminus\R_{i+1}$.}
\State{$\R_i\gets\R_{i+1}\circ\B_i$}
\Comment{Ensures $\R_{i+1}\subseteq\R_i$}
\State{Set $\U_i$ so that $\R_i^\top\U_i\R_i=\M_i+\r_t^\top\r_t$.}
\EndFor
\For{$i=s$ to $i=2$}
\If{$\M_{i-1}^\top\M_{i-1}\preceq(1+\eps)\M_{i+1}^\top\M_{i+1}$}
\State{Delete $\M_i$ and relabel indices}
\EndIf
\EndFor
\If{$\M_2$ has expired}
\Comment{Timestamp is at most $t-W+1$}
\State{Delete $\M_1$ and relabel indices}
\EndIf
\EndFor
\State{\Return $\M_1$}
\end{algorithmic}
\end{algorithm}
The key observation is that a new matrix $\U_i$ is needed only when some singular value of $\A$ increase by $(1+\eps)$. 
Thus the number of matrices $\U_i$ that are needed is $\O{\frac{1}{\eps}\log\kappa}$, where $\kappa$ is the condition number of $\A$. 
Since the entries of $\A$ are bounded integers, then the characteristic polynomial of $\A^\top\A$ has integer coefficients bounded by $(\poly(n))^{\O{k}}$. 
Similarly, the largest eigenvalue of $\A^\top\A$ is bounded by its Frobenius norm, which is bounded by $\poly(n)$, so then $\log\kappa=\O{k\log n}$.
As each matrix $\U_i$ has dimension $2k\times 2k$, the space bound follows. 
\newcommand{\thmpcplow}{There exists a deterministic algorithm in the sliding window model that outputs a rank $k$ projection-cost preservation of an input matrix $\A$ whose rank is at most $2k$. 
If the entries of $\A$ are integers that are bounded in magnitude by $\poly(\A)$, then this algorithm stores $\O{k}$ rows of $\A$ and uses $\O{\frac{k^4}{\eps}\log n}$ additional words of space.}
\begin{theorem}
\thmlab{thm:pcp:low}
\thmpcplow
\end{theorem}
Given \thmref{thm:pcp:low} to handle $\rank(\A)\le 2k$ and \lemref{lem:rank:condition} to handle $\rank(\A)\ge 2k$ along with \thmref{thm:sw:pcp} and the previous observation that we no longer need to incur an extra $\O{\log n}$ factor for Chernoff bounds to achieve high probability over all times in the stream, then we have the following guarantees for low-rank approximation/projection-cost preservation when the underlying matrix has integer entries that are bounded in magnitude by $\poly(n)$. 
\begin{corollary}
\thmlab{thm:sw:pcp:bounded}
Let $\r_1,\ldots,\r_n\in\mathbb{R}^{d}$ be a stream of rows of integers whose magnitudes are at most $\poly(n)$.  
Let $W>0$ be a window size parameter and $\A=\r_{n-W+1}\circ\ldots\circ\r_n$ be the matrix consisting of the $W$ most recent rows. 
Given a parameter $\eps>0$, there exists an algorithm that with high probability, outputs a matrix $\M$ that is a $(1+\eps)$ rank $k$ projection-cost preservation of $\A$, stores $\O{\frac{k}{\eps^2}\log^2 n}$ rows at any time, and uses $\O{\frac{k^4}{\eps}\log n}$ additional words of space. 
\end{corollary}
\section{Simple Rank Constrained Algorithms in the Online Model}
\seclab{sec:online}
In this section, we show that the paradigm of row sampling with respect to online ridge leverage scores offers simple analysis for a number of online algorithms that improve upon the state-of-the-art. 

\subsection{Online Projection-Cost Preservation}
\seclab{sec:online:pcp}
As a warm-up, we first demonstrate how our previous analysis bounding the sum of the online ridge leverage scores can be applied to analyze a natural online algorithm for producing a projection-cost preservation of a matrix $\A=\a_1\circ\ldots\circ\a_n\in\mathbb{R}^{n\times d}$.  
Our algorithm samples each row with probability equal to the online ridge leverage scores, where the regularization parameter $\lambda_i$ is computed at each step. 
Note that if $\A_i=\a_1\circ\ldots\circ\a_i$, then $\norm{\A_i-(\A_i)_{(k)}}_F^2\le\norm{\A_j-(\A_j)_{(k)}}_F^2$ for any $i<j$. 
Thus if $\lambda_i=\frac{\norm{\A_{i-1}-(\A_{i-1})_{(k)}}_F^2}{k}$, then sampling row $\a_i$ with online ridge leverage score regularized by $\lambda_i$ has a \emph{higher} probability than with ridge leverage score regularized by $\lambda=\frac{\norm{\A-\A_{(k)}}_F^2}{k}$. 
Although~\cite{CohenMP16} is only interested in spectral approximation and therefore set $\lambda=\eps\sigma_{\min}(\A)$, they nevertheless shows that online row sampling with any regularization $\lambda$ gives an additive-multiplicative spectral approximation to $\A$. 
Thus by setting $\lambda=\frac{\norm{\A-\A_{(k)}}_F^2}{k}$, our algorithm outputs a rank $k$ projection-cost preservation of $\A$ by \lemref{lem:addmult:pcp} and \lemref{lem:moderate:pcp}. 
Moreover, our bounds for the sum of the online ridge leverage scores in \lemref{lem:pcp:scores} show that our algorithm only samples a small number of rows, optimal up to lower order factors. 

\begin{algorithm}[!htb]
\caption{$\onlinepcp$: Online algorithm for projection-cost preservation}
\alglab{alg:online:pcp}
\begin{algorithmic}[1]
\Require{Stream of rows $\a_1,\ldots,\a_n\in\mathbb{R}^{1\times d}$, accuracy $\eps>0$, and parameter $k$.}
\Ensure{Rank $k$ projection-cost preservation of $\A:=\a_1\circ\ldots\circ\a_n$.}
\State{$\M\gets\emptyset$.}
\State{$\alpha\gets\frac{C}{\eps^2}\log n$ with sufficiently large constant $C>0$}
\For{each row $\a_t$}
\State{$\tilde{\lambda}_t\gets\frac{\norm{\M-\M_{(k)}}_F^2}{2k}$}
\State{$\tau_t\gets 2\a_t(\M^\top\M+\tilde{\lambda}_t\I)^{-1}\a_t^\top$}
\State{$p_t\gets\min(1,\alpha\tau_t)$}
\State{With probability $p_t$, $\M\gets\M\circ\frac{\a_t}{\sqrt{p_t}}$}
\EndFor
\State{\Return $\M$.}
\end{algorithmic}
\end{algorithm}

\begin{theorem}[Online Rank $k$ Projection-Cost Preservation]
\thmlab{thm:online:pcp}
Given parameters $\eps>0$, $k>0$, and a matrix $\A\in\mathbb{R}^{n\times d}$ whose rows $\a_1,\ldots,\a_n$ arrive sequentially in a stream with condition number $\kappa$, there exists an online algorithm that outputs a matrix $\M$ with $\O{\frac{k}{\eps^2}\log n\log^2\kappa}$ (rescaled) rows of $\A$ such that
\[(1-\eps)\norm{\A-\A_{(k)}}_F^2\le\norm{\M-\M_{(k)}}_F^2\le(1+\eps)\norm{\A-\A_{(k)}}_F^2,\]
and thus $\M$ is a rank $k$ projection-cost preservation of $\A$, with high probability. 
\end{theorem}
\begin{proof}
Consider \algref{alg:online:pcp}. 
For all $i\in[n]$, let $\M_i$ denote the matrix $\M$ at time $i$, which is used to compute $\lambda_{i+1}$. 
Let $\lambda=\frac{\norm{\A-\A_{(k)}}_F^2}{k}$. 
\cite{CohenMP16} show that sampling row $\a_i$ with probability at least $\min(1,\alpha\a_i(\M_{i-1}^\top\M_{i-1}+\lambda\I)^{-1}\a_i^\top)$ suffices to give a matrix $\M$ such that $(1-\eps)(\A^\top\A+\lambda\I)\preceq\M^\top\M+\lambda\I\preceq(1+\eps)(\A^\top\A+\lambda\I)$ with high probability. 
This argument also follows from the matrix martingale analysis in \lemref{lem:l2:downsample}, which is based on their argument. 
Since we sample each row $\a_i$ with probability $\min(1,2\alpha\a_i(\M_{i-1}^\top\M_{i-1}+\tilde{\lambda}_i\I)^{-1}\a_i^\top)$ instead, it suffices to show that $\tilde{\lambda}_i\le 2\lambda$. 
But this must hold since by a union bound, $(1-\eps)(\A_{i-1}^\top\A_{i-1}+\lambda_i\I)\preceq\M_{i-1}^\top\M_{i-1}+\lambda_i\I\preceq(1+\eps)(\A_{i-1}^\top\A_{i-1}+\lambda_i\I)$ for all $i\in[n]$. 
In particular, this implies by \lemref{lem:addmult:pcp} and \lemref{lem:moderate:pcp} that $\M_i$ is a rank $k$ projection-cost preservation for $\A_i$ and thus, \[\tilde{\lambda}_i=\frac{\norm{\M_{i-1}-(\M_{i-1})_{(k)}}_F^2}{k}\le\frac{2\norm{\A_{i-1}-(\A_{i-1})_{(k)}}_F^2}{k}\le\frac{2\norm{\A-\A_{(k)}}_F^2}{k}=2\lambda.\] 

To bound the number of rows sampled by $\M$, we use a similar approach to that in \thmref{thm:sw:spectral}. 
Let $u_0=1$ and for each $i\ge 1$, let $u_i$ be the smallest index $j$ such that $\lambda_j>2\lambda_{u_{i-1}}$, which partitions the stream into breakpoints $u_i$ starting from the beginning of the stream. 
For each $u_{i+1}<t\le u_i$, we have $\lambda_{u_i}\le\lambda_t<2\lambda_{u_i}$. 
Thus by \lemref{lem:pcp:scores}, the sum of the online ridge leverage scores for the rows $\a_t$ with $u_i\le t<u_{i+1}$ is $\O{k\log\kappa}$. 
\algref{alg:online:pcp} scales the online ridge leverage score by a factor of $\alpha=\O{\frac{1}{\eps^2}\log n}$ to determine the sampling probability, so by a coupling argument and standard Chernoff bounds, the number of sampled rows $\a_t$ in $\M$ with $u_i\le t<u_{i+1}$ is $\O{\frac{k}{\eps^2}\log n\log\kappa}$. 
Since there are $\O{\log\kappa}$ such breakpoints $u_i$, the total number of rows in $\M$ is $\O{\frac{k}{\eps^2}\log n\log^2\kappa}$.
Note that here we partition the stream from the beginning since we use online ridge leverage scores, whereas the argument in \thmref{thm:sw:spectral} forms the breakpoints beginning from the end of the stream since it considers \emph{reverse} online ridge leverage scores. 
\end{proof}
Our algorithm uses space optimal up to lower order terms for low-rank approximation, but we note that more efficient algorithms exist if the objective is only to approximate $\norm{\A-\A_{(k)}}_F^2$, rather than determining the entire projection matrix.  
For example, the Theorem 1.3 in \cite{AndoniN13} uses space roughly $\O{k^2}$ whereas \algref{alg:online:pcp} uses space roughly $\O{kd}$ to store each of the $\tO{k}$ sampled rows.  

\subsection{Online Row Subset Selection}
\seclab{sec:online:rss}
We next describe how to perform row subset selection in the online model. 
The algorithm and analysis are both relatively simple, but the algorithm only stores $\O{\frac{k}{\eps}\log n\log^2\kappa}$ rows. 
By comparison, the recent online row subset selection algorithm of~\cite{BhaskaraLVZ19} stores $\O{\frac{k}{\eps^2}\log n\log^2\kappa}$ rows to succeed with high probability. 
Moreover, our algorithm provides the stronger guarantee of the existence of a subset of $k$ rows that provides a $(1+\eps)$ factor approximation to the best rank $k$ solution. 

Our starting point is an offline algorithm by \cite{CohenMM17} and the paradigm of adaptive sampling~\cite{DeshpandeV06, DeshpandeRVW06, MahabadiRWZ20}, which is the procedure of repeatedly sampling rows of $\A$ with probability proportional to their squared distances to the subspace spanned by $\Z$. 
\cite{CohenMM17} first obtains a matrix $\Z$ that is a constant factor low-rank approximation to the underlying matrix $\A$ through ridge-leverage score sampling. 
They observe that a theorem by \cite{DeshpandeRVW06} shows that adaptive sampling $\O{\frac{k}{\eps}}$ additional rows $\S$ of $\A$ against the rows of $\Z$ suffices for $\Z\cup\S$ to contain a $(1+\eps)$ factor approximation to the online row subset selection problem. 

\cite{CohenMM17} then adapts this approach to the streaming model by maintaining a reservoir of $\O{\frac{k}{\eps}}$ rows, and replacing rows appropriately as new rows arrive and more information about $\Z$ is obtained. 
Alternatively, we modify the proof of \cite{DeshpandeRVW06} if $\Z$ is given, adaptive sampling can also be performed on data streams to obtain a different but valid $\S$ by oversampling each row of $\A$ by a $\O{\frac{k}{\eps}}$ factor. 
Moreover by running a low-rank approximation algorithm in parallel, downsampling can be performed as rows of $\Z$ arrive, so that the above approach can be performed in one stream. 

In the online model, we cannot downsample rows of $\S$ once they are selected, since $\Z$ evolves as the stream arrives.  
Fortunately, \cite{CohenMM17} shows that the adaptive sampling probabilities can be upper bounded by the $\lambda$-ridge leverage scores, where $\lambda=\frac{1}{k}\norm{\A-\A_{(k)}}_F^2$. 
Since the $\lambda$-ridge leverage scores are at most the online $\lambda$-ridge leverage scores, we can again sample rows proportional to their online $\lambda$-ridge leverage scores. 
It then suffices to again use \lemref{lem:pcp:scores} to bound the number of rows sampled in this manner. 

We first show an analog for Theorem 2.1 in \cite{DeshpandeRVW06} for adaptive sampling through oversampling each row by a $\O{\frac{k}{\eps}}$ factor. 
The proof is almost verbatim, but uses a different estimator since we do not perform offline sampling with replacement. 
As before, we use $\A_{(k)}$ to denote the best rank $k$ approximation to $\A$ in the Frobenius norm. 
\begin{theorem}
\thmlab{thm:adaptive:sampling}
Let $\A\in\mathbb{R}^{n\times d}$, $\Z$ be a set of vectors in $\mathbb{R}^d$, $k$ be a non-negative integer, and $\eps>0$ be a given constant.  
Let $\S$ be a matrix formed by selecting each row $i$ of $\A$ independently with probability at least
\[p_i=\min\left(1,\frac{k}{\eps}\frac{\norm{(\A-\A\Z^{-1}\Z)_i}_2^2}{\sum_{j=1}^n\norm{(\A-\A\Z^{-1}\Z)_j}_2^2}\right).\]
Then there exists a matrix $\T$ that contains $k$ rows of $\Z\cup\S$ such that
\[\EEx{\norm{\A-\A\T^{-1}\T}_F^2}\le\norm{\A-\A_{(k)}}_F^2+\eps\norm{\A-\A\Z^{-1}\Z}_F^2.\]
\end{theorem}
\begin{proof}
Define matrices $\U$ and $\V$ to be the left and right singular vectors of $\A$ so that $\A=\U\Sigma\V$. 
Let $\u^{(i)}\in\mathbb{R}^{1\times n}$ denote the transpose of the $i\th$ column of $\U$ and $\v^{(i)}\in\mathbb{R}^{1\times d}$ denote the $i\th$ row of $\V$. 
Denote the singular values of $\A$ by $\sigma_1\ge\sigma_2\ge\ldots$. 
We show there exist $\t_1,\ldots,\t_k$ in the span of $\Z\cup\T$ that are a good approximation of the span of top right singular vectors $\v^{(1)},\ldots,\v^{(k)}$. 

For $i\in[n]$ and $j\in[k]$, let $\x_i^{(j)}=\frac{\u_i^{(j)}}{p_i}(\A-\A\Z^{-1}\Z)_i$ with probability $p_i$ and the all zeros row vector otherwise so that $\x_i^{(j)}\in\mathbb{R}^{1\times d}$.  
Let $\x_j=\sum_{i=1}^n\x_i^{(j)}$ so that $\EEx{\x_j}=\u^{(j)}(\A-\A\Z^{-1}\Z)$. 
For each $j\in[k]$, we define each $\t_j\in\mathbb{R}^{1\times d}$ by $\t_j:=\u^{(j)}(\A\Z^{-1}\Z)+\x_j$ so that $\EEx{\t_j}=\sigma_j\v^{(j)}$. 

We first bound the variance of $\t_j$: 
\begin{align*}
\EEx{\norm{\t_j-\EEx{\t_j}}_2^2}&=\EEx{\norm{\t_j-\sigma_j\v^{(j)}}_2^2}\\
&=\EEx{\norm{\x_j-\u^{(j)}(\A-\A\Z^{-1}\Z)}_2^2}\\
&=\EEx{\norm{\x_j}_2^2}-2\EEx{\x_j}\cdot\u^{(j)}(\A-\A\Z^{-1}\Z)+\norm{\u^{(j)}(\A-\A\Z^{-1}\Z)}_2^2\\
&=\EEx{\norm{\x_j}_2^2}-\norm{\u^{(j)}(\A-\A\Z^{-1}\Z)}_2^2.
\end{align*}

To bound the expected squared norm of $\x_j$,
\begin{align*}
\EEx{\norm{\x_j}_2^2}&=\EEx{\norm{\sum_{i=1}^n\x_i^{(j)}}_2^2}=\sum_{i=1}^n\EEx{\norm{\x_i^{(j)}}_2^2}+\sum_{1\le i_1\neq i_2\le n}\EEx{x_{i_1}^{(j)}\cdot x_{i_2}^{(j)}}\\
&\le\sum_{i=1}^n\EEx{\norm{\x_i^{(j)}}_2^2}+\norm{\u^{(j)}(\A-\A\Z^{-1}\Z)}_2^2,
\end{align*}
where the last step holds since $\x_{i_1}^{(j)}$ and $\x_{i_2}^{(j)}$ are independent and $\EEx{\x_j}=\u_{(j)}(\A-\A\Z^{-1}\Z)$.

Thus by our choice of $p_i$,
\begin{align*}
\EEx{\norm{\t_j-\sigma_j\v^{(j)}}_2^2}&\le\sum_{i=1}^n\EEx{\norm{\x_i^{(j)}}_2^2}=\sum_{i=1}^n\frac{\norm{\u_i^{(j)}(\A-\A\Z^{-1}\Z)_i}_2^2}{p_i}\\
&\le\sum_{i=1}^n\frac{\eps}{k}\frac{\norm{\u_i^{(j)}(\A-\A\Z^{-1}\Z)_i}_2^2\norm{\A-\A\Z^{-1}\Z}_F^2}{\norm{(\A-\A\Z^{-1}\Z)_i}_2^2}\\
&=\frac{\eps}{k}\norm{\A-\A\Z^{-1}\Z}_F^2
\end{align*}
Define the matrix $\M=\A\sum_{j=1}^k\frac{1}{\sigma_j}(\v^{(j)})^\top\t_j$, so that the row space of $\M$ is spanned by the vectors $\t_1,\ldots,\t_k$. 
Let $\T$ be the matrix whose rows are $\t_1,\ldots,\t_k$, so that $\norm{\A-\A\T^{-1}\T}_F^2\le\norm{\A-\M}_F^2$. 
We now bound the expected value of the Frobenius norm of the difference. 
Since the Frobenius norm is invariant under rotations,
\begin{align*}
\EEx{\norm{\A-\A\T^{-1}\T}_F^2}&\le\EEx{\norm{\A-\M}_F^2}=\sum_{j=1}^{\rank(\A)}\EEx{\norm{\u_j(\A-\M)}_F^2}\\
&=\sum_{j=1}^k\EEx{\norm{\sigma_j\v^{(j)}-\t_j}_F^2}+\sum_{j=k+1}^{\rank(\A)}\sigma_j^2\le\eps\norm{\A-\A\Z^{-1}\Z}_F^2+\norm{\A-\A_{(k)}}_F^2.
\end{align*}
\end{proof}
By considering a matrix $\Z$ that is a constant $c$ factor low-rank approximation to $\A$, then we may choose the accuracy parameter in the statement of \thmref{thm:adaptive:sampling} to be $\frac{\eps}{c}$ so that $\frac{\eps}{c}\norm{\A-\A\Z^{-1}\Z}_F^2=\eps\norm{\A-\A_{(k)}}_F^2$. 
Moreover as \cite{CohenMM17} notes, the expectation can be converted to a high probability bound through $\log\frac{1}{\delta}$ repetitions and Markov's inequality. 
\begin{corollary}
\corlab{cor:adaptive:sampling}
Let $\A\in\mathbb{R}^{n\times d}$, $k$ be a non-negative integer, and $\eps,\delta>0$ be given constants. 
Let $\Z$ be a constant factor low-rank approximation to $\A$. 
Then there exists a constant $C$ such that sampling each row $i$ of $\A$ independently with probability at least
\[p_i=\frac{Ck\log(1/\delta)}{\eps}\frac{\norm{(\A-\A\Z^{-1}\Z)_i}_2^2}{\sum_{j=1}^n\norm{(\A-\A\Z^{-1}\Z)_j}_2^2}\]
forms a matrix $\S$ such that $\Z\cup\S$ contains a matrix $\T$ of $k$ rows such that
\[\PPr{\norm{\A-\A\T^{-1}\T}_F^2\le(1+\eps)\norm{\A-\A_{(k)}}_F^2}\ge 1-\delta.\]
\end{corollary}
\noindent
Before describing our algorithm, we need the following relationship between ridge leverage scores and adaptive sampling, shown by \cite{CohenMM17}. 
\begin{lemma}
\cite{CohenMM17}
\lemlab{lem:ridge:adaptive}
Let $\Z$ be a constant factor approximation to the best rank $k$ approximation to $\A$. 
There exists a universal constant $\gamma$ such that if $\tau_i(\A)$ is the $i\th$ ridge leverage score of a matrix $\A$, with regularization $\frac{\norm{\A-\A_{(k)}}_F^2}{k}$, then $\frac{\gamma\norm{\A-\A_{(k)}}_F^2}{k}\tau_i(\A)\ge\norm{(\A-\A\Z^{-1}\Z)_i}_2^2$.
\end{lemma}

\begin{algorithm}[!htb]
\caption{$\onlinerss$: Online algorithm for row subset selection}
\alglab{alg:online:rss}
\begin{algorithmic}[1]
\Require{Stream of rows $\a_1,\ldots,\a_n\in\mathbb{R}^{1\times d}$, accuracy $\eps\le\frac{1}{2}$, and parameter $k$.}
\Ensure{$(1+\eps)$-approximate row subset selection of $\A:=\a_1\circ\ldots\circ\a_n$.}
\State{Run a $2$-approximation $\onlinepcp$ for online low-rank approximation of $\A$ in parallel.}
\Comment{\algref{alg:online:pcp}}
\State{$\alpha\gets\frac{C}{\eps}\log n$ for sufficiently large $C>0$}
\State{$\S\gets\emptyset$}
\For{each row $\a_t$}
\State{Update \onlinepcp{} with $\a_t$.}
\State{Let $\Z_t$ be the rows stored by $\onlinepcp$.}
\State{$\tilde{\lambda}_t\gets\frac{\norm{\Z_t-(\Z_t)_{(k)}}_F^2}{2k}$}
\State{$p_t\gets\min(1,2\alpha\cdot\a_t(\Z_t^\top\Z_t+\tilde{\lambda}_t\I)^{-1}\a_t^\top)$}
\State{With probability $p_t$, $\S\gets\S\circ\a_t$}
\Comment{Independently sampled}
\EndFor
\State{Let $\Z$ be the output of $\onlinepcp$.}
\State{\Return $\Z\cup\S$.}
\end{algorithmic}
\end{algorithm}

\noindent
We now show the correctness and space bounds for \onlinerss.
\begin{theorem}[Online Row Subset Selection]
\thmlab{thm:online:rss}
Given parameters $0<\eps<\frac{1}{2}$, $k>0$, and a matrix $\A\in\mathbb{R}^{n\times d}$ whose rows $\a_1,\ldots,\a_n$ arrive sequentially in a stream with condition number $\kappa$, there exists an online algorithm that outputs a matrix $\M$ with $\O{\frac{k}{\eps}\log n\log^2\kappa}$ rows that contains a matrix $\T$ of $k$ rows such that 
\[\norm{\A-\A\T^{-1}\T}_F^2\le(1+\eps)\norm{\A-\A_{(k)}}_F^2,\]
with high probability. 
\end{theorem}
\begin{proof}
Consider $\onlinerss$ (\algref{alg:online:rss}). 
We set $\M$ as the output $\Z\cup\S$ from \onlinerss. 
Let $\tau_i$ denote the ridge leverage score of $\a_i$, with regularization $\frac{\norm{\A-\A_{(k)}}_F^2}{k}$. 
By choosing \onlinepcp{} to be a $2$-approximation to the best low-rank approximation to $\A$, we have that $\Z$ contains a rank $k$ matrix $\W\in\mathbb{R}^{k\times d}$ such that $\norm{\A-\A_{(k)}}_F^2\le\norm{\A-\A\W^{-1}\W}_F^2\le2\norm{\A-\A_{(k)}}_F^2$.
We condition on \onlinepcp{} to return such a matrix $\Z$ with high probability.  

By \lemref{lem:ridge:adaptive}, it follows that 
\[\frac{\alpha k\log n}{\eps}\frac{\norm{(\A-\A\W^{-1}\W)_i}_2^2}{\norm{\A-\A\W^{-1}\W}_F^2}\le\frac{\alpha\gamma\log n}{\eps}\tau_i(\A).\] 
Observe that each row $\a_i$ of $\A$ is sampled with probability $p_i$, which is a $2$-approximation to the online $\lambda$-ridge leverage score. 
Online ridge leverage scores are at least as large as ridge leverage scores, so $p_i\ge\frac{\alpha}{\eps}\tau_i$. 
Thus for sufficiently large $\alpha$, $p_i\ge\frac{\alpha k\log n}{\eps}\frac{\norm{(\A-\A\W^{-1}\W)_i}_2^2}{\norm{\A-\A\W^{-1}\W}_F^2}$. 
Moreover, the sampling probabilities $p_i$ to determine whether a row should be added to $\S$ depend only on $\onlinepcp$ and not on whether previous rows were added to $\S$. 
Hence, the sampling probabilities are independent and by \corref{cor:adaptive:sampling}, it follows that $\W\cup\S$ contains a matrix $\T$ of $k$ rows such that $\norm{\A-\A\T^{-1}\T}_F^2\le(1+\eps)\norm{\A-\A_{(k)}}_F^2$, with high probability. 
Since $\W$ is a subset of $\Z$, then $\T$ is also contained in $\Z\cup\S$. 

It remains to bound the number of rows in $\Z\cup\S$. 
Since $\Z$ uses \onlinepcp{} with a constant factor approximation, it stores $\O{k\log n\log^2\kappa}$ rows by \thmref{thm:online:pcp}. 
To bound the number of rows sampled by $\S$, we use the same idea from \thmref{thm:online:pcp}. 
Define $\lambda_i=\frac{\norm{\A_i-(\A_i)_{(k)}}_F^2}{k}$, where $\A_i=\a_1\circ\ldots\circ\a_i$, for each $i\in[n]$. 
We set $u_0=1$ and for each $i\ge 1$, let $u_i$ be the smallest index $j$ such that $\lambda_j>2\lambda_{u_{i-1}}$, which partitions the stream into breakpoints $u_i$ starting from the beginning of the stream. 
For each $u_{i+1}<t\le u_i$, we have $\lambda_{u_i}\le\lambda_t<2\lambda_{u_i}$ so the sum of the online ridge leverage scores for the rows $\a_t$ with $u_i\le t<u_{i+1}$ is $\O{k\log\kappa}$ by \lemref{lem:pcp:scores}.   
$\onlinerss$ scales the online ridge leverage score by a factor of $\alpha=\O{\frac{1}{\eps}\log n}$ to determine the sampling probability into $\S$, so by a coupling argument and standard Chernoff bounds, the number of sampled rows $\a_t$ in $\S$ with $u_i\le t<u_{i+1}$ is $\O{\frac{k}{\eps}\log n\log\kappa}$. 
Since there are $\O{\log\kappa}$ such breakpoints $u_i$, the total number of rows in $\S$ is $\O{\frac{k}{\eps}\log n\log^2\kappa}$.
Hence, the total number of rows in $\Z\cup\S$ is $\O{\frac{k}{\eps}\log n\log^2\kappa}$ with high probability. 
\end{proof}

\subsection{Online Principal Component Analysis}
\seclab{sec:online:pca}
Recall that in the online PCA problem, rows of the matrix $\A=\a_1\circ\ldots\circ\a_n\in\mathbb{R}^{n\times d}$ arrive sequentially in a data stream and after each row $\a_i$ arrives, the goal is to immediately output a row $\m_i\in\mathbb{R}^{1\times m}$ such that at the end of the stream, there exists a low-rank matrix $\X\in\mathbb{R}^{m\times d}$ such that
\[\norm{\A-\M\X}_F^2\le(1+\eps)\norm{\A-\A_{(k)}}_F^2,\]
where $\M=\m_1\circ\ldots\circ\m_n$ and $\A_{(k)}$ is again the best rank $k$ approximation to $\A$. 

\cite{BhaskaraLVZ19} gives an algorithm for the online PCA problem that embeds into a matrix $\X$ that has rank $m=\O{\frac{k}{\eps^2}(\log n+\log\kappa)^4}$ with high probability, where $\kappa$ is the condition number of $\A$. 
Their algorithm maintains and updates the matrix $\X$ throughout the data stream. 
After the arrival of row $\a_i$, the matrix $\X$ is updated using a combination of residual based sampling and a black-box theorem of \cite{BoutsidisGKL15}. 
Row $\m_i$ is then output as the embedding of $\a_i$ into $\X$ by $\m_i=\a_i\X^{(i)}$, where $\X^{(i)}$ is the matrix $\X$ after row $i$ has been processed by $\X$. 
$\X$ has the property that no rows from $\X$ are ever removed across the duration of the algorithm, so then $\m_i$ is only an upper bound on the best embedding of $\a_i$. 
However, this matrix $\X$ does that contain a good rank $k$ approximation to $\A$. 
That is, $\X$ does not contain a rank $k$ submatrix $\Y\in\mathbb{R}^{k\times d}$ such that there exists a matrix $\B$ such that
 \[\norm{\A-\B\Y}_F^2\le(1+\eps)\norm{\A-\A_{(k)}}_F^2.\]

Recall that our online row subset selection algorithm $\onlinerss$ (\algref{alg:online:rss}) returns a matrix $\Z\in\mathbb{R}$ with $\O{\frac{k}{\eps}\log n\log^2\kappa}$ rows of $\A$ that contains a submatrix $\Y$ of $k$ rows that is a good rank $k$ approximation of $\A$. 
Thus, our algorithm for online row subset selection algorithm can be combined with the algorithm of \cite{BhaskaraLVZ19} to output matrices $\M$ and $\W$ that is a good approximation to the online PCA problem but also so that $\W$ contains a submatrix $\Y$ of $k$ rows that is a good rank $k$ approximation of $\A$. 
Namely, the algorithm of \cite{BhaskaraLVZ19} can be run to produce a matrix $\X^{(i)}$ after the arrival of each row $\a_i$. 
Moreover, our online row subset selection algorithm \algref{alg:online:rss} can be run to produce a matrix $\Z^{(i)}$ after the arrival of each row $\a_i$. 
Let $\W^{(0)}=\mathbb{R}^{0\times d}$ and let $\W^{(i)}$ append the newly added rows of $\X^{(i)}$ and $\Z^{(i)}$ to $\W^{(i-1)}$. 
We then immediately output the embedding $\m_i=\a_i\W^{(i)}$.

\begin{theorem}[Online Principal Component Analysis]
\thmlab{thm:online:pca}
Given parameters $n,d,k,\eps>0$ and a matrix $\A\in\mathbb{R}^{n\times d}$ whose rows arrive sequentially in a stream with condition number $\kappa$, let $m=\O{\frac{k}{\eps^2}(\log n+\log\kappa)^4}$.
There exists an algorithm for online PCA that immediately outputs a row $\m_i\in\mathbb{R}^{m}$ after seeing row $\a_i\in\mathbb{R}^{d}$ and outputs a matrix $\X\in\mathbb{R}^{m\times d}$ at the end of the stream such that
\[\norm{\A-\M\X}_F^2\le(1+\eps)\norm{\A-\A_{(k)}}_F^2,\]
where $\A_{(k)}$ is the best rank $k$ approximation to $\A$. 
Moreover, $\X$ contains a submatrix $\Y\in\mathbb{R}^{k\times d}$ such that there exists a matrix $\B$ such that
\[\norm{\A-\B\Y}_F^2\le(1+\eps)\norm{\A-\A_{(k)}}_F^2.\]
\end{theorem}
\begin{proof}
Since $\W$ contains the matrix $\X$ from the algorithm of \cite{BhaskaraLVZ19}, it immediately follows that $\norm{\A-\M\W}_F^2\le(1+\eps\norm{\A-\A_{(k)}}_F^2$.
Moreover, since $\X$ contains the matrix $\Z$ from our online row subset selection algorithm, which also never removes any rows of $\Z$ once they are added, then $\X$ contains a rank $k$ submatrix $\Y\in\mathbb{R}^{k\times d}$ such that there exists a matrix $\B$ such that $\norm{\A-\B\Y}_F^2\le(1+\eps)\norm{\A-\A_{(k)}}_F^2$. 
Finally, we note that since $\X$ has $\O{\frac{k}{\eps^2}(\log n+\log\kappa)^4}$ rows and $\Z$ has $\O{\frac{k}{\eps}\log n\log^2\kappa}$ rows, then $\W$ has $\O{\frac{k}{\eps^2}(\log n+\log\kappa)^4}$ rows.
\end{proof}
\section{$\ell_1$-Subspace Embeddings}
\seclab{sec:lp}
In this section, we consider $\ell_1$-subspace embeddings in both the online model and the sliding window model. 
For $\ell_2$-subspace embeddings in \secref{sec:sample}, a central concept used to sample a row $\a_i\in\mathbb{R}^d$ in a matrix $\A_i=\a_1\circ\ldots\circ\a_i$ is the online leverage score, $\a_i(\A_{i-1}^\top\A_{i-1})^{-1}\a_i^\top$. 
An equivalent formulation for the (online) leverage score of $\a_i$ is the following:
\begin{definition}[Maximization Characterization of (Online) Leverage Scores]
For a matrix $\A=\a_1\circ\ldots\circ\a_n\in\mathbb{R}^{n\times d}$, the leverage score of $\a_i$ can also be written as $\max_{\x\in\mathbb{R}^d}\frac{|\a_i^\top\x|^2}{\norm{\A\x}_2^2}$. 
Similarly, the online leverage score of $\a_i$ can also be written as $\max_{\x\in\mathbb{R}^d}\frac{|\a_i^\top\x|^2}{\norm{\A_{i-1}\x}_2^2}$, where $\A_i=\a_1\circ\ldots\circ\a_i$. 
As by convention in \defref{def:rols}, the online leverage score is set to $1$ when the fraction is undefined. 
\end{definition}
Thus we define the analogous quantities for $\ell_1$-subspace embeddings that we shall use to govern sampling proabilities for each row:
\begin{definition}[(Online) $\ell_1$ Sensitivity]
For a matrix $\A=\a_1\circ\ldots\circ\a_n\in\mathbb{R}^{n\times d}$, we define the \emph{$\ell_1$ sensitivity} of $\a_i$ by $\max_{\x\in\mathbb{R}^d}\frac{|\a_i^\top\x|}{\norm{\A\x}_1}$ and the \emph{online $\ell_1$ sensitivity} of $\a_i$ by $\max_{\x\in\mathbb{R}^d}\frac{|\a_i^\top\x|}{\norm{\A_i\x}_1}$, where $\A_i=\a_1\circ\ldots\circ\a_i$. 
\end{definition}
We remark that the online $\ell_1$ sensitivity of $\a_i$ could be similarly defined by the minimum of $1$ and $\max_{\x\in\mathbb{R}^d}\frac{|\a_i^\top\x|}{\norm{\A_{i-1}\x}_1}$, again using the convention that the online $\ell_1$ sensitivity of $\a_i$ is $1$ if $\rank(\A_i)>\rank(\A_{i-1})$. 
The latter definition is more consistent with the definition of online leverage score by \cite{CohenMP16}, but the former will require slightly less notation in our analysis. 
Regardless, these qualities are within a constant factor. 

Note that from the definition, the online $\ell_1$ sensitivity of a row is at least as large as the $\ell_1$ sensitivity of the row. 
Similarly, the (online) $\ell_1$ sensitivity of each of the previous rows in $\A$ cannot increase when a new row $\r$ is added to $\A$. 
Thus we can use the online $\ell_1$ sensitivities to define an online algorithm for $\ell_1$-subspace embedding similar to \algref{alg:online:pcp}. 

\begin{algorithm}[!htb]
\caption{$\onlinelp$: Online algorithm for $\ell_1$-subspace embedding}
\alglab{alg:online:lp}
\begin{algorithmic}[1]
\Require{Stream of rows $\a_1,\ldots,\a_n\in\mathbb{R}^{1\times d}$, accuracy $\eps>\frac{1}{n}$, and parameter $k$.}
\Ensure{$\ell_1$-subspace embedding of $\A:=\a_1\circ\ldots\circ\a_n$.}
\State{$\M\gets\emptyset$}
\State{$\alpha\gets\frac{Cd}{\eps^2}\log n$ with sufficiently large constant $C>0$}
\For{each row $\a_t$}
\State{$\tau_t\gets4\cdot\max_{\x\in\mathbb{R}^d}\frac{|\a_t^\top\x|}{\norm{\M\x}_1}$}
\State{$p_t\gets\min(1,\alpha\tau_t)$}
\State{With probability $p_t$, $\M\gets\M\circ\frac{\a_t}{p_t}$}
\EndFor
\State{\Return $\M$.}
\end{algorithmic}
\end{algorithm}

We require the following concentration inequality on scalar martingales. 
\begin{theorem}[Freedman's inequality]
\cite{Freedman75}
\thmlab{thm:scalar:freedman}
Let $Y_0,Y_1,\ldots,Y_n$ be a scalar martingale with difference sequence $X_1,\ldots,X_n$. 
Suppose $|X_t|\le R$ for all $t\in[n]$ with high probability and let the predictable quadratic variation process of the martingale be defined by $w_k:= \sum_{t=1}^k\underset{t-1}{\mathbb{E}}\left[X_t^2\right]$, for $k\in[n]$. 
Then for all $\eps\ge 0$ and $\sigma^2 > 0$, and every $k \in [n]$, 
\[\PPr{\max_{t \in [k]} |Y_t|>\eps\text{ and } w_k \le \sigma^2}\le 2\exp\left(-\frac{\eps^2/2}{\sigma^2 + R\eps/3} \right).\]
\end{theorem}

We first show $\ell_1$ sensitivity sampling gives an $\ell_1$-subspace embedding. 
\begin{lemma}
\lemlab{lem:l1:subspace}
For $\eps>\frac{1}{n}$, \algref{alg:online:lp} returns a matrix $\M$ such that for all $\x\in\mathbb{R}^d$,
\[|\norm{\M\x}_1-\norm{\A\x}_1|\le\eps\norm{\A\x}_1,\]
with high probability.
\end{lemma}
\begin{proof}
Let $\x\in\mathbb{R}^d$ and suppose without loss of generality that $\norm{\A\x}_1=1$. 
Let $Y_0,Y_1,\ldots,Y_n$ be a martingale with difference sequence $X_1,\ldots,X_n$. 
For $j\ge 1$, we set $X_j=0$ if $Y_{j-1}\ge\eps$ and otherwise if $Y_{j-1}<\eps$, we define
\[X_j=
\begin{cases}
\left(\frac{1}{p_j}-1\right)|\a_j^\top\x| & \text{ if }\a_j\text{ is sampled in }\M\\
- |\a_j^\top\x|  & \text{otherwise}.
\end{cases}
\]
Observe that $\EEx{Y_j|Y_1,\ldots,Y_{j-1}}=Y_{j-1}$ so the sequence is a valid martingale. 
Moreover, the difference sequence defines $Y_j=|\norm{\M_j\x}_1-\norm{\A_j\x}_1|$, where $\M_j$ is the rows of $\M$ sampled at time $j$ and $\A_j=\a_1\circ\ldots\circ\a_j$. 

Thus if $p_j=1$, then $X_j=0$ and otherwise, $\EEx{X_j^2}\le\frac{1}{p_j}|\a_j^\top\x|^2=\frac{1}{\alpha\tau_j}|\a_j^\top\x|^2$. 
Since $\tau_j\ge|\a_j^\top\x|$ and we scaled $\norm{\A\x}_1=1$, then $|\a_j^\top\x|\le 1$ so that $\sum_{j=1}^n\EEx{X_j}^2\le\frac{1}{\alpha}$. 
By Freedman's inequality (\thmref{thm:scalar:freedman}) and $\alpha=\O{\frac{d}{\eps^2}\log n}$, it follows that
\[\PPr{|Y_n|>\eps}\le2\exp\left(-\frac{\eps^2/2}{\alpha + R\alpha/3}\right)\le\frac{1}{2^d\,\poly(n)},\]
for sufficiently large $\alpha$. 
Since we scaled $\norm{\A\x}_1=1$, then $|\norm{\M\x}_1-\norm{\A\x}_1|\le\eps\norm{\A\x}_1$ with probability at least $1-\frac{1}{2^d\,\poly(n)}$. 

Now consider the unit ball $B=\{\A\y\in\mathbb{R}^d\,|\,\norm{\A\y}_1=1\}$ and let $\mathcal{N}$ be an $\eps$-net of $B$, constructed greedily. 
Observe that $\mathcal{N}$ has at most $\left(\frac{3}{\eps}\right)^d$ points, since balls of radius $\frac{\eps}{2}$ around each point cannot overlap, but must all fit into a ball of radius $1+\frac{\eps}{2}$. 
Thus by a union bound for $\frac{1}{\eps}<n$, we have $|\norm{\M\y}_1-\norm{\A\y}_1|\le\eps\norm{\A\y}_1$ for all $\A\y\in\mathcal{N}$, with probability at least $1-\frac{1}{\poly(n)}$. 

For any vector $\z\in\mathbb{R}^d$ such that $\norm{\A\z}_1=1$, we construct a sequence $\A\y_1,\A\y_2,\ldots$ such that $\norm{\A\z-\sum_{j=1}^i\A\y_j}_1\le\eps^i$ and there exists a constant $\gamma_i\le\eps^{i-1}$ such that $\frac{1}{\gamma_i}\A\y_i\in\mathcal{N}$ for all $i$. 
Let $\A\y_1$ be the point in $\mathcal{N}$ closest to $\A\z$ so that $\norm{\A\z-\A\y_1}_1\le\eps$. 
Given a sequence $\A\y_1,\ldots,\A\y_{i-1}$ such that $\gamma_i:=\norm{\A\z-\sum_{j=1}^{i-1}\A\y_j}_1\le\eps^{i-1}$, note that $\frac{1}{\gamma_i}\norm{\A\z-\sum_{j=1}^{i-1}\A\y_j}_1=1$, so there exists a point $\A\y_i\in\mathcal{N}$ that is within distance $\eps$ of $\A\z-\sum_{j=1}^{i-1}\A\y_j$, which completes the induction. 
Hence since $\y_i\in\mathcal{N}$ for all $i$ and $\M=\S\A$ for some matrix $\S$ that samples rows of $\A$, we have 
\begin{align*}
\norm{\M\z}_1&=\norm{\M\gamma_1\y_1+\M\gamma_2\y_2+\M\gamma_3\y_3+\ldots}_1\\
&\le\sum_{i=1}^\infty\gamma_i\norm{\M\y_i}_1\le(1+\eps)+\eps(1+\eps)+\ldots\le(1+\O{\eps}),\\
\norm{\M\z}_1&=\norm{\M\gamma_1\y_1+\M\gamma_2\y_2+\M\gamma_3\y_3+\ldots}_1\\
&\ge\gamma_1\norm{\M\y_1}-\sum_{i=2}^\infty\gamma_i\norm{\M\y_i}_1\ge(1-\eps)-\eps(1+\eps)+\ldots\ge(1-\O{\eps}),
\end{align*}
and thus 
\[|\norm{\M\z}_1-\norm{\A\z}_1|\le\O{\eps}\norm{\A\z}_1.\]
The claim follows from a rescaling of $\eps$. 
\end{proof}
To analyze the space complexity, it remains to bound the sum of the online $\ell_1$ sensitivities. 
Much like the proof of \thmref{thm:sw:spectral}, we first bound the sum of regularized online $\ell_1$ sensitivities and then relate these quantities. 
We then require a few structural results relating online leverage scores and their relationships to regularized online $\ell_1$ sensitivities.
\newcommand{\whackamole}{Let $\A=\a_1\circ\ldots\circ\a_n\in\mathbb{R}^{n\times d}$.  
Let $\lambda>0$, $\B_0=\lambda\I_d$, $\B=\underbrace{\B_0\circ\ldots\circ\B_0}_{n\text{ times}}$ and $T$ be the sum of the online leverage scores of $\B\circ\A$. 
For any $\gamma\ge 1$, there exists a diagonal matrix $\W$ with entries in $[0,1]$ that contains at most $\frac{n}{\gamma}$ entries strictly less than $1$ so that the online leverage score of any row of $\W\A$ with respect to $\C:=\B\circ(\W\A)$ is at most $\frac{\gamma T}{n}$.  
That is, if $\tau_i^{\OL}(\C)$ denotes the leverage score of the $i\th$ row of $\W\A$ with respect to $\C$, then $\tau_i^{\OL}(\C)\le\frac{\gamma T}{n}$ for all $i\in[n]$.} 
\begin{lemma}[Online whack-a-mole]
\cite{CohenLMMPS15}
\lemlab{lem:online:reweighting}
\whackamole
\end{lemma}
It should be noted that \cite{CohenLMMPS15} prove a slightly different version of \lemref{lem:online:reweighting} so that the reweighted input matrix $\W\A$ has all of its leverage scores uniformly bounded. 
However, given the regularization matrix $\B$, the same proof demonstrates that the reweighted input matrix $\W\A$ has all of its online leverage scores uniformly bounded. 
For completeness, we provide the modified proof of \cite{CohenLMMPS15} in \appref{app:whackamole}. 

We now show that if the rows of a matrix has uniformly bounded leverage scores, then the $\ell_1$ sensitivities must also be uniformly bounded. 
\begin{lemma}[Uniform leverage scores imply uniform $\ell_1$ sensitivities]
\lemlab{lem:uniform_to_uniform}
Given a matrix $\A=\a_1\circ\ldots\circ\a_n$, let $\tau_i(\A)$ and $\zeta_i(\A)$ denote the leverage score and $\ell_1$ sensitivity of $\a_i$, respectively, for each $i\in[n]$. 
Let $\gamma>1$ and suppose that $\zeta_i(\A)\le\frac{\gamma d}{n}$ for all $i\in[n]$. 
Then $\tau_i(\A)\le\frac{\gamma d}{n}$ for all $i\in[n]$. 
\end{lemma}
\begin{proof}
For any given $i$, let $\x^* = \argmax_{\x}\frac{|\a_i^\top\x|}{\norm{\A\x}_1}$. 
If $\frac{\max_j|[\A\x^*]_j|}{\norm{\A\x^*}_2} > \sqrt{\frac{\gamma d}{n}}$, where $[\A\x^*]_j$ denotes coordinate $j$ of vector $\A\x^*$, then row $\a_j$ would have leverage score $\tau_j(\A)>\frac{\gamma d}{n}$, which violates the assumption of the claim. 
Thus, $\frac{\max_j|\A\x^*|_j}{\norm{\A\x^*}_2}\le \sqrt{\frac{\gamma d}{n}}$. 

Now for any vector $\z\in\mathbb{R}^n$, we have $\frac{\norm{\z}_2}{\max_j|[\z]_j|}\cdot\norm{\z}_2\le\norm{\z}_1$. 
Therefore, 
\begin{align*}
\zeta_i(\A)=\frac{|\a_i^\top\x^*|}{\norm{\A\x^*}_1}\le|\a_i^\top\x^*|\frac{\max_j|[\A\x^*]_j|}{\norm{\A\x^*}_2^2}=\max\frac{\max_j|\A\x^*|_j}{\norm{\A\x^*}_2}\cdot \sqrt{\frac{\gamma d}{n}}\le\frac{\gamma d}{n}.
\end{align*}
\end{proof}
\noindent
We now bound the sum of the online $\ell_1$ sensitivities of a matrix. 
\begin{lemma}[Bound on Sum of Online $\ell_1$ Sensitivities]
\lemlab{lem:ol:sen:bound}
Let $\A=\a_1\circ\ldots\circ\a_n\in\mathbb{R}^{n\times d}$. 
Let $\zeta^{\OL}_i(\A)$ be the online $\ell_1$ sensitivity of $\a_i$ with respect to $\A$, for each $i\in[n]$. 
Then $\sum_{i=1}^n\zeta^{\OL}_i(\A)=\O{d\log n\log(\kappa nd)}$, where $\kappa$ is the subset condition number of $\A$. 
\end{lemma}
\begin{proof}
Let $\lambda>0$, $\B_0=\lambda\I_d$, $\B=\underbrace{\B_0\circ\ldots\circ\B_0}_{n\text{ times}}$, and $T$ be an upper bound on the sum of the online leverage scores of any submatrix of $\X:=\B\circ\A$. 
For each $i\in[n]$, let $\zeta_i(\X)$ and $\tau_i(\X)$ be the $\ell_1$ sensitivity and leverage score of row $\a_i$ with respect to $\X$, respectively. 
Similarly, let $\zeta^{\OL}_i(\X)$ denote the online $\ell_1$ sensitivity of row $\a_i$ with respect to $\X$. 
By \lemref{lem:online:space} and \lemref{lem:online:reweighting}, there exist a set $S\subseteq[n]$ of size at most $d$ and a diagonal matrix $\W$ with $\frac{n}{2}$ non-unit entries such that $\tau^{\OL}_i(\B\circ(\W\A))\le\frac{2T}{n}$ for all $i\in[n]$, where $T$ is the sum of the online ridge leverage scores of $\A$ with regularization $\lambda>0$. 
Let $\C=\B\circ(\W\A)$ and $S$ be the set of all indices $i$ such that $\W_{i,i}=1$. 
For each $i\in[n]$, let $\C_{i+nd}$ be the matrix formed by the first $i+nd$ rows of $\C$. 
Since the online leverage score of each row $\a_j$ with $j\le i$ in $\C$ is at most $\frac{2T}{n}$ and the leverage scores of the first $nd$ rows of $\C_{i+nd}$ is at most $\frac{1}{n}$, then all rows of $\C_{i+nd}$ have leverage score at most $\frac{2T}{n}$. 
By \lemref{lem:uniform_to_uniform}, all rows of $\C_{i+nd}$ have $\ell_1$ sensitivity at most $\frac{2T}{n}$. 

Now for any $i\in S$, we have from the definition of online $\ell_1$ sensitivity that
\begin{align*}
\zeta_i^{\OL}(\X)\le\zeta_i^{\OL}(\C)=\zeta_i^{\OL}(\C_{i+nd})=\zeta_i(\C_{i+nd})\le\frac{2T}{n}
\end{align*}
so that $\sum_{i\in S}\zeta_i^{\OL}(\X)\le2T$. 
Let $U= \{1,\ldots, n\} \setminus S$ and let $\X_U$ be the matrix $\X$ restricted to the rows whose indices are in $U$. 
By monotonicity of sensitivities, $\zeta_i^{\OL}(\X)\le \zeta_i^{\OL}(\X_U)$.  
By induction, 
\begin{align*}
\sum_{i\in U} \zeta_i^{\OL}(\X)\le \sum_{i\in U} \zeta_i^{\OL}(\X_U) \le 2T\log n
\end{align*}
Thus we have $\sum_i\zeta_i^{\OL}(\X)=\O{T\log n}$, which is $\O{d\log n\log\kappa'}$ by \lemref{lem:online:space}, where $\kappa'$ denotes the largest condition number of any submatrix of $\X$. 
Now suppose we set $\lambda=\frac{\gamma}{\sqrt{d}n}$, where $\gamma$ is the minimum of the nonzero singular values across all submatrices of $\A$. 
We also have for any $\x\in\mathbb{R}^d$ that 
\[\norm{\x}_1\le\sqrt{d}\norm{\x}_2\le\frac{\sqrt{d}}{\sigma_{\min}(\A_i)}\norm{\A_i\x}_2\le\frac{\sqrt{d}}{\sigma_{\min}(\A_i)}\norm{\A_i\x}_1.\]
Thus for our choice of $\lambda$, we have
\[\frac{|\a_i^\top\x|}{\norm{(\B\circ\A_i)\x}_1}\ge\frac{|\a_i^\top\x|}{\norm{\A_i\x}_1+\frac{\lambda \sqrt{d}n}{\sigma_{\min}(\A_i)}\norm{\A_i\x}_1}\ge\frac{|\a_i^\top\x|}{2\norm{\A_i\x}_1},\]
so that the online $\ell_1$ sensitivity of $\a_i$ with respect to $\A$ is within a constant factor of the online $\ell_1$ sensitivity of $\a_i$ with respect to $\X=\B\circ\A$. 
That is, $\sum_{i=1}^n\zeta^{\OL}_i(\A)=\O{d\log n\log\kappa}$. 
Moreover, for our choice of $\lambda$, we have $\log\kappa'=\log(\kappa nd)$, where $\kappa$ denotes the largest condition number of any subset of rows of $\A$, i.e., the subset condition number of $\A$. 
\end{proof}
\noindent
We now give the full guarantees of \algref{alg:online:lp}. 
\begin{theorem}[Online $\ell_1$-Subspace Emebedding]
\thmlab{thm:online:lp}
Given $\eps>0$ and a matrix $\A\in\mathbb{R}^{n\times d}$ with subset condition number $\kappa$ whose rows $\a_1,\ldots,\a_n$ arrive sequentially in a stream, there exists an online algorithm that outputs a matrix $\M$ with $\O{\frac{d^2}{\eps^2}\log^2 n\log(\kappa nd)}$ (rescaled) rows of $\A$ such that
\[(1-\eps)\norm{\A\x}_1\le\norm{\M\x}_1\le(1+\eps)\norm{\A\x}_1,\]
for all $\x\in\mathbb{R}^d$ with high probability.  
\end{theorem}
\begin{proof}
Note that for $\eps<\frac{1}{n}$, a trivial algorithm is just to sample every row. 
For $\eps>\frac{1}{n}$, consider \algref{alg:online:lp} and recall that the property that $\M$ gives an $\ell_1$-subspace embedding follows from \lemref{lem:l1:subspace} with high probability. 
Moreover by a union bound, $\M_t$ is a $\ell_1$-subspace embedding of $\A_t$ with approximation $(1+\eps)$, where $\M_t$ are the rows of $\M$ stored at time $t$ and $\A_t=\a_1\circ\ldots\circ\a_t$. 
It suffices to consider $\eps<\frac{1}{2}$, for which each row $\a_i$ is thus sampled with probability at most $4\alpha\zeta^{\OL}_i$, where $\zeta^{\OL}_i(\A)$ is the online $\ell_1$ sensitivity of $\a_i$ with respect to $\A$. 
By \lemref{lem:ol:sen:bound}, we have $\sum_{i=1}^n\zeta^{\OL}_i(\A)=\O{d\log n\log(\kappa nd)}$. 
Since we oversample by a factor of $\alpha=\O{\frac{d}{\eps^2}\log n}$, then the space complexity of the algorithm follows from a coupling argument and standard Chernoff bounds.
\end{proof}

Finally, note that we can use the reverse online $\ell_1$ sensitivities in the framework of \algref{alg:sampling:framework} to obtain an $\ell_1$-subspace embedding in the sliding window model. 
\begin{algorithm}[t]
\caption{$\score(\r,\A)$ function for $\ell_1$-subspace embedding}
\alglab{alg:lp:score}
\begin{algorithmic}[1]
\Require{A row $\r\in\mathbb{R}^d$ and a matrix $\A\in\mathbb{R}^{m\times d}$.}
\Ensure{Scaled $\ell_1$ sensitivity of $\r$ with respect to $\A$.}
\If{$\rank(\A)=\rank(\A\circ\r)$}
\State{\Return $4d\cdot\max_{\x\in\mathbb{R}^d}\frac{|\r^\top\x|}{\norm{\A\x}_1}$}
\Else
\State{\Return $1$}
\EndIf
\end{algorithmic}
\end{algorithm}

Since we are considering a sliding window algorithm, we consider the reverse online $\ell_1$ sensitivities rather than using the online $\ell_1$ sensitivities as for the online $\ell_1$-subspace embedding algorithm in \algref{alg:online:lp}. 
For a matrix $\A=\a_1\circ\ldots\circ\a_n\in\mathbb{R}^{n\times d}$, the reverse online $\ell_1$ sensitivity of row $\a_i$ is defined in the natural way, by $\max_{\x\in\mathbb{R}^d}\frac{\a_i^\top\x}{\norm{\Z_{i-1}\x}_1}$ if $\rank(\Z_{i-1})=\rank(\Z_i)$ and by $1$ otherwise, where $\Z_i=\a_n\circ\ldots\circ\a_i$. 
Note that \algref{alg:sampling:framework} with the $\score$ function in \algref{alg:lp:score} evaluates the importance of each row compared to the following rows, so we are approximately sampling by reverse online $\ell_1$ sensitivities, as desired. 
The proof follows along the same lines as \thmref{thm:sw:spectral} and \thmref{thm:sw:pcp}, using a martingale argument to show that the approximations for the reverse online $\ell_1$ sensitivities induce sufficiently high sampling probabilities, while still using the sum of the online $\ell_1$ sensitivities to bound the total number of sampled rows. 
We sketch the proof below, as \secref{sec:coreset} presents an improved algorithm for $\ell_1$-subspace embedding in the sliding window model that is nearly space optimal, up to lower order factors.   
\begin{theorem}[Randomized $\ell_1$-Subspace Embedding Sliding Window Algorithm]
\thmlab{thm:sw:lp}
Let $\r_1,\ldots,\r_n\in\mathbb{R}^{d}$ be a stream of rows and $\kappa$ be the subset condition number of the matrix $\r_1\circ\ldots\circ\r_n$. 
Let $W>0$ be a window size parameter and $\A=\r_{n-W+1}\circ\ldots\circ\r_n$ be the matrix consisting of the $W$ most recent rows. 
Given a parameter $\eps>0$, there exists an algorithm that outputs a matrix $\M$ with a subset of (rescaled) rows of $\A$ such that $(1-\eps)\norm{\A\x}_1\le\norm{\M\x}_1\le(1+\eps)\norm{\A\x}_1$ for all $\x\in\mathbb{R}^d$ and stores $\O{\frac{d^2}{\eps^2}\log^2 n\log(\kappa nd)}$ rows at any time, with high probability.
\end{theorem}
\begin{proof}
Consider \algref{alg:sampling:framework} using the $\score$ function of \algref{alg:lp:score}. 
Since we have already established the martingale argument of the online $\ell_1$ sensitivities in \lemref{lem:l1:subspace}, the same argument follows for the reverse online $\ell_1$ sensitivities. 
It follows that each row $\r_i$ that increases the rank of $\Z_i$ for $i\ge n-W+1$ is sampled with probability within a constant factor of $\alpha d\cdot\max_{\x\in\mathbb{R}^d}\frac{|\r_i^\top\x|}{\norm{\Z_{i-1}\x}_1}$, where $\Z_i=\r_{n-W+1}\circ\ldots\circ\r_i$ for each $i\in[n]$. 
There are at most $d$ rows that increase the rank of $\Z_i$ from $\Z_{i+1}$. 
Since $\alpha=\O{\frac{d}{\eps^2}\log n}$ and the sum of the reverse online $\ell_1$ sensitivities is bounded by $\O{d\log n\log(\kappa nd)}$ by \lemref{lem:ol:sen:bound}, then the space complexity of the algorithm follows from a coupling argument and standard Chernoff bounds.
\end{proof}

\paragraph{Approximating the Sensitivities.} 
We observe that explicitly computing the (online) $\ell_1$ sensitivity $\max_{\x\in\mathbb{R}^d}\frac{|\a_i^\top\x|}{\norm{\A\x}_1}$ for a row $\a_i$ in $\A=\a_1\circ\ldots\circ\a_n\in\mathbb{R}^{n\times d}$ may be infeasible. 
However, it suffices to approximate the $\ell_1$ sensitivity to within an additive $\frac{1}{n^2}$ in polynomial time using linear programming. 
The oversampling parameter $\alpha$ can then be scaled by a factor of two to handle any row with online $\ell_1$ sensitivity at least $\frac{2}{n^2}$ and those rows with online $\ell_1$ sensitivity less than $\frac{2}{n^2}$ will only incur $\O{\log n}$ additional samples in total, with high probability.   
\section{A Coreset Framework for Deterministic Sliding Window Algorithms}
\seclab{sec:coreset}
In this section, we give a framework for deterministic sliding window algorithms based on the merge and reduce paradigm and the concept of online coresets. 
We define an online coreset for a matrix $\A$ as a weighted subset of rows of $\A$ that also provides a good approximation to prefixes of $\A$:
\begin{definition}[Online Coreset]
An \emph{online coreset} for a function $f$, an approximation parameter $\eps>0$, and a matrix $\A\in\mathbb{R}^{n\times d}=\a_1\circ\ldots\circ\a_n$ is a subset of weighted rows of $\A$ such that for any $\A_i=\a_1\circ\ldots\circ\a_i$ with $i\in[n]$, we have $f(\M_i)$ is a $(1+\eps)$-approximation of $f(\A_i)$, where $\M_i$ is the matrix that consists of the weighted rows of $\A$ in the coreset that appear at time $i$ or before. 
\end{definition}

We can use deterministic online coresets for deterministic sliding window algorithms using a merge and reduce framework. 
The idea is to store the $\mspace$ most recent rows in a block $\B_0$, for some parameter $\mspace$ related to the coreset size.  
Once $\B_0$ is full, we reduce $\B_0$ to a smaller number of rows by setting $\B_1$ to be a $\left(1+\frac{\eps}{\log n}\right)$ coreset of $\B_0$, \emph{starting with the most recent row}, and then empty $\B_0$ so that it can again store the most recent rows. 
Subsequently, whenever $\B_0$ is full, we merge successive non-empty blocks $\B_0,\ldots,\B_i$ and reduce them to a $\left(1+\frac{\eps}{\log n}\right)^i$ coreset, indexed as $\B_{i+1}$. 
Since the entire stream has length $n$, then by using $\O{\log n}$ blocks $\B_i$, we will have a $\left(1+\frac{\eps}{\log n}\right)^{\log n}$ coreset, starting with the most recent row. 
Rescaling $\eps$, this gives a merge and reduce based framework for $(1+\eps)$ deterministic sliding window algorithms based on online coresets. 
We give the framework in full in \algref{alg:deterministic:framework}.

\begin{algorithm}[!htb]
\caption{Merge and reduce framework for deterministic sliding window matrix algorithms using online coresets.}
\alglab{alg:deterministic:framework}
\begin{algorithmic}[1]
\Require{A matrix function $f$ that admits an online coreset, a stream of rows $\r_1,\ldots,\r_n\in\mathbb{R}^{1\times d}$, approximation parameter $\eps>0$, and a parameter $W>0$ for the window}
\Ensure{An approximation to $f(\A)$, where $\A\in\mathbb{R}^{W\times d}=\r_{n-W+1}\circ\ldots\circ\r_n$}
\State{Initialize blocks $\B_0,\B_1,\ldots,\B_{\log n}\gets\emptyset$}
\For{each row $\r_t$}
\If{$\B_0$ does not contain $\mspace$ rows}
\Comment{$\mspace$ is the coreset size}
\State{$\B_0\gets\r_t\circ\B_0$}
\Comment{Keep timestamps for all stored rows}
\Else
\State{Let $i>0$ be the minimal index such that $\B_i=\emptyset$.}
\Comment{Prepare to merge and reduce}
\State{$\B_i\gets\coreset\left(\M,\frac{\eps}{\log n}\right)$, where $\M=\B_0\circ\ldots\circ\B_{i-1}$}
\Comment{Online coreset for function $f$}
\For{$j=0$ to $j=i-1$}
\State{$\B_j\gets\emptyset$}
\EndFor
\State{$\B_0\gets\r_t$}
\EndIf
\If{there exists a row $\r$ in a block $\B_i$ with timestamp before $t-W+1$}
\State{Delete $\r$ from $\B_i$}
\EndIf
\EndFor
\State{\Return $\B_{\log n}\circ\ldots\circ\B_1\circ\B_0$}
\end{algorithmic}
\end{algorithm}

\begin{lemma}
\lemlab{lem:coreset:block}
$\B_i$ in \algref{alg:deterministic:framework} is a $\left(1+\frac{\eps}{\log n}\right)^i$ online coreset for $2^{i-1}\mspace$ rows.
\end{lemma}
\begin{proof}
Note that $\B_i$ can only be non-empty if at some point $\B_0$ contains $\mspace$ rows and $\B_1,\ldots,\B_{i-1}$ are all non-empty. 
By induction, $\B_j$ is a $\left(1+\frac{\eps}{\log n}\right)^j$ online coreset for $2^{j-1}\mspace$ rows for each $1\le j<i$. 
$\B_i$ is then a $\left(1+\frac{\eps}{\log n}\right)$ online coreset for the rows in $\B_0\circ\ldots\circ\B_{i-1}$. 
Thus for $i=1$, then $\B_i$ is a $\left(1+\frac{\eps}{\log n}\right)$ online coreset for $\mspace$ rows and for $i>1$, $\B_i$ is a $\left(1+\frac{\eps}{\log n}\right)\left(1+\frac{\eps}{\log n}\right)^{i-1}=\left(1+\frac{\eps}{\log n}\right)^i$ online coreset for $\mspace+\sum_{j=0}^{i-2}2^j\mspace=2^{i-1}\mspace$ rows.
\end{proof}

\begin{theorem}
\thmlab{thm:coreset:framework}
Let $\r_1,\ldots,\r_n\in\mathbb{R}^{1\times d}$ be a stream of rows, $\eps>0$, and $\A=\r_{n-W+1}\circ\ldots\circ\r_n$ be the matrix consisting of the $W$ most recent rows. 
If there exists a deterministic online coreset algorithm for a matrix function $f$ that stores $S(n,d,\eps)$ rows, then there exists a deterministic sliding window algorithm that stores $\O{S\left(n,d,\frac{\eps}{\log n}\right)\log n}$ rows and outputs a matrix $\M$ such that $f(\M)$ is a $(1+\eps)$-approximation of $f(\A)$. 
\end{theorem}
\begin{proof}
Let $\coreset$ be a deterministic online coreset algorithm for a function $f$ that stores $S(n,d,\eps)$ rows. 
Consider \algref{alg:deterministic:framework}, setting $\mspace=S\left(n,d,\frac{\eps}{\log n}\right)$. 
Let $t\le\mspace$ and suppose $\B_0$ contains the $t$ rows $\r_{m-t+1},\ldots,\r_m$ at any time $m$. 
We show the stronger property that at any time $m$, the rows of $\B_0\circ\ldots\circ\B_i$ can be simultaneously used to provide a $\left(1+\frac{\eps}{\log n}\right)^i$ approximation to $f(\A_i)$ for all integers $1\le i\le t+(2^{i-1})\mspace$, where $\A_i=\r_{m-i+1}\circ\ldots\circ\r_m$. 

Let $\B_{i_1},\ldots,\B_{i_x}$ be the nonempty blocks after $B_0$, with $0<i_1<\ldots<i_x$. 
By \lemref{lem:coreset:block}, $\B_{i_1}$ is a $\left(1+\frac{\eps}{\log n}\right)^{i_1}$ online coreset for the $2^{i_1-1}\mspace$ rows before time $m-t+1$. 
Similarly, $\B_{i_y}$ is a $\left(1+\frac{\eps}{\log n}\right)^{i_y}$ coreset for the $2^{i_y-1}\mspace$ rows before the rows represented by $\B_{i_{y-1}}$. 
If $t<\mspace$, then no coresets are merged and thus the hypothesis holds by the definition of online coreset. 

Let $j$ be the maximal index such that $i_j=j$. 
If $t=\mspace$, then $\B_{i_j+1}$ is formed by using $\coreset$ to reduce from the merging of $\B_0\circ\B_{i_j}$. 
Thus, $\B_{i_j+1}$ is a $\left(1+\frac{\eps}{\log n}\right)^{j+1}$ online coreset for the $2^{i_j}\mspace$ rows that appear before the rows represented by $\B_{i_{j+1}}$, so the hypothesis again holds by the definition of online coreset. 
Since $\A=\r_{n-W+1}\circ\ldots\circ\r_n$, then \algref{alg:deterministic:framework} outputs a matrix $\M$ such that $f(\M)$ is a $\left(1+\frac{\eps}{\log n}\right)^{\log n}$ approximation of $f(\A)$. 
The correctness then follows by rescaling $\eps$. 

$\B_0$ stores at most $\mspace=S\left(n,d,\frac{\eps}{\log n}\right)$ rows. 
The algorithm stores at most $\log n$ blocks and since the error parameter is $\frac{\eps}{\log n}$ to each call of $\coreset$ in 
each block $\B_i$ with $i>1$ stores at most $S\left(n,d,\frac{\eps}{\log n}\right)$ rows by the definition of $\mspace$. 
Hence, the algorithm stores $\O{S\left(n,d,\frac{\eps}{\log n}\right)\log n}$ rows at any given time. 
\end{proof}

The online row sampling algorithm of~\cite{CohenMP16} shows the existence of an online coreset for spectral approximation. 
Note that if runtime and space are not issues, this coreset can be explicitly computed by computing the online leverage scores, enumeration over sufficiently small subsets of scaled rows, and checking whether a subset is an online coreset for spectral approximation. 
\begin{theorem}[Online Coreset for Spectral Approximation]
\cite{CohenMP16}
\thmlab{thm:l2:coreset}
For a matrix $\A\in\mathbb{R}^{n\times d}=\a_1\circ\ldots\circ\a_n$, there exists a constant $C>0$ and a deterministic algorithm $\coreset(\A,\eps)$ that outputs an online coreset of $\frac{Cd}{\eps^2}\log n\log\kappa$ weighted rows of $\A$. 
For any $i\in[n]$, let $\M_i$ be the weighted rows of $\A$ in the coreset that appear at time $i$ or before. 
Then $(1-\eps)\norm{\A_i\x}_2\le\norm{\M_i\x}_2\le(1+\eps)\norm{\A_i\x}_2$ for all $\x\in\mathbb{R}^d$, where $\A_i=\a_1\circ\ldots\circ\a_i$. 
\end{theorem}
\noindent
Then \thmref{thm:coreset:framework} and \thmref{thm:l2:coreset} imply:
\begin{theorem}[Deterministic Sliding Window Algorithm for Spectral Approximation]
\thmlab{thm:det:spectral}
Let $\r_1,\ldots,\r_n\in\mathbb{R}^{1\times d}$ be a stream of rows and $\kappa$ be the condition number of the stream. 
Let $\eps>0$ and $\A=\r_{n-W+1}\circ\ldots\circ\r_n$ be the matrix consisting of the $W$ most recent rows. 
There exists a deterministic algorithm that stores $\O{\frac{d}{\eps^2}\log^4 n\log\kappa}$ rows and outputs a matrix $\M$ such that $(1-\eps)\norm{\A\x}_2\le\norm{\M\x}_2\le(1+\eps)\norm{\A\x}_2$ for all $\x\in\mathbb{R}^d$. 
\end{theorem}
\noindent
\thmref{thm:online:pcp} shows that the existence of an online coreset for computing a rank $k$ projection-cost preservation. 
\begin{theorem}[Online Coreset for Rank $k$ Projection-Cost Preservation]
\thmlab{thm:pcp:coreset}
For a matrix $\A\in\mathbb{R}^{n\times d}=\a_1\circ\ldots\circ\a_n$, there exists a constant $C>0$ and a deterministic algorithm $\coreset(\A,\eps)$ that outputs an online coreset of $\frac{Ck}{\eps^2}\log n\log\kappa$ weighted rows of $\A$. 
For any $i\in[n]$, let $\M_i$ be the weighted rows of $\A$ in the coreset that appear at time $i$ or before. 
Then $\M_i$ is a $(1+\eps)$ projection-cost preservation for $\A_i:=\a_1\circ\ldots\circ\a_i$. 
\end{theorem}
\noindent
Thus \thmref{thm:coreset:framework} and \thmref{thm:pcp:coreset} give:
\begin{theorem}[Deterministic Sliding Window Algorithm for Rank $k$ Projection-Cost Preservation]
\thmlab{thm:det:pcp}
Let $\r_1,\ldots,\r_n\in\mathbb{R}^{1\times d}$ be a stream of rows and $\kappa$ be the condition number of the stream. 
Let $\eps>0$ and $\A=\r_{n-W+1}\circ\ldots\circ\r_n$ be the matrix consisting of the $W$ most recent rows. 
There exists a deterministic algorithm that stores $\O{\frac{k}{\eps^2}\log^4 n\log\kappa}$ rows and outputs a matrix $\M$ such that $(1-\eps)\norm{\A-\A\P}_F\le\norm{\M-\M\P}_F\le(1+\eps)\norm{\A-\A\P}_F$ for all rank $k$ orthogonal projection matrices $\P\in\mathbb{R}^{d\times d}$.
\end{theorem}

For $\ell_1$-subspace embeddings, we can use our online coreset from \thmref{thm:online:lp}, but in fact \cite{CohenP15} showed the existence of an offline coreset for $\ell_1$-subspace embeddings that stores a smaller number of rows. 
The offline coreset of \cite{CohenP15} is based on sampling rows proportional to their Lewis weights. 
We define a corresponding online version of Lewis weights:
\begin{definition}[(Online) Lewis Weights]
\deflab{def:lewis}
For a matrix $\A=\a_1\circ\ldots\circ\a_n\in\mathbb{R}^{n\times d}$, let $\A_i=\a_1\circ\ldots\circ\a_i$. 
Let $w_i(\A)$ denote the Lewis weight of row $\a_i$. 
Then the Lewis weights of $\A$ are the unique weights such that $w_i(\A)=\left(\a_i(\A^\top\W^{-1}\A)^{-1}\a_i^\top\right)^{1/2}$, where $\W$ is a diagonal matrix with $\W_{i,i}=w_i(\A)$. 
Equivalently, $w_i(\A)=\tau_i(\W^{-1/2}\A)$, where $\tau_i(\W^{-1/2}\A)$ denotes the leverage score of row $i$ of $\W^{-1/2}\A$. 
We define the \emph{online Lewis weight} of $\a_i$ to be the Lewis weight of row $\a_i$ with respect to the matrix $\A_i$. 
\end{definition}
The results of \cite{CohenP15} hold with high probability and they also show that Lewis weights cannot increase with the addition of new rows, so perhaps it is not surprising that their construction can be easily modified to form an online coreset based on sampling rows proporitional to their \emph{online} Lewis weights. 
\begin{lemma}[Monotonicity of Lewis Weights]
\cite{CohenP15}
\lemlab{lem:lewis:monotonic}
For a matrix $\A=\a_1\circ\ldots\circ\a_n\in\mathbb{R}^{n\times d}$ and $i\in[n]$, let $w_i(\A)$ denote the Lewis weight of row $\a_i$ with respect to $\A$ and let $\tau_i(\B)$ denote the Lewis weight of row $\a_i$ with respect to $\B:=\A\circ\r$ for any row $\r\in\mathbb{R}^{d}$. 
Then $\tau_i(\A)\ge\tau_i(\B)$. 
\end{lemma}
\begin{theorem}[Online Coreset for $\ell_1$-Subspace Embedding]
\thmlab{thm:lp:coreset}
\cite{CohenP15}
Let $\A\in\mathbb{R}^{n\times d}=\a_1\circ\ldots\circ\a_n$,
If there exists an upper bound $ C$ on the sum of the online Lewis weights of $\A$, then there exists a constant $C>0$ and a deterministic algorithm $\coreset(\A,\eps)$ that outputs an online coreset of $\frac{C}{\eps^2}\log n$ weighted rows of $\A$. 
For any $i\in[n]$, let $\M_i$ be the weighted rows of $\A$ in the coreset that appear at time $i$ or before. 
Then $\M_i$ is an $\ell_1$-subspace embedding for $\A_i:=\a_1\circ\ldots\circ\a_i$ with approximation $(1+\eps)$. 
\end{theorem}
It thus suffices to analyze the sum of the online Lewis weights. 
We first require a few structural results on Lewis weights. 
\cite{CohenP15} gives an iterative algorithm (\figref{fig:iter:lewis}) that converges toward the Lewis weights. 
\begin{figure}[!htb]
\begin{mdframed}
\begin{enumerate}
\item
Given input matrix $\A\in\mathbb{R}^{n\times d}$, initialize $\W^{(0)}\gets\I_n$
\item
Repeat: $\W^{(j)}_{i,i}\gets\left(\a_i(\A^\top(\W^{(j-1)})^{-1}\A)^{-1}\a_i^\top\right)^{1/2}$ for all $i\in[n]$
\end{enumerate}
\end{mdframed}
\caption{Iterative algorithm for computing Lewis weights.}
\figlab{fig:iter:lewis}
\end{figure}

The algorithm in \figref{fig:iter:lewis} has the following property, which shows that it is a contraction mapping and thus converges to the Lewis weights due to the Banach fixed point theorem. 
\begin{lemma}
\lemlab{lem:lewis:prop}
\cite{CohenP15}
If $\W$ is diagonal matrix representing the Lewis weights and $\frac{1}{\gamma}\W^{(j)}\preceq\W\preceq\gamma\W^{(j)}$ for some $j>0$ and $\gamma>1$ in the algorithm in \figref{fig:iter:lewis}, then $\frac{1}{\sqrt{\gamma}}\W^{(j+1)}\preceq\W\preceq\sqrt{\gamma}\W^{(j+1)}$. 
Thus, $\W_{(j)}$ converges to the Lewis weights.  
\end{lemma}
We now show that if the rows of a matrix has uniformly bounded leverage scores, then the Lewis weights must also be uniformly bounded. 
Although the statement is similar to \lemref{lem:uniform_to_uniform}, the structure of the Lewis weights requires a different strategy of iteratively computing values that converge to the Lewis weights. 
\begin{lemma}[Uniform leverage scores imply uniform Lewis weights]
\lemlab{lem:uniform_lewis}
If the leverage scores of $\A$ are at most $\frac{\gamma d}{n}$ for some $\gamma\ge1$, then the Lewis weights of $\A$ are at most $\frac{\gamma d}{n}$. 
\end{lemma}
\begin{proof}
Let $ C>0$ and suppose $\a_i(\A^\top\A)^{-1}\a_i^\top\le C$ for all $i\in[n]$. 
Then for iteration $j$ in the above procedure, we have $\W^{(j)}\preceq C^{1-2^{-j}}\I_n$.

For base case $j=1$, we have $\W^{(1)}_{i,i}=\left(\a_i(\A^\top\A)^{-1}\a_i^\top\right)^{1/2}\le C^{1/2}$ so that $\W^{(1)}\preceq C^{1/2}\I_n$.  
For iteration $j$, we have $\W^{(j)}_{i,i}=\left(\a_i(\A^\top(\W^{(j-1)})^{-1}\A)^{-1}\a_i^\top\right)^{1/2}$, where $\W^{(j-1)}\preceq C^{1-2^{-(j-1)}}\I_n$ by the inductive hypothesis. 
Thus $(\W^{(j-1)})^{-1}\succeq\frac{1}{ C^{1-2^{-(j-1)}}}\I_n$, so 
\begin{align*}
\A^\top(\W^{(j-1)})^{-1}\A&\succeq\frac{1}{ C^{1-2^{-(j-1)}}}\A^\top\A\\
(\A^\top(\W^{(j-1)})^{-1}\A)^{-1}&\preceq C^{1-2^{-(j-1)}}(\A^\top\A)^{-1}\\
(\W^{(j)}_{i,i})^2=\a_i(\A^\top(\W^{(j-1)})^{-1}\A)^{-1}\a_i^\top&\le C^{1-2^{-(j-1)}}\a_i(\A^\top\A)^{-1}\a_i^\top\le C^{2-2^{-(j-1)}}
\end{align*}
Thus $\W^{(j)}_{i,i}\le C^{1-2^{-j}}$ so that $\W^{(j)}\preceq C^{1-2^{-j}}\I_n$. 
The claim follows from setting $C=\frac{\gamma d}{n}$ and the convergence of the algorithm to the Lewis weights by \lemref{lem:lewis:prop}. 
\end{proof}
We now show that splitting a row of a matrix into two rows that sum up to the original row alters the Lewis weights in the natural way. 
\begin{lemma}[Splitting Invariance of Lewis Weights]
\lemlab{lem:lewis:split}
Given $\A=\a_1\circ\ldots\circ\a_n\in \mathbb{R}^{n\times d}$, let 
\[\B\in\mathbb{R}^{(n+1) \times d}=\a_1\circ\ldots\circ\a_{j-1}\circ\b_j\circ\a_{j+1}\circ\ldots\circ\a_n\circ\b_{n+1}\]
have the same rows but with row $j$ have $\b_j=(1-\gamma)\cdot\a_j$ and row $n+1$ have $\b_{n+1}=\gamma\a_j$ for some $\gamma \in [0,1]$ and $j\in[n]$. 
, 
Then we have $w_i(\A)=w_i(\B)$ for any $i \notin \{j,n+1\}$, $w_j(\B)=(1-\gamma)w_j(\A)$ and $w_{n+1}(\B)=\gamma w_j(\A)$. 
\end{lemma}
\begin{proof}
Suppose without loss of generality that $j = n$.
Let $\W\in\mathbb{R}^{n \times n}$ be the diagonal Lewis weight scaling matrix with $\W_{i,i} = w_i(\A)$. 
Let $\barW\in\mathbb{R}^{(n+1) \times (n+1)}$ match $\W$ on the first $n-1$ rows. 
Let $\barW_{n,n} = (1-\gamma) w_n(\A)$ and $\barW_{n+1,n+1} = \gamma w_n(\A)$. 
To prove the claim, it suffices by the uniqueness of Lewis weights to show that $\tau_i(\barW^{-1/2}\B) = \barW_{i,i}$ for all $i \in [n+1]$:

Note that the first $n-1$ rows of $\barW^{-1/2}\B$ are the same as those of $\W^{-1/2}\A$. 
The last two rows are $w_n(\A)^{-1/2} (1-\gamma)^{-1/2} \cdot (1-\gamma)\a_n = w_n(\A)^{-1/2} (1-\gamma)^{1/2} \a_n$ and $w_n(\A)^{-1/2} \gamma^{1/2} \a_n$, respectively.
That is, the last two rows are scaled by $(1-\gamma)^{1/2}$ and $\gamma^{1/2}$, respectively, compared to $\W^{-1/2}\A$. 
Thus we can see that for any vector $\y$, $\norm{\W^{-1/2}\A\y}_2^2 = \norm{\barW^{-1/2}\B\y}_2^2$.
By the maximization characterization of leverage scores, the leverage scores of the first $n-1$ rows of $\W^{-1/2}\A$ are thus identical to those of $\barW^{-1/2}\B$, so that $\W_{i,i}=\barW_{i,i}$ for $1\le n-1$. 

For the last two rows, we have $\tau_n(\barW^{-1/2}\B) = (1-\gamma)\cdot\tau_n(\W^{-1/2}\A) = (1-\gamma)w_n(\A) = \barW_{n,n}$. 
Similarly, $\tau_{n+1}(\barW^{-1/2}\B) = \gamma\cdot\tau_n(\W^{-1/2}\A) = \gamma w_n(\A) = \barW_{n+1,n+1}$. 
Hence, $\tau_i(\barW^{-1/2}\B) = \barW_{i,i}$ for all $i \in [n+1]$, which implies the claim by the uniqueness of Lewis weights. 
\end{proof}

\begin{corollary}[Monotonicity of Lewis Weights II]
\corlab{cor:lewis_mono}
For any $\A\in \mathbb{R}^{n \times d}$, let $\B\in\mathbb{R}^{n \times d}$ have the same rows but with row $j$ reweighted by a factor $(1-\gamma)$ for some $\gamma \in [0,1]$. 
Then for all $i \neq j$, $w_i(\B)\ge w_i(\A)$. 
\end{corollary}
\begin{proof}
Let $\barB\in \R^{(n+1) \times d}$ have row $j$ set to $(1-\gamma)\cdot\a_j$ and row $n+1$ row set to $\gamma \cdot\a_j$. 
Then by \lemref{lem:lewis:split}, for all $i \neq j$, $w_i(\barB) = w_i(\A)$. 
By the monotonicity of Lewis weights through the addition of a new row from \lemref{lem:lewis:monotonic}, we thus have $w_i(\B) \ge w_i(\barB) = w_i(\A)$.
\end{proof}

We now bound the sum of the online Lewis weights by first considering a regularization of the input matrix, which we show only slightly alters each score. 
\begin{lemma}[Bound on Sum of Online Lewis Weights]
\lemlab{lem:online:lewis:sum}
Let $\A=\a_1\circ\ldots\circ\a_n\in\mathbb{R}^{n\times d}$. 
Let $w^{\OL}_i(\A)$ be the online Lewis weight of $\a_i$ with respect to $\A$, for each $i\in[n]$. 
Then $\sum_{i=1}^n w^{\OL}_i(\A)=\O{d\log n\log(\kappa nd)}$, where $\kappa$ is the subset condition number of $\A$. 
\end{lemma}
\begin{proof}
For the first part of the proof, we use the same argument as \lemref{lem:ol:sen:bound}. 
Let $\lambda>0$, $\B_0=\lambda\I_d$, $\B=\underbrace{\B_0\circ\ldots\circ\B_0}_{n\text{ times}}$, and $T$ be an upper bound on the sum of the online leverage scores of any submatrix of $\X:=\B\circ\A$. 
For each $i\in[n]$, let $w_i(\X)$ and $\tau_i(\X)$ be the Lewis weight and leverage score of $\a_i$ with respect to $\X$, respectively. 
By \lemref{lem:online:space} and \lemref{lem:online:reweighting}, there exist a set $S\subseteq[n]$ of size at most $d$ and a diagonal matrix $\W$ with $\frac{n}{2}$ non-unit entries such that $\tau^{\OL}_i(\B\circ(\W\A))\le\frac{2T}{n}$ for all $i\in[n]$, where $T$ is the sum of the online leverage scores of $\A$. 
Let $\C=\B\circ(\W\A)$ and $S$ be the set of all indices $i$ such that $\W_{i,i}=1$. 
For each $i\in[n]$, let $\C_{i+nd}$ be the matrix formed by the first $i+nd$ rows of $\C$. 
Since the online leverage score of each row $\a_j$ with $j\le i$ in $\C$ is at most $\frac{2T}{n}$ and the leverage scores of the first $nd$ rows of $\C_{i+nd}$ is at most $\frac{1}{n}$, then all rows of $\C_{i+nd}$ have leverage score at most $\frac{2T}{n}$. 
By \lemref{lem:uniform_lewis}, all rows of $\C_{i+nd}$ have Lewis weight at most $\frac{2T}{n}$. 

Now for any $i\in S$, we have from the monotonicity of Lewis weights from decreasing the weight of other rows in \corref{cor:lewis_mono} that
\begin{align*}
w_i^{\OL}(\X)\le w_i^{\OL}(\C)= w_i^{\OL}(\C_{i+nd})=w_i(\C_{i+nd})\le\frac{2T}{n}
\end{align*}
so that $\sum_{i\in S}w_i^{\OL}(\X)\le2T$. 
Let $U= \{1,\ldots, n\} \setminus S$ and let $\X_U$ be the matrix $\X$ restricted to the rows whose indices are in $U$. 
By monotonicity of Lewis weights from the addition of new rows in \lemref{lem:lewis:monotonic}, $w_i^{\OL}(\X)\le w_i^{\OL}(\X_U)$.  
By induction, 
\begin{align*}
\sum_{i\in U}w_i^{\OL}(\X)\le\sum_{i\in U}w_i^{\OL}(\X_U) \le 2T\log n
\end{align*}
Thus we have $\sum_i w_i^{\OL}(\X)=\O{T\log n}$, which is $\O{d\log n\log(\kappa nd)}$ by \lemref{lem:online:space}. 

We now assume without loss of generality that all rows of $\A$ have Lewis weight at least $\frac{1}{n}$, since if $\a_n$ has Lewis weight less than $\frac{1}{n}$, then its online Lewis weight is similarly bounded. 
Thus, these rows contribute at most $\O{\frac{\log n}{\eps^2}}$ samples with high probability and removing these rows would only increase the Lewis weights of the other rows by monotonicity~\lemref{lem:lewis:monotonic}. 
Let $\lambda=\frac{\gamma}{n}$, where $\gamma$ is the minimum of the nonzero singular values across all submatrices of $\A$. 
Consider approximating the Lewis weights for $\A$ through \figref{fig:iter:lewis}. 
Since all Lewis weights of $\A$ are at least $\frac{1}{n}$, then by \lemref{lem:lewis:prop}, at most $\O{\log n}$ iterations of the algorithm in \figref{fig:iter:lewis} are necessary to obtain a constant factor approximation to the true Lewis weights. 
The intuition is that the approximation for the Lewis weight of each row $\a_i$ with respect to $\A$ in iteration $j$ of the algorithm is within $\left(1+\frac{j}{n}\right)$ of the approximation for the Lewis weight of row $\a_i$ with respect to $\X$. 
Hence in $\O{\log n}$ iterations, the respective approximations will be within a constant factor approximation. 

Formally, let $\W^{(j)}_{\A}$ be the weight matrix on iteration $j$ of the algorithm in \figref{fig:iter:lewis} on input $\A$ and let $\U^{(j)}_{\X}\circ\W^{(j)}_{\X}$ be the weight matrix on iteration $j$ of the algorithm on input $\X$, so that $\U^{(j)}_{\X}$ has $nd$ rows and $\W^{(j)}_{\X}$ has $n$ rows. 
Thus, we have $\W^{(0)}_{\A}=\W^{(0)}_{\X}=\I_n$ from the initialization of the algorithm in \figref{fig:iter:lewis}. 
The weight for row $\a_i$ is then updated by $\left(\a_i(\A^\top(\W^{(j-1)}_{\A})^{-1}\A)^{-1}\a_i^\top\right)^{1/2}$ for $\W^{(j)}_{\A}$ and by $\left(\a_i(\A^\top(\W^{(j-1)}_{\X})^{-1}\A+\B^\top(\U^{(j-1)}_{\X})^{-1}\B)^{-1}\a_i^\top\right)^{1/2}$. 
Since each Lewis weight is at most $1$ and the algorithm is a contraction mapping, then we certainly have $\U^{(j-1)}_{\X}\preceq\I_{nd}$ so that 
\[\B^\top(\U^{(j-1)}_{\X})^{-1}\B\preceq n\lambda^2\I_{nd}\preceq\frac{\sigma_{\min}(\A^\top\A)}{n}\I_{nd}\preceq\frac{\sigma_{\min}(\A^\top(\W^{(j-1)}_{\A})^{-1}\A)}{n}\I_{nd}.\]
Thus $\W^{(1)}_{\A}\preceq\left(1+\frac{1}{n}\right)\W^{(1)}_{\X}$ and by induction, $\W^{(j)}_{\A}\preceq\left(1+\frac{j}{n}\right)\W^{(j)}_{\X}$. 

Hence the Lewis weight of $\a_n$ with respect to $\A$ is within a constant factor of the Lewis weight of $\a_n$ with respect to $\X$, and thus the \emph{online} Lewis weight of $\a_n$ with respect to $\A$ is within a constant factor of the \emph{online} Lewis weight of $\a_n$ with respect to $\X$. 
Since we have $\sum_i w_i^{\OL}(\X)=\O{d\log n\log(\kappa nd)}$, then it follows that $\sum_i w_i^{\OL}(\A)=\O{d\log n\log(\kappa nd)}$. 
\end{proof}
\noindent
Then from \thmref{thm:coreset:framework}, \thmref{thm:lp:coreset}, and \lemref{lem:online:lewis:sum}, we have the following:
\begin{theorem}[Deterministic Sliding Window Algorithm for $\ell_1$-Subspace Embedding]
\thmlab{thm:det:lp}
Let $\r_1,\ldots,\r_n\in\mathbb{R}^{1\times d}$ be a stream of rows with the subset condition number $\kappa$. 
Let $\eps>0$ and $\A=\r_{n-W+1}\circ\ldots\circ\r_n$ be the matrix consisting of the $W$ most recent rows. 
There exists a deterministic algorithm that stores $\O{\frac{d}{\eps^2}\log^5 n\log(\kappa nd)}$ rows and outputs a matrix $\M$ such that $(1-\eps)\norm{\A\x}_1\le\norm{\M\x}_1\le(1+\eps)\norm{\A\x}_1$ for all $\x\in\mathbb{R}^d$. 
\end{theorem}
Note that \thmref{thm:coreset:framework} also provides an approach for a \emph{randomized} $\ell_1$-subspace embedding sliding window algorithm that improves upon the space requirements of \thmref{thm:sw:lp}, by using online coresets \emph{randomly} generated sampling rows with respect to their online Lewis weights. 
Moreover, recall that in some settings, the online model does not require algorithms to use space sublinear in the size of the input. 
In these settings, \lemref{lem:online:lewis:sum} could also potentially be useful in a row-sampling based algorithm for online $\ell_1$-subspace embedding that improves upon the sample complexity of \thmref{thm:online:lp}.

\paragraph*{Fast Online Coreset Construction.}
To quickly obtain an online coreset for spectral approximation, we derandomize the $\onlinebss$ algorithm of \cite{CohenMP16} with a $\O{\log n}$ overhead. 
Their algorithm requires an input $\lambda\ge 0$ and the rows of $\A=\a_1\circ\ldots\circ\a_n\in\mathbb{R}^{n\times d}$ in a stream and uses upper barrier and lower barrier method of \cite{BSS12} to design a probability distribution for each row that depends on the previously sampled rows but will \emph{always} output a matrix $\M$ such that $(1-\eps)(\A^\top\A+\lambda\I)\preceq\M^\top\M+\lambda\I\preceq(1+\eps)(\A^\top\A+\lambda\I)$. 
Instead, \cite{CohenMP16} bounds the expected number of rows sampled by $\onlinebss$ based on the sampling probabilities of each row. 

First, we observe that the sampling probability for each row in $\onlinebss$ can be rounded up to a power of $2$ between $\frac{1}{n}$ and $1$. 
As before, the rows with sampling probability less than $\frac{1}{n}$ incur only $\O{1}$ additional rows sampled in expectation. 
Although the sampling probability $p_i$ of row $\a_i$ is a random variable that depends on the previous rows sampled, \cite{CohenMP16} bounds the expectation of $p_i$ by the online (ridge) leverage score $\tau^{\OL}_i$, so that the expected number of sampled rows is at most $\sum_{i=1}^n\tau_i$. 
Now consider a process in which we pick a random threshold $t\in [0,1]$ and then sample row $\a_i$ if $p_i>t$. 
It can be shown that the expectation of $p_i$ is still at most the online (ridge) leverage score $\tau^{\OL}_i$. 
Since there are only $\log n$ distinct sampling probabilities, we can derandomize this process by trying all $\log n$ values of $t$. 
For completeness, we describe $\onlinebss$ and our derandomization in full detail in \appref{sec:onlinebss}.

\section*{Acknowledgements}
Vladimir Braverman is supported in part by the National Science Foundation under Grants No. 1447639, 1650041, and 1652257, CISCO faculty award, and by the ONR Award N00014-18-1-2364. 
Petros Drineas is supported in part by NSF IIS-1661760 and NSF IIS-1319280.
Jalaj Upadhyay is supported in part by NSF 1546482, NSF IIS-1838139 and CISCO faculty award. 
David P. Woodruff is supported by the National Science Foundation under Grant No. CCF-1815840, the National Institute of Health (NIH) grant 5R01 HG 10798-2, subaward through Indiana University Bloomington, as well as Office of Naval Research (ONR) grant N00014-18-1-2562. 
Samson Zhou is supported in part by NSF CCF-1649515 and by a Simons Investigator Award of D. Woodruff. 
We thank Yi Li for pointing out an issue in an earlier version regarding subset condition number dependency. 

\def\shortbib{0}
\bibliography{references}
\bibliographystyle{alpha}

\appendix
\section{Smooth Histograms and RandNLA Functions}
\applab{sec:smooth-histogram}
In this section, we define the popular smooth histogram framework for sliding window algorithms and then give counterexamples showing the approach is not amenable to many interesting linear algebraic functions. 

\subsection{Background on Smooth Histograms}
\applab{sec:smooth-background}
An initial framework for approaching problems in the sliding window model is the {\em exponential histogram} data structure, introduced by Datar\etal~\cite{DatarGIM02}. 
Given a function $f$ to approximate in the sliding window model, the exponential histogram partitions the data stream into ``buckets'', time intervals for which the evaluation of $f$ on the data in each partition is exponentially increasing. 
For example, suppose we are given a data stream of integers and we want to approximate the number of ones in the sliding window within a factor of $2$. 
In the exponential histogram data structure, the smallest bucket consists of all elements in the data stream from the most recent element to the most recent element whose value is one. 
The next bucket would consist of all previous elements until two elements whose values are one are seen. 
Similarly, the $i$\th bucket consists of the previous elements until $2^i$ instances of ones are seen. 
The key observation is that because the buckets are exponentially increasing by powers of two, the starting point of the sliding window falls inside some bucket, and it will provide a $2$-approximation to the number of ones seen in the sliding window even though it does not know exactly where the starting point is. 
Datar and Motwani \cite{DatarM07} show that the exponential histogram framework is applicable to the class of ``weakly additive'' functions. 
Namely, if we let $A$ and $B$ be adjacent buckets and assume that $0 \leq f(A)\le \poly(N)$, where $N$ is the length of the data stream, then a function is weakly additive if there exists some fixed constant $C_f\ge 1$ such that $f(A)+f(B) \leq f(A\cup B) \le C_f(f(A)+f(B))$ holds for arbitrary $A$ and $B$, where we recall that $A\cup B$ represents the concatenation of $A$ and $B$. 
Moreover, if there exists a sketch of $f$, as well as a ``composition'' function that computes the sketch of $f(A\cup B)$ from the sketches of $f(A)$ and $f(B)$, then the exponential histogram framework provides a sliding window algorithm to approximate $f$. 

Since the buckets in the exponential histogram data structure consist of disjoint elements, a crucial underlying requirement is 
that an approximation of $f$ must be deducible from the merger of the information from these buckets. 
For example, it is not clear how to maintain buckets for the goal of approximating the geometric mean of a sliding window. 
To that effect, Braverman and Ostrovsky \cite{BravermanO07} define the notion of a smooth function, and provide the smooth histogram data structure as a framework for approximating smooth functions of sliding windows. 
They also show that the class of smooth functions contains the class of weakly additive functions, as well as a number of other functions, such as the geometric mean. 

\paragraph{Smooth Histogram.} 
Given adjacent buckets $A$, $B$, and $C$, a smooth function demands that if $(1-\beta)f(A\cup B)\le f(B)$, then $(1-\alpha)f(A\cup B\cup C)\le f(B\cup C)$ for some constants $0<\beta\le\alpha<1$. 
Informally, a smooth function has the property that once a suffix of a data stream becomes a good approximation, then it \emph{always} remains a good approximation, even with the arrival of new elements in the stream. 
With this definition of smooth function in mind, the smooth histogram data structure maintains a number of ``checkpoints'' throughout the data stream. 
Each checkpoint corresponds to a sketch of all the elements seen from the checkpoint until the most recently arrived element. 
Unlike the exponential histogram, the most recently arrived element  impacts all sketches in the smooth histogram. 
A checkpoint is created with the arrival of each new element and checkpoints are discarded when their corresponding sketches get ``too close'' to the next checkpoint. 
That is, when the corresponding sketches of two checkpoints produce values that are within $(1-\beta)$ of each other, the later checkpoint is discarded, since by the property of smooth functions, the two checkpoints would thereafter always produce values that are within $(1-\alpha)$ of each other. 
This implies that, if the function is polynomially bounded, then the smooth histogram data structure only needs a logarithmic number of checkpoints.
Moreover, Braverman and Ostrovsky \cite{BravermanO07} extend their results to the case where the sketch only provides an approximation to the evaluation of the function.


\paragraph{Smooth Functions.}
We use the notation $B \subseteq_s A$ if the stream of elements indexed by $B$ are a suffix of stream of elements indexed by $A$ (see~\figref{fig:smooth} for an example). 

\begin{figure*}[!htb]
\centering
\begin{tikzpicture}[scale=0.5]
\draw (-8cm,0cm) rectangle+(21cm,1cm);
\filldraw[shading=radial,inner color=white, outer color=gray!90, opacity=0.2] (-8cm,0cm) rectangle+(21cm,1cm);
\draw[decorate,decoration={brace,amplitude=10pt}](-8cm,1.2cm) -- (13cm,1.2cm);
\node at (2.5cm,2.6cm){$A$};
\draw (-5cm,-1cm) rectangle+(18cm,1cm);
\filldraw[shading=radial,inner color=white, outer color=gray!90, opacity=0.2] (-5cm,-1cm) rectangle+(18cm,1cm);
\draw[decorate,decoration={brace,mirror,amplitude=10pt}](-5cm,-1.2cm) -- (13cm,-1.2cm);
\node at (4cm,-2.6cm){$B$};

\draw (13cm,-0cm) rectangle+(6cm,1cm);
\filldraw[thick, top color=white,bottom color=blue!50!] (13cm,-0cm) rectangle+(6cm,1cm);
\draw[decorate,decoration={brace,amplitude=10pt}](13cm,1.2cm) -- (19cm,1.2cm);
\node at (16cm,2.6cm){$C$};
\draw (13cm,-1cm) rectangle+(6cm,1cm);
\filldraw[thick, top color=white,bottom color=blue!50!] (13cm,-1cm) rectangle+(6cm,1cm);
\draw[decorate,decoration={brace,mirror,amplitude=10pt}](13cm,-1.2cm) -- (19cm,-1.2cm);
\node at (16cm,-2.6cm){$C$};
\end{tikzpicture}
\caption{$A$, $B$ and $C$ are substreams, where $B$ is a suffix of $A$ and $C$ is adjacent to $A$ and $B$. 
Smoothness says that if $B$ is a ``good'' approximation of $A$, then $B\cup C$ will be a ``good'' approximation of $A\cup C$ for any $C$.}
\figlab{fig:smooth}
\end{figure*}
\begin{definition}[Smooth function]
\cite{BravermanO07}
\deflab{def:smooth}
A function $f \ge 1$ is $(\alpha,\beta)$-smooth if it has the following properties:
\begin{description}
\item [Monotonicity]
$f(A)\ge f(B)$ for $B\subseteq_s A$ ($B$ is a suffix of $A$)
\item [Polynomial boundedness]
There exists $c>0$ such that $f(A)\le n^c$.
\item [Smoothness]
There exists $\alpha\in(0,1)$, $\beta\in(0,\alpha]$ so that if $B\subseteq_s A$ and $(1-\beta)f(A)\le f(B)$, then $(1-\alpha)f(A\cup C)\le f(B\cup C)$ for any adjacent $C$.
\end{description}
\end{definition}
The {\em smooth histogram} data structure estimates smooth functions in the sliding window model. 

\begin{definition}[Smooth Histogram \cite{BravermanO07}]
Let $g$ be a function that maintains a $(1+\eps)$-approximation of an $(\alpha,\beta)$-smooth function $f$ that takes as input a starting index and ending index in the data stream. 
The approximate smooth histogram is a structure that consists of an increasing set of indices $X_N=\{x_1,\ldots,x_s=N\}$ and $s$ instances of an algorithm $\Lambda$, namely $\Lambda_1,\ldots,\Lambda_s$ with the following properties:
\begin{enumerate}
\item
$x_1$ corresponds to either the beginning of the data stream, or an expired point.
\item
$x_2$ corresponds to an active point. 
\item
For all $i<s$, one of the following holds:
\begin{enumerate}
\item
$x_{i+1}=x_i+1$ and $g(x_{i+1},N)<\left(1-\frac{\beta}{2}\right)g(x_i,N)$.
\item
$(1-\alpha)g_(x_i,N)\le g(x_{i+1},N)$ and if $i+2\le s$ then $g(x_{i+2},N)<\left(1-\frac{\beta}{2}\right)g(x_i,N)$.
\end{enumerate}
\item
$\Lambda_i=\Lambda(x_i,N)$ maintains $g(x_i,N)$.
\end{enumerate}
\end{definition}

Unfortunately, despite being quite general, the smooth histogram frameworks cannot be applied to many interesting problems that have been extensively studied in the streaming model, such as clustering~\cite{BravermanLLM15,BravermanLLM16,BravermanLUZ19}, submodular maximization~\cite{ChenNZ16,EpastoLVZ17}, or heavy-hitter detection~\cite{LeeT06b,BGO14,BravermanGLWZ18,upadhyay2019sublinear}. 
We now show that smooth histogram cannot be applied to many interesting numerical linear algebraic functions. 

\subsection{Lack of $(\alpha,\beta)$-smoothness of RandNLA Functions}
\applab{sec:counters}
In this section, we show that the spectral norm, vector induced matrix norms, linear and generalized regression, and low-rank approximation are not amenable for the smooth histogram framework. 

We first prove that low-rank approximation is not smooth for any meaningful parameters $(\alpha, \beta)$ in~\defref{def:smooth}, even when the best low-rank approximations are nonzero.
\newcommand{\lemlranotsmooth}
{
Low-rank approximation is not smooth for any meaningful parameters $(\alpha, \beta)$  in~\defref{def:smooth} that gives us a constant factor approximation. 
}
\begin{lemma}
\lemlab{ex:lra:not:smooth}
\lemlranotsmooth
\end{lemma}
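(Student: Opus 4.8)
The plan is to refute smoothness directly: for \emph{every} constant $\alpha\in(0,1)$ (and hence every admissible $\beta\in(0,\alpha]$) I will exhibit substreams $A,B,C$ with $B\subseteq_s A$ and $C$ adjacent to $A$ and $B$ for which the premise of the smoothness clause in \defref{def:smooth} holds but its conclusion fails. The function of interest is the rank-$k$ approximation cost $f(\X):=\norm{\X-\X_k}_F^2=\sum_{i>k}\sigma_i(\X)^2$ (the argument is verbatim for $\norm{\X-\X_k}_F$); note that $f$ \emph{is} monotone and polynomially bounded in the sense of \defref{def:smooth}, since appending rows only raises $\X^\top\X$ in the Loewner order and hence the tail eigenvalue sum, so the obstruction we are after lies precisely in the smoothness clause. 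The cleanest route is to engineer an instance with $f(A)=f(B)$ \emph{exactly}, which makes $(1-\beta)f(A)\le f(B)$ true for every $\beta$, while a free scale parameter $M$ drives the ratio $f(B\cup C)/f(A\cup C)$ to $0$.

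Concretely, I would work in $\mathbb{R}^{k+1}$ (for $k=1$ this is just $\mathbb{R}^2$) and, for a parameter $M\ge 1$, let $A$ be the stream of rows $M\e_1,\dots,M\e_k,\e_1,\dots,\e_{k+1}$, let $B$ be its length-$(k+1)$ suffix $\e_1,\dots,\e_{k+1}$, and let $C$ be the single row $2M\e_{k+1}$. Then $\A^\top\A=\operatorname{diag}(1+M^2,\dots,1+M^2,1)$ and $\B^\top\B=\I_{k+1}$, so the smallest eigenvalue of each is $1$ and $f(A)=f(B)=1$. Appending $C$ gives $(\A\circ\C)^\top(\A\circ\C)=\operatorname{diag}(1+M^2,\dots,1+M^2,1+4M^2)$, whose smallest eigenvalue is $1+M^2$, so $f(A\cup C)=1+M^2$; whereas $(\B\circ\C)^\top(\B\circ\C)=\operatorname{diag}(1,\dots,1,1+4M^2)$ has smallest eigenvalue $1$, so $f(B\cup C)=1$.

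With these values the conclusion is immediate: given any target $\alpha\in(0,1)$, pick $M$ with $M^2>\frac{\alpha}{1-\alpha}$; then for every $\beta\in(0,\alpha]$ the premise $(1-\beta)f(A)=1-\beta\le 1=f(B)$ holds, yet $(1-\alpha)f(A\cup C)=(1-\alpha)(1+M^2)>1=f(B\cup C)$, so the smoothness conclusion is violated. Since this works for all constants $\alpha<1$, low-rank approximation is not $(\alpha,\beta)$-smooth for any parameters yielding a constant-factor approximation in the smooth-histogram framework; moreover $f\ge 1$ and $\X_k\ne 0$ for every matrix that appears, so this is a genuine obstruction and not an artifact of a degenerate zero instance. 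The only point requiring care — and the conceptual heart of the construction — is \emph{why} appending $C$ raises the residual for $A$ but not for $B$: the $k$ heavy rows $M\e_1,\dots,M\e_k$ saturate $A$'s top-$k$ subspace, forcing $C$'s mass along $\e_{k+1}$ into the tail, whereas in $B$ all directions are equal and $C$ is simply absorbed into the top-$k$ subspace. Choosing $C$ along $\e_{k+1}$ with magnitude exceeding $M$ is exactly what makes this happen, and with the explicit diagonal matrices above the verification is transparent.
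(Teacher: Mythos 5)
Your construction is correct and verifiably achieves the claim, but it is genuinely different from the paper's. The paper works with $k=2$ in a larger ambient dimension, building matrices whose nonzero squared singular values lie in $\{1,4d^2\}$ with carefully chosen multiplicities, and tunes the parameter $d=2/\alpha$ so that the ratio $f(B)/f(A)=\sqrt{d-1}/\sqrt{d+1}$ approaches $1$ as $d$ grows (making the premise hold for smaller and smaller $\beta$) while $f(A\cup C)/f(B\cup C)\approx 2\sqrt{d}\to\infty$. Your version works for arbitrary $k$ in $\mathbb{R}^{k+1}$ and, more importantly, engineers $f(A)=f(B)$ \emph{exactly}, so the premise $(1-\beta)f(A)\le f(B)$ holds for every $\beta\in(0,1)$ without any coupling between the stream and the target constants; the only free knob, $M$, is then used solely to drive $f(A\cup C)/f(B\cup C)=1+M^2$ to infinity. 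Decoupling premise from conclusion in this way is cleaner, and showing the failure for all $k$ (rather than a fixed $k=2$) is slightly stronger. The numerical verification via the diagonal Gram matrices is airtight: $f(A)=f(B)=1$, $f(A\cup C)=1+M^2$, $f(B\cup C)=1$, and picking $M^2>\alpha/(1-\alpha)$ defeats $(\alpha,\beta)$-smoothness for any $\beta\le\alpha$. Your preliminary check of monotonicity (via the Loewner-order argument and the min-max characterization of eigenvalues) is a worthwhile sanity check that the paper omits.

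One small expository slip worth fixing: in your closing paragraph you say appending $C=2M\e_{k+1}$ forces ``$C$'s mass along $\e_{k+1}$ into the tail'' of $A$'s spectrum. Your own computation shows the opposite: after adding $C$, the $\e_{k+1}$ direction has weight $1+4M^2>M^2+1$ and therefore enters the top-$k$, and what falls into the tail is one of the \emph{previously heavy} directions $\e_i$ ($i\le k$), whence $f(A\cup C)=1+M^2$ rather than $1+4M^2$. The asymmetry with $B$ is that in $B$ all $k$ displaced directions have weight $1$, so the new tail is cheap. This does not affect the validity of the proof, only the narration of why it works.
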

\begin{proof}
Our proof constructs a matrix and a vector explicitly.  
Let $\e_i$ be the elementary row vector with entry one in the $i\th$ position and zero elsewhere. 
Given $0<\alpha<1$, let $d=\frac{2}{\alpha}$. 
Let $S_A$ be the data stream whose first element is $2d\e_1$, second element is $\e_2$, followed by data stream $S_B$, which is a suffix of $S_A$, i.e., $S_B \subseteq_s S_A$. 
Then suppose $B$ consists of the elements $2d\e_3$, followed by $\e_{i+3}$ for $1\le i\le d$. 
Finally, let $S_C$ be the data stream consisting of the single element $2d\e_j$, where $j=4+d$. 
Then the corresponding matrices appear as below:
\[
\begin{bmatrix}
$2d$ & & & & & & & & &\\
& 1 & & & & & & & &\\\hline
& & $2d$ & & & & & & &\\
& & & 1 & & & & & &\\
& & & & 1 & & & & &\\
& & & & & \ddots & & & &\\
& & & & & & 1 & & &\\\hline
& & & & & & & $2d$ & &\\
& & & & & & & & 0 &\\
& & & & & & & & & \ddots \\
\end{bmatrix}
,\]
where we have the following matrices
\begin{align*}
\A = \begin{bmatrix}
$2d$ & & & & & &  \\
& 1 & & & & &  \\\hline
& & $2d$ & & & & \\
& & & 1 & & &  \\
& & & & 1 & &  \\
& & & & & \ddots &  \\
& & & & & & 1  
\end{bmatrix}, 
\B = \begin{bmatrix}
$2d$ & & & &\\
& 1 & & &\\
& & 1 & &\\
& & & \ddots &\\
& & & & 1 
\end{bmatrix},
\text{ and } 
\C = \begin{bmatrix}
$2d$ & &\\
 & 0 &\\
 & & \ddots \\
\end{bmatrix}.
\end{align*}

Then for $k=2$, the best rank $k$ approximation of $\A$ consists of  two rows containing $2d$ so that
\[\min_{\substack{\X \in \R^{N \times n}\\\rank(\X)=2}}\norm{\A-\X}_F=\sqrt{d+1}.\]
The best rank $k$ approximation of matrix formed by $S_B$ consists of the row containing $2d$ and any other elementary row in $\B$ so that
\[\min_{\substack{\Y \in \R^{N \times n}\\\rank(\Y)=2}}\norm{\B-\Y}_F=\sqrt{d-1}.\]
Hence, the ratio of the best low-rank approximation of $\B$ to the best low-rank approximation of $\A$ is
\[\frac{\sqrt{d-1}}{\sqrt{d+1}}>\frac{d-1}{d+1}=1-\frac{2}{d+1}>1-\alpha.\]

Now, let $\C$ represent the matrix corresponding to stream $S_{A\cup C}$ and let $\mathbf{D}$ represent the matrix corresponding to stream $S_{B\cup C}$.
Then the best rank $k$ approximation of $S_{A\cup C}$ consists of two rows containing $2d$ so that
\[\min_{\substack{\X  \in \R^{N \times n}\\ \rank(\X)=2}}\norm{\C-\X}_F=\sqrt{4d^2+d+1},\]
while the best rank $k$ approximation of $S_{B\cup C}$ consists of two rows containing $2d$ so that
\[\min_{\substack{\Y  \in \R^{N \times n}\\ \rank(\Y)=2}}\norm{\mathbf{D}-\Y}_F=\sqrt{d}.\]
Thus, the ratio of the best low-rank approximation of $S_{B\cup C}$ to the best low-rank approximation of $S_{A\cup C}$ is at least $2$, i.e., $\beta \leq -1$. 
Therefore, low-rank approximation is not smooth.
\end{proof}

We now show that $\ell_p$ regression is not smooth as per the smooth histogram framework. 
Recall that for $\A\in\mathbb{R}^{N\times n}$ and $\B\in\mathbb{R}^{N\times d}$, the generalized $\ell_p$ regression problem is the minimization problem
\[\min_{\X\in\mathbb{R}^{n\times d}}\norm{\A\X-\B}_p,\]
where $\norm{\cdot}_p$ denotes the entrywise $\ell_p$ norm, that is, $\norm{\X}_p = \paren{\sum_{i,j} |\X_{i,j}|^p }^{1/p}$.

In our setting, each update to the data stream consists of a row vector $\a_i\in\mathbb{R}^d$ and an element $b_i$, i.e., the underlying matrix $\A$ represented by a sliding window of size $W$ consists of the rows
\begin{align*}
\A=
\begin{bmatrix}
\a_{t-W+1} & 
\a_{t-W+2} & 
\cdots &
\a_{t}
\end{bmatrix}^\top \\
 \b=
\begin{bmatrix}
b_{t-W+1} & 
b_{t-W+2} & 
\cdots &
b_{t}
\end{bmatrix}^\top.
\end{align*}

We next show that $\ell_p$ regression is not smooth as per~\defref{def:smooth} for any reasonable parameters $(\alpha,\beta)$ (see~\lemref{regression}).
\newcommand{\lemregression}
{
$\ell_p$ regression is not smooth as per~\defref{def:smooth} for any reasonable parameters $(\alpha,\beta)$  that gives us a constant factor approximation.
}

\begin{lemma}
\label{lem:regression}
\lemregression
\end{lemma}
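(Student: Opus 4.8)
The plan is to exhibit, for \emph{every} pair of target parameters $\alpha\in(0,1)$ and $\beta\in(0,\alpha]$, an explicit instance of three substreams $A,B,C$ with $B\subseteq_s A$ and $C$ adjacent to $A$ and $B$, for which the premise of the smoothness property in \defref{def:smooth} holds while its conclusion is violated by an arbitrarily large factor. It suffices to work with $d=1$, i.e.\ the vector problem $\min_{\x}\norm{\A\x-\b}_p$, since generalized $\ell_p$ regression contains this as a special case; I also assume $p\ge 1$. Recall that for regression the quantity being tracked is $f=\min_{\x}\norm{\A\x-\b}_p$, and that here each stream update is a pair $(\a_i,b_i)$ contributing row $\a_i$ to $\A$ and entry $b_i$ to $\b$.

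First I would set up the instance. Work in $\mathbb{R}^n$ with $n\ge 2$, and let $\e_1,\e_2$ be the first two elementary row vectors. Let the stream defining $A$ be, in order, the single prefix update $(\e_2,0)$ followed by the two updates $(\e_1,1)$ and $(\e_1,-1)$; let $B$ be the proper suffix consisting of just $(\e_1,1),(\e_1,-1)$; and let $C$ be the single update $(\e_2,T)$, where $T$ is a large parameter to be fixed at the end.

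Next I would compute the four relevant optimal costs, using separability of the objective across the coordinates $x_1$ and $x_2$. For $B$: $\min_{x_1}\big(|x_1-1|^p+|x_1+1|^p\big)=2$, attained at $x_1=0$. For $A$: the extra constraint $(\e_2,0)$ is met for free by $x_2=0$, so $f(A)=2=f(B)$; in particular the premise $(1-\beta)f(A)\le f(B)$ of the smoothness property holds for \emph{every} $\beta\ge 0$. For $B\cup C$: the two constraints on $x_1$ still force $x_1=0$ with cost $2$, and $(\e_2,T)$ is satisfied exactly by $x_2=T$, so $f(B\cup C)=2$. For $A\cup C$: again $x_1=0$, but now $x_2$ must minimize $|x_2|^p+|x_2-T|^p$, whose minimizer is $x_2=T/2$ with value $2(T/2)^p$; hence $f(A\cup C)=2+2(T/2)^p$.

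Finally I would conclude. Since $f(B\cup C)=2$ while $f(A\cup C)=2+2(T/2)^p$, the ratio $f(B\cup C)/f(A\cup C)=\big(1+(T/2)^p\big)^{-1}$ tends to $0$ as $T\to\infty$; choosing any $T>2\,(\alpha/(1-\alpha))^{1/p}$ (a fixed constant once $\alpha$ is fixed, so polynomial boundedness is unaffected, and monotonicity of $f$ under appending rows holds in any case) forces $(1-\alpha)f(A\cup C)>f(B\cup C)$, violating the conclusion $(1-\alpha)f(A\cup C)\le f(B\cup C)$. Thus $\ell_p$ regression is not $(\alpha,\beta)$-smooth for any $\alpha\in(0,1)$, $\beta\in(0,\alpha]$, which is the claim. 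There is no real obstacle in this argument; the only point that merits care is checking that the stated minimizers are genuinely global, which is immediate from convexity of $t\mapsto|t|^p$ for $p\ge1$ together with the coordinatewise separability of the objective, so the proof is entirely elementary and reduces to the bookkeeping of the four optimization values above.
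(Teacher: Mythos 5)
Your proof is correct and takes a genuinely different, cleaner route than the paper's. The structural distinction worth noting: you place two rows $(\e_1,1)$ and $(\e_1,-1)$ in $B$ that share the same regressor direction with conflicting targets, so $B$ already carries an irreducible residual cost of $2^{1/p}$, and the extra prefix row $(\e_2,0)$ in $A\setminus B$ is satisfied exactly by the same minimizer; hence $f(A)=f(B)$ and the smoothness premise holds with equality for \emph{all} $\beta$, with nothing to check. Then appending $C=(\e_2,T)$ inflates $f(A\cup C)$ (a new conflict with $(\e_2,0)$) while leaving $f(B\cup C)$ exactly $2^{1/p}$, so the ratio $f(B\cup C)/f(A\cup C)$ tends to $0$ and the conclusion of smoothness fails for every $\alpha\in(0,1)$. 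The paper instead works in $\mathbb{R}^5$, uses rows hitting \emph{disjoint} coordinates with tuned magnitudes, and asserts values $\mathcal{Z}_1=1+\alpha$, $\mathcal{Z}_2=1$, $\mathcal{Z}_3>100$, $\mathcal{Z}_4=\sqrt[p]{2^p+1}$; but because each stated row acts on a distinct coordinate, the corresponding linear systems are all consistent and the optimal residuals of the unconstrained $\ell_p$ regression problems would be zero, so the displayed values do not follow from the objective as defined (and the indexing of $\A_3,\A_4$ is also garbled). Your ``conflicting rows in the same direction'' device sidesteps this entirely and makes every optimum explicitly computable, which is a real improvement in rigor. One small quantitative slip: your threshold $T>2\left(\alpha/(1-\alpha)\right)^{1/p}$ is exact only for $p=1$; the condition $(1-\alpha)f(A\cup C)>f(B\cup C)$, after raising to the $p$-th power, gives $(T/2)^p>(1-\alpha)^{-p}-1$, i.e.\ $T>2\left((1-\alpha)^{-p}-1\right)^{1/p}$, which is strictly larger for $p>1$. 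This is immaterial since you already observe the ratio tends to $0$, but the stated constant should be corrected.
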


\begin{proof}
Let $0<\alpha<1$. 
Let $A$ be the data stream whose first element is the row $\{\a_1=\{100,0,0,0,0\},b_1=100\}$, second element is the row $\{\a_2=\{0,\alpha,0,0,0\},b_2=0\}$, followed by data stream $B$, which is a suffix of $A$. 
Then suppose $B$ consists of the elements $\{\a_3=\{0,0,1,0,0\},b_3=1\}$, followed by $\{\a_4=\{0,0,0,1,0\},b_4=0\}$. 
Finally, let $C$ be the data stream consisting of the single element $\{\a_5=\{0,0,0,0,1000\},b_5=2000\}$ 
Then the corresponding matrices appear as below:
\[
\A=
\begin{bmatrix}
100 & 0 & 0 & 0 & 0\\
0 & \alpha & 0 &  0 & 0\\\hline
0 & 0 & 1 & 0 & 0\\
0 & 0 & 0 & 1 & 0 \\\hline
0 & 0 & 0 & 0 & 1000
\end{bmatrix},
\quad 
\mathbf b=
\begin{bmatrix}
100\\
0\\\hline
1\\
0\\\hline
2000
\end{bmatrix}
,
\]
where 
\begin{align*}
\A_1 := 
\begin{bmatrix}
100 & 0 & 0 & 0 & 0\\
0 & \alpha & 0 &  0 & 0
\end{bmatrix}, \qquad
\A_2 :=
\begin{bmatrix}
0 & 0 & 1 & 0 & 0\\
0 & 0 & 0 & 1 & 0
\end{bmatrix}, \quad
\text{ and }\quad
\A_3 := \begin{bmatrix}
0 & 0 & 0 & 0 & 1000
\end{bmatrix}
\end{align*}
and $\b_1$, $\b_2$, and $\b_3$ are the corresponding rows of the vector $b$.
Let the matrix $\A_1$ and vector $\b_1$ represent data stream $S_A$, $\A_2$ and $\b_2$ represent data stream $S_B$, $\A_3$ and $\b_3$ represent $A\cup C$, and $\A_4$ and $\b_4$ represent $S_B\cup S_C$. 
Finally, let $\mathcal{Z}_i=\argmin_{\x\in\mathbb{R}^n}\norm{\A_i\x-\b_i}_p$ for $1\le i\le 4$. 
Then one can verify that $\mathcal{Z}_1=1+\alpha$ and $\mathcal{Z}_2=1$. 
On the other hand, $\mathcal{Z}_3>100$ but $\mathcal{Z}_4=\sqrt[p]{2^p+1}$. 
Thus, $\ell_p$ regression is not smooth as per~\defref{def:smooth} for any reasonable parameters $(\alpha,\beta)$  that gives us a constant factor approximation.
\end{proof}

Recall that the vector induced $\ell_p$ matrix norm is defined as 
\[\norm{\A}_{p}=\underset{\norm{\x}_p=1}{\max}\norm{\A\x}_p,\]
and for $p=2$, it is the same as the spectral norm (Schatten-$\infty$ norm). 
We now show that vector induced norms are not smooth for all values of $p$. 
\begin{theorem}
The vector induced matrix norm $\norm{\cdot}_p$ is not a smooth function as per~\defref{def:smooth} for a meaningful parameters $(\alpha,\beta)$ for constant factor approximation. 
\end{theorem}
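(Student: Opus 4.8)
The plan is to follow the explicit-counterexample strategy already used in \lemref{ex:lra:not:smooth} and \lemref{regression}: I would exhibit substreams $S_A$, $S_B$, $S_C$, with $S_B$ a suffix of $S_A$ and $S_C$ adjacent to both, for which the witnessing matrices satisfy $\norm{\A}_p = \norm{\B}_p$ \emph{exactly} while $\norm{\B\circ\C}_p$ is a fixed constant factor smaller than $\norm{\A\circ\C}_p$. The point of forcing equality $\norm{\A}_p = \norm{\B}_p$ is that the smoothness hypothesis $(1-\beta)\norm{\A}_p \le \norm{\B}_p$ then holds for \emph{every} $\beta \in (0,1)$, so one triple simultaneously rules out smoothness for an entire range of parameters.

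Concretely, letting $\e_i$ denote the $i\th$ elementary row vector, I would take $S_A$ to be the two-element stream $\e_2, \e_1$, its suffix $S_B$ to be the one-element stream $\e_1$, and $S_C$ to be the one-element stream $\e_2$, so that $\A = \e_2 \circ \e_1$, $\B = \e_1$, $\A\circ\C = \e_2 \circ \e_1 \circ \e_2$, and $\B\circ\C = \e_1 \circ \e_2$. The only computation needed is the elementary observation that if $\M$ has rows drawn from the standard basis, with $c_j$ of them equal to $\e_j$, then $\norm{\M\x}_p^p = \sum_j c_j |x_j|^p$, which over the $\ell_p$ unit ball is maximized by concentrating all mass on the most frequent coordinate; hence $\norm{\M}_p = (\max_j c_j)^{1/p}$. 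This yields $\norm{\A}_p = \norm{\B}_p = \norm{\B\circ\C}_p = 1$ and $\norm{\A\circ\C}_p = 2^{1/p}$.

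Smoothness as in \defref{def:smooth} would then demand $(1-\alpha)\norm{\A\circ\C}_p \le \norm{\B\circ\C}_p$, i.e. $(1-\alpha)2^{1/p} \le 1$, i.e. $\alpha \ge 1 - 2^{-1/p}$; since $1 - 2^{-1/p} > 0$ for every finite $p$, I would conclude that $\norm{\cdot}_p$ is not $(\alpha,\beta)$-smooth for any $\alpha \in (0, 1 - 2^{-1/p})$ and any admissible $\beta \in (0,\alpha]$, so the smooth histogram framework cannot give a $(1+\eps)$-approximation of the induced norm; taking $p = 2$ recovers the statement for the spectral (Schatten-$\infty$) norm, which is then not smooth for $\alpha < 1 - \tfrac1{\sqrt2}$. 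I do not anticipate a real obstacle here -- once the stream above is written down the verification is a single line -- so the only care points are (i) checking the equality $\norm{\A}_p = \norm{\B}_p$, which is what makes the counterexample uniform in $\beta$, and (ii), if one wishes to argue the threshold $1 - 2^{-1/p}$ is essentially sharp, observing that $\norm{\A\circ\C}_p^p \le \norm{\A}_p^p + \norm{\C}_p^p$ and $\norm{\B\circ\C}_p^p \ge \max(\norm{\B}_p^p, \norm{\C}_p^p)$ together force $\norm{\B\circ\C}_p \ge 2^{-1/p}\norm{\A\circ\C}_p$ when $\norm{\A}_p = \norm{\B}_p$; this last point is not needed for the theorem itself.
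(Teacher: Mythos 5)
Your proof is correct and follows essentially the same strategy as the paper's: construct a stream of standard basis vectors $\e_i$ so that a suffix $B$ has the \emph{same} induced norm as $A$ (making the smoothness hypothesis hold for every $\beta$), while appending a single further basis vector $C$ drives the ratio $\norm{\B\circ\C}_p/\norm{\A\circ\C}_p$ strictly below $1$.

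Two small remarks on the comparison. First, the paper evaluates the norm on the Gram matrices $\A^\top\A$, $\B^\top\B$, etc., rather than on the row matrices themselves as your proposal does; your choice $f(\A)=\norm{\A}_p$ is the more direct reading of the theorem statement, and your one-line characterization $\norm{\M}_p=(\max_j c_j)^{1/p}$ for basis-row matrices makes the verification especially transparent. Second, your example in fact sidesteps an arithmetic slip in the paper's displayed matrices (with $C=\{\e_1\}$ the Gram matrix of $A\cup C$ should be $\mathrm{diag}(2,1)$, not $\mathrm{diag}(4,1)$; the argument still goes through with $\alpha\ge 1/2$ in place of $3/4$). Your threshold $\alpha<1-2^{-1/p}$, uniform in $\beta$ and valid for all finite $p$, is a clean way to state the failure of smoothness, and the observation that forcing $\norm{\A}_p=\norm{\B}_p$ is what makes the counterexample kill the entire $\beta$-range is exactly the right care point.
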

\begin{proof}
We give explicit construction of streams to show that there exists a stream for which the vector induced matrix norm is not smooth as per~\defref{def:smooth}.  
Let $\e_1, \cdots, \e_4$ be the standard basis of $\R^n$. 
Let $A$ be the data stream consisting of $\{\e_1\}$, followed by the suffix $B$, the data stream consisting of $\{\e_2\}$. 
Let $C$ be the data stream consisting of $\{\e_1\}$. 
Define $\A$ to be the matrix representing $A$ and $\B$ to be the matrix representing $B$. 
Let $\mathbf{R}$ be the matrix representing $A\cup C$ and $\S$ be the matrix representing $B\cup C$.
Thus, 
\[\W:=\A^\top\A=
\begin{pmatrix}
1 & 0\\
0 & 1
\end{pmatrix},
\X:=\B^\top\B=
\begin{pmatrix}
0 & 0\\
0 & 1
\end{pmatrix}\]
\[
\Y:=\mathbf{R}^\top\mathbf{R}=
\begin{pmatrix}
4 & 0\\
0 & 1
\end{pmatrix},
\Z:=\S^\top\S=
\begin{pmatrix}
1 & 0\\
0 & 1
\end{pmatrix}
\]
Observe that
\[\norm{\W}_p=\underset{\norm{\x}_p=1}{\max}\norm{\W\x}_p 
\]
Thus, $\norm{\W}_p=1$. 
Similarly, $\norm{\X}_p=1$ so that 
$(1-\alpha)\norm{\W}_p\le \norm{\X}_p$ for any $0<\alpha<1$. 
On the other hand, $\norm{\Y}_p=4$ and $\norm{\Z}_p=1$. 
Hence, the vector induced matrix norm $\norm{\cdot}_p$ is not a smooth function as per~\defref{def:smooth}.
\end{proof}

\subsection{A Generalization of Smooth Histograms for Spectral Approximation}
\applab{app:histogram}
In this section, we give a deterministic sliding window algorithm for spectral approximation that is neither space nor time optimal, but provides a natural generalization of the smooth histogram framework of~\cite{BravermanO07} to spectral approximation in the sliding window model. 
\begin{algorithm}[!htb]
\caption{Projection-cost preservation for low-rank matrices in the sliding window model}
\alglab{alg:histogram}
\begin{algorithmic}[1]
\Require{A stream of rows $\r_1,\ldots,\r_n\in\mathbb{R}^d$, window size $W$, and an accuracy parameter $\eps>0$}
\Ensure{Projection-cost preservation for low-rank matrices in the sliding window model.}
\State{$\M_0\gets 0^{d\times d}$}
\For{each row $\r_t$}
\State{Suppose $\M_0,\M_1,\ldots,\M_s$ are defined}
\State{$\M_{s+1}\gets\r_t^\top\r_t$, $\t_{s+1}=t$}
\Comment{Keep timestamp for $\M_{s+1}$}
\For{$i=s$ to $i=1$}
\Comment{Update sketches with $\r_t$}
\State{$\M_i\gets\M_i+\r_t^\top\r_t$}
\EndFor
\For{$i=s$ to $i=2$}
\If{$\M_{i-1}^\top\M_{i-1}\preceq(1+\eps)\M_{i+1}^\top\M_{i+1}$}
\State{Delete $\M_i$ and $t_i$ and relabel indices}
\EndIf
\EndFor
\If{$t_2\le t-W+1$}
\Comment{$\M_2$ has expired}
\State{Delete $\M_1$ and $t_1$ and relabel indices}
\EndIf
\EndFor
\State{\Return $\M_1$}
\end{algorithmic}
\end{algorithm}

\begin{theorem}
\thmlab{thm:sw:histogram}
Let $\r_1,\ldots,\r_n\in\mathbb{R}^{d}$ be a stream of rows and $\kappa$ be the condition number of the stream. 
Let $W>0$ be a window size parameter and $\A=\r_{n-W+1}\circ\ldots\circ\r_n$ be the matrix consisting of the $W$ most recent rows. 
Given a parameter $\eps>0$, there exists an algorithm that outputs a matrix $\M$ such that $\A^\top\A \preceq \M \preceq (1+\eps)\A^\top\A$ and uses $\O{\frac{d^3}{\eps}\log\kappa}$ words of space.
\end{theorem}
\begin{proof}
Consider \algref{alg:histogram}. 
Note that either $t_1=n-W+1$ or $t_1<n-W+1$, since if $t_1>n-W+1$, then there would be another matrix with an earlier timestamp.  
In the first case, we have $\M_1=\A^\top\A$ since the outer product of each row $\r_i$ with $i\ge n-W+1$ is added to $\M_1$ at time $i$. 

In the second case, we have that $t_2>n-W+1$ or else $\M_1$ would have been deleted. 
That means at some point $t$ we must have had $\M^{(t)}_1\preceq(1+\eps)\M^{(t)}_2$ to delete all matrices with timestamps between $t_1$ and $t_2$, where $\M^{(t)}_j:=\sum_{i=t_j}^t\r_i^\top\r_i$ is the value at time $t$ of the sketch that is $\M_j$ at the end of the stream.  
Hence, 
\[\M_1=\M^{(t)}_1+\sum_{i=t}^n\r_i^\top\r_i\preceq(1+\eps)\M^{(t)}_2+\sum_{i=t}^n\r_i^\top\r_i\preceq(1+\eps)\M_2.\]
Since $\M_1\succeq\A^\top\A\succeq\M_2$, it follows that $\A^\top\A \preceq \M \preceq (1+\eps)\A^\top\A$. 

Each matrix has dimension $d\times d$. 
Moreover, $\M_i\succeq(1+\eps)\M_{i+2}$ for each index $i$, so that some singular value has increased by a factor of $(1+\eps)$. 
Since there are at most $d$ singular values and they can increase by a factor of at most $\kappa$, then there are at most $\frac{d}{\eps}\log\kappa$ such matrices, and the space complexity follows. 
\end{proof}
\noindent
We now give the proof of \thmref{thm:pcp:low}. 
\newline\noindent
\begin{remindertheorem}{\thmref{thm:pcp:low}}\thmpcplow
\end{remindertheorem}
\begin{proof}
Consider \algref{alg:pcp:low}. 
Similar to \algref{alg:histogram}, $\r_t^\top\r_t$ is added to each sketch and sketch $i$ is deleted if $\M_{i-1}\preceq(1+\eps)\M_{i+1}$. 
Hence we have $\A^\top\A \preceq \M_1\preceq (1+\eps)\A^\top\A$ for the output $\M_1$ of \algref{alg:pcp:low}. 
Moreover, ome singular value has increased by a factor of $(1+\eps)$ between $\M_{i-1}$ and $\M_{i+1}$, by similar reasoning to \thmref{thm:sw:histogram}. 
Since there are at most $2k$ singular values and they can increase by a factor of at most $\kappa$, then there are at most $\frac{2k}{\eps}\log\kappa$ such matrices. 
Thus $\R_1$ contains exactly the row span of $\A$ and since $\R_1\supseteq\R_2\supseteq\ldots$, it suffices to store $2k$ rows of $\A$ to track all the matrices $\R_i$. 
On the other hand, $\U_i\in\mathbb{R}^{2k\times 2k}$, but there are $\O{\frac{k}{\eps}\log\kappa}$ such matrices, so the total additional working space to maintain the matrices is $\O{\frac{k^3}{\eps}\log\kappa}$. 

Finally, note that since $\A$ has integer entries bounded in magnitude by $n^{\O{1}}$, then the characteristic polynomial has coefficients that are bounded in magnitude by $n^{\O{k}}$. 
Moreover, $n^{\O{1}}\ge\norm{\A}_F\ge\sigma_{\max}(\A)$ so then $\sigma_{\min}(\A)\ge\frac{1}{n^{\O{k}}}$. 
Thus, $\log\kappa=\O{k\log n}$ and the space complexity follows. 
\end{proof}
\section{Extended Proofs}
\subsection{Online Whack-a-Mole}
\applab{app:whackamole}
For a given matrix $\A$ prepended by a regularization matrix $\B$, which is some copies of the identity matrix, the goal is to give weights to the rows of $\A$ so that the online leverage score of each row of $\A$ is uniformly bounded by some $\frac{\gamma T}{n}$, where $\gamma>1$ and $T$ is some upper bound on the sum of the online leverage scores of a matrix. 
We give the online whack-a-mole algorithm in \algref{alg:whack:mole}. 
\begin{algorithm}[!htb]
\caption{Online Whack-a-Mole algorithm for uniformly bounded online leverage scores}
\alglab{alg:whack:mole}
\begin{algorithmic}[1]
\Require{Matrix $\A=\a_1\circ\ldots\circ\a_n\in\mathbb{R}^{n\times d}$, regularization parameter $\lambda>0$, parameter $\gamma>1$}
\Ensure{Reweighting matrix $\W$}
\State{$\B_0\gets\lambda\I_d$, $\B=\underbrace{\B_0\circ\ldots\circ\B_0}_{n\text{ times}}$}
\State{Let $T$ be the sum of the online leverage scores of $\B\circ\A$.}
\State{$\W\gets\I_n$}
\For{$i=1$ to $i=n$}
\State{Let $\tau_{nd+i}^{\OL}(\B\circ(\W\A))$ denote the online leverage score of row $nd+i$ of $\B\circ(\W\A)$.}
\If{$\tau_{nd+i}^{\OL}(\B\circ(\W\A))>\frac{\gamma T}{n}$}
\State{Decrease $\W_{i,i}$ so that $\tau_{nd+i}^{\OL}(\B\circ(\W\A))=\frac{\gamma T}{n}$}
\EndIf
\EndFor
\State{\Return $\W$}
\end{algorithmic}
\end{algorithm}
The intuition by the online whack-a-mole algorithm is simple. 
We start with a weight matrix $\W=\I_n$ and iterate through the rows of $\A$. 
If there exists a row $\a_i$ of $\A$ whose online leverage score with respect to $\B\circ(\W\A)$ is larger than $\frac{\gamma T}{n}$, then $\W_{i,i}$ is decreased until $\a_i$ has online leverage score $\frac{\gamma T}{n}$. 
Because (online) leverage scores are lower semi-continuous~\cite{CohenLMMPS15}, such a weight $\W_{i,i}$ must exist. 
Moreover, lowering the weight of a row $\a_j$ does not affect the online leverage score of a row $\a_i$ for $i<j$. 
Thus, it remains to show that the number of rows whose weights are lowered is at most $\frac{n}{\gamma}$. 
\noindent
We now prove \lemref{lem:online:reweighting}. 
\newline\noindent
\begin{remindertheorem}{\lemref{lem:online:reweighting}}
\cite{CohenLMMPS15}
\whackamole
\end{remindertheorem}
\begin{proof}
By the above argument, it suffices to show that the number of entries $i$ such that $\W_{i,i}\neq 1$ is at most $\frac{n}{\gamma}$. 
Let $\X=\B\circ\A$ and for each $i\in[n]$, let $\zeta^{\OL}_i(\X)$ denote the online leverage score of row $\a_i$ with respect to matrix $\X$. 
Note that decreasing the weight of a row $\a_i$ increases the online leverage score of any subsequent rows $\a_j$ with $j>i$. 
Thus we have $\sum_{i\in S}\frac{\gamma T}{n}\le\sum_{i\in S}\zeta_i^{\OL}(\X)\le T$, so that the number of indices in $S$ is at most $\frac{n}{\gamma}$. 
\end{proof}

\subsection{Derandomization of $\onlinebss$}
\applab{sec:onlinebss}
In this section, we formally derandomize the $\onlinebss$ algorithm of \cite{CohenMP16} with a $\O{\log n}$ overhead. 
The $\onlinebss$ algorithm, described in \algref{alg:online:bss}, takes parameters $\eps\in(0,1)$ and $\lambda\ge 0$, as well as the rows of $\A=\a_1\circ\ldots\circ\a_n\in\mathbb{R}^{n\times d}$ in a stream. 
The algorithm uses the upper barrier and lower barrier method of \cite{BSS12} to design a probability distribution for each row that depends on the previously sampled rows but will \emph{always} output a matrix $\M$ such that $(1-\eps)(\A^\top\A+\lambda\I)\preceq\M^\top\M+\lambda\I\preceq(1+\eps)(\A^\top\A+\lambda\I)$. 
Rather than show the correctness over the randomness of the algorithm, \cite{CohenMP16} must instead bound the expected number of rows sampled by $\onlinebss$ over the randomness of the algorithm. 

\begin{algorithm}[!htb]
\caption{$\onlinebss$}
\alglab{alg:online:bss}
\begin{algorithmic}[1]
\Require{Stream of rows $\a_1,\ldots,\a_n\in\mathbb{R}^{1\times d}$, accuracy $\eps\in(0,1)$, regularization $\lambda\ge0$.}
\Ensure{Additive-multiplicative spectral approximation to $\A:=\a_1\circ\ldots\circ\a_n$.}
\State{$c_U\gets\frac{2}{\eps}+1$, $c_L=\gets\frac{2}{\eps}-1$}
\State{$\M\gets\emptyset$, $\B^U\gets\lambda\I$, $\B^L\gets-\lambda\I$}
\For{each row $\a_t$}
\State{$\X^U\gets\B^U-\M^\top\M$, $\X^L\gets\M^\top\M-\B^L$}
\State{$p_t\gets\min\left(1,c_U\a_i(\X^U)^{-1}\a_i^\top+c_L\a_i(\X^L)^{-1}\a_i^\top\right)$}
\State{$\B^U\gets\B^U+(1+\eps)\a_i^\top\a_i$, $\B^L\gets\B^U+(1-\eps)\a_i^\top\a_i$}
\State{With probability $p_t$, $\M\gets\M\circ\frac{\a_t}{\sqrt{p_t}}$}
\EndFor
\State{\Return $\M$.}
\end{algorithmic}
\end{algorithm}

\cite{CohenMP16} start with the upper and lower barrier matrices $\B^U=\lambda\I$ and $\B^L=-\lambda\I$, in the Loewner order sense.  
Upon the arrival of each row $\a_i$, $\B^U$ and $\B^L$ are incremented by $(1+\eps)\a_i^\top\a_i$ and $(1-\eps)\a_i^\top\a_i$ respectively, as to remain upper and lower barriers for $\M^\top\M+\lambda\I$ as long as it remains a good approximation to $\A^\top\A+\lambda\I$ as rows of $\A$ arrive. 
These upper and lower barrier matrices are then used in conjunction with $\M$ to compute the sampling probability $p_i$ of row $\a_i$. 
Namely, $p_i=\min\left(1,c_U\a_i(\X^U)^{-1}\a_i^\top+c_L\a_i(\X^L)^{-1}\a_i^\top\right)$, where $\X^U=\B^U-\M^\top\M$ and $\X^L=\M^\top\M-\B^L$ describe how far $\M^\top\M+\lambda\I$ is from the barriers $\X^U$ and $\X^L$, respectively. 
As $\M^\top\M+\lambda\I$ approaches one of the barriers, $\X^U$ or $\X^L$ approaches zero, so that the following rows have high sampling probability, which forces $\M^\top\M+\lambda\I$ to remain a good approximation to $\A^\top\A+\lambda\I$. 

To analyze the expected number of rows sampled, \cite{CohenMP16} defines 
\begin{align*}
\C^U_{i,j}&=\lambda\I+\frac{\eps}{2}\A_i^\top\A_i+\left(1+\frac{\eps}{2}\right)\A_j^\top\A_j\\
\C^L_{i,j}&=-\lambda\I-\frac{\eps}{2}\A_i^\top\A_i+\left(1-\frac{\eps}{2}\right)\A_j^\top\A_j,
\end{align*}
where $\A_i=\a_1\circ\ldots\circ\a_i$ for each $i\in[n]$. 
Then by definition, $\C^U_{i,i}=\B_i^U$ and $\C^L_{i,i}=\B_i^L$, where $\B_i^U$ and $\B_i^L$ are the matrices $\B^U$ and $\B^L$ at time $i$. 
Finally, denoting $\M$ at time $i$ by $\M_i$, the matrices $\Y^U_{i,j}$ and $\Y^L_{i,j}$ are defined by:
\begin{align*}
\Y^U_{i,j}&=\C^U_{i,j}-\M_j^\top\M_j\\
\Y^L_{i,j}&=\M_j^\top\M_j-\C^L_{i,j},
\end{align*}
so that $\Y^U_{i,i}=\X^U_i$ and $\Y^L_{i,i}=\X^L_i$ are the matrices that represent the ``distance'' of $\M^\top\M+\lambda\I$ from the upper and lower barriers, to determine the sampling probability $p_i$. 
It suffices for \cite{CohenMP16} to bound $\EEx{\a_i^\top(\Y^U_{i-1,j+1})^{-1}\a_i}\le\EEx{\a_i^\top(\Y^U_{i-1,j})^{-1}\a_i}$ and $\EEx{\a_i^\top(\Y^L_{i-1,j+1})^{-1}\a_i}\le\EEx{\a_i^\top(\Y^L_{i-1,j})^{-1}\a_i}$, since $\a_i^\top(\Y^U_{i-1,0})^{-1}\a_i$ and $\a_i^\top(\Y^L_{i-1,0})^{-1}\a_i$ are proportional to the online ridge leverage score $\tau^{\OL}_i$ of $\a_i$. 
Specifically, they show that
\[\a_i^\top(\Y^U_{i-1,0})^{-1}\a_i=\a_i^\top(\Y^L_{i-1,0})^{-1}\a_i\le\frac{2}{\eps}\tau^{\OL}_i.\]
Thus by \lemref{lem:online:space}, the expected number of rows sampled is at most
\begin{align*}
\sum_{i=1}^n c_U\EEx{\a_i^\top(\X^U_{i-1})^{-1}\a_i}+c_L\EEx{\a_i^\top(\X^L_{i-1})^{-1}\a_i}&=c_U\EEx{\a_i^\top(\Y^U_{i-1,i-1})^{-1}\a_i}+c_L\EEx{\a_i^\top(\Y^L_{i-1,i-1})^{-1}\a_i}\\
&\le\sum_{i=1}^n c_U\EEx{\a_i^\top(\Y^U_{i-1,0})^{-1}\a_i}+c_L\EEx{\a_i^\top(\Y^L_{i-1,0})^{-1}\a_i}\\
&\le\frac{2}{\eps}(c_U+c_L)\sum_{i=1}^n\tau^{\OL}_i=\O{\frac{d}{\eps^2}\log\kappa}.
\end{align*}

We make the following modifications to $\onlinebss$. 
We first round the sampling probability for each row up to a power of $2$ between $\frac{1}{n}$ and $1$, which only incurs $\O{1}$ additional rows sampled in expectation with those with sampling probability less than $\frac{1}{n}$ and multiplicative $\O{1}$ for those with samping probability at least $\frac{1}{n}$. 
We then pick a random threshold $t\in [0,1]$ and then sample row $\a_i$ if $p_i>t$. 
Since there are only $\log n$ distinct sampling probabilities, we can derandomize this process by trying all $\log n$ values of $t$. 
It remains to be shown that the expectation of $p_i$ is still at most the online (ridge) leverage score $\tau^{\OL}_i$. 

However, the bounds $\EEx{\a_i^\top(\Y^U_{i-1,j+1})^{-1}\a_i}\le\EEx{\a_i^\top(\Y^U_{i-1,j})^{-1}\a_i}$ and $\EEx{\a_i^\top(\Y^L_{i-1,j+1})^{-1}\a_i}\le\EEx{\a_i^\top(\Y^L_{i-1,j})^{-1}\a_i}$ hold even when conditioned on the first $j$ steps of the algorithm. 
The analysis of \cite{CohenMP16} only requires that row $\a_j$ is sampled with probability $p_j$ at step $j+1$ for whatever value of $p_j$ results from the randomness of the algorithm. 
Thus by the same reasoning, it still follows that the expected number of rows sampled is at most
\begin{align*}
\sum_{i=1}^n c_U\EEx{\a_i^\top(\X^U_{i-1})^{-1}\a_i}+c_L\EEx{\a_i^\top(\X^L_{i-1})^{-1}\a_i}&=c_U\EEx{\a_i^\top(\Y^U_{i-1,i-1})^{-1}\a_i}+c_L\EEx{\a_i^\top(\Y^L_{i-1,i-1})^{-1}\a_i}\\
&\le\sum_{i=1}^n c_U\EEx{\a_i^\top(\Y^U_{i-1,0})^{-1}\a_i}+c_L\EEx{\a_i^\top(\Y^L_{i-1,0})^{-1}\a_i}\\
&\le\frac{2}{\eps}(c_U+c_L)\sum_{i=1}^n\tau^{\OL}_i,
\end{align*}
which is at most $\O{\frac{d}{\eps^2}\log\kappa}$ by \lemref{lem:online:space}.
\end{document}